\newcommand{\abs}[1]{{\left|{#1}\right|}} 
\newcommand{\inner}[2]{{\langle {#1}\vert {#2} \rangle}} 
\newcommand{\ket}[1]{\vert{#1}\rangle} 
\newcommand{\bra}[1]{\langle{#1}\vert} 
\newcommand{\secref}[1]{Sec.~\ref{#1}}
\newcommand{\eqnref}[1]{(\ref{#1})}
\newcommand{\figref}[1]{Fig.~\ref{#1}}
\newcommand{\appref}[1]{Appendix~\ref{#1}}
\newcommand{\footref}[1]{Footnote~\ref{#1}}
\newcommand{\thmref}[1]{Theorem~\ref{#1}}
\theoremstyle{definition}
\newtheorem{theorem}{Theorem}
\begin{document}
\count\footins = 1000 

\title
{An elementary proof and detailed investigation of the bulk-boundary correspondence in the generic two-band model of Chern insulators}\thanks{This paper is a follow-up to \cite{chen2017elementary}, various material of which is repeated here.}


\author{Bo-Hung Chen}
\email{kenny81778189@gmail.com}
\affiliation{Department of Physics and Center for Theoretical Physics, National Taiwan University, Taipei 10617, Taiwan}

\author{Dah-Wei Chiou}
\email{dwchiou@gmail.com}
\affiliation{Department of Physics, National Sun Yat-sen University, Kaohsiung 80424, Taiwan}
\affiliation{Center for Condensed Matter Sciences, National Taiwan University, Taipei 10617, Taiwan}


\begin{abstract}
With the inclusion of arbitrary long-range hopping and (pseudo)spin-orbit coupling amplitudes, we formulate a generic model that can describe \emph{any} two-dimensional two-band bulk insulators, thus providing a simple framework to investigate \emph{arbitrary} adiabatic deformations upon the systems of any \emph{arbitrary} Chern numbers. Without appealing to advanced techniques beyond the standard methods of solving linear difference equations and applying Cauchy's integral formula, we obtain a mathematically elementary yet rigorous proof of the bulk-boundary correspondence on a strip, which is robust against any adiabatic deformations upon the bulk Hamiltonian and any uniform edge perturbation along the edges. The elementary approach not only is more transparent about the underlying physics but also reveals various intriguing nontopological features of Chern insulators that have remained unnoticed or unclear so far. Particularly, if a certain condition is satisfied (as in most renowned models), the loci of edge bands in the energy spectrum and their (pseudo)spin polarizations can be largely inferred from the bulk Hamiltonian alone without invoking any numerical computation for the energy spectrum of a strip.
\end{abstract}


\maketitle


\section{Introduction}\label{sec:introduction}
Topological states of condensed matter have been intensively investigated and become a rapidly developing
area of research in recent years. One of the central concepts of topological matter is the \emph{bulk-boundary correspondence}, which posits that, for a variety of systems, the edge modes on the boundary are characterized by topological invariants of the physics in the bulk --- i.e., the topological phases of matter are manifested in terms of robust edge modes.
The first example of this correspondence is the integer quantum Hall effect, discovered in 1980, of which the explanation was proposed by Laughlin in 1981 \cite{laughlin1981quantized}.
Since then, the bulk-boundary correspondence has been revealed in numerous experiments and numerical simulations (see \cite{hasan2010colloquium,qi2011topological,asboth2016short} for reviews). Meanwhile, theoretical accounts of topological states have been developed in different approaches with various degrees of rigor (see e.g.\ \cite{hatsugai1993chern,hatsugai2009bulk,essin2011bulk,graf2013bulk,rudner2013anomalous,cano2014bulk} and more references in \cite{hasan2010colloquium,qi2011topological,asboth2016short}) and led to a hierarchical classification of topological condensed matter systems \cite{kitaev2009periodic,ryu2010topological}.

A mathematically rigorous proof of the robustness of the bulk-boundary correspondence is generally challenging. For topological systems with band structures, such as Chern insulators and topological insulators,\footnote{Chern insulators are band insulators that exhibit nontrivial Chern numbers and break time-reversal symmetry, whereas topological insulators are topologically nontrivial band insulators that preserve time-reversal symmetry. In the literature, the term ``topological insulator'' is also occasionally used in the broader sense to refer to any topologically nontrivial band insulators (Chern insulators included), regardless of the time-reversal symmetry.} the main difficulty lies in the fact that the notions of edge modes and topological invariants are inherently anchored to two conflicting settings and hence cannot be retained simultaneously in a single setting. Rigorously speaking, only in the explicit presence of boundaries can one make sense of edge modes. On the other hand, the topological invariants of band insulators are defined upon the bulk Brillouin zone, which make sense only if the system is without explicit boundaries and thus respects the full lattice translational symmetry --- i.e., either the system is infinite in all dimensions or it is finite in some dimensions but imposed with the periodic (Born-von Karman) boundary condition.
Secondly, a rigorous proof has to consider \emph{arbitrary} topological invariants with \emph{arbitrary} adiabatic deformations,\footnote{The Hamiltonian of a bulk insulator is said to be \emph{adiabatically deformed}, if its bulk energy bands are continuously deformed while the bulk gap remains open and the essential symmetries (if any) remain respected.} but it is difficult to incorporate all of them into a single framework.

Various advanced approaches have been developed to provide firm mathematical foundation for understanding and proving the bulk-boundary correspondence. Among them, the approach of the $K$-theory is perhaps the most powerful formalism (see \cite{prodan2016bulk} for a review); it is powerful and broad in scope in the sense that it rigorously proves the bulk-boundary correspondence for a wild range of different topological condensed matter systems and provides a systematic scheme to classify them. Advanced approaches, however, heavily involve advanced mathematical technicalities and are often not very transparent about the underlying mechanism and various detailed features.
Contrary and complementary to advanced approaches, considerable effort has been devoted to understanding the bulk-boundary correspondence from a more elementary perspective (see \cite{pershoguba2012shockley,rhim2017bulk,rhim2018unified} for recent examples). Even if less broad or less powerful than the well-established approaches, an elementary approach can be very valuable, as it may still provide a new perspective of known physics and even reveal new physics.
In our previous work \cite{chen2017elementary}, by generalizing the Su-Schrieffer-Heeger (SSH) model \cite{su1979solitons} with arbitrary long-range hopping amplitudes, we provided a simple framework that takes into account any arbitrary adiabatic deformations upon the systems of any arbitrary winding numbers, and offered a mathematically rigorous proof of the bulk-boundary correspondence for the generalized SSH model without appealing to any advance techniques beyond the standard methods of solving linear difference equations and applying Cauchy's integral formula.

In this paper, extending the treatment of \cite{chen2017elementary} from the one-dimensional generalized SSH model to a two-dimensional case, we aim to give a rigorous yet elementary proof of the bulk-boundary correspondence in the generic two-band model of Chern insulators. With the inclusion of arbitrary long-range hopping and (pseudo)spin-orbit coupling amplitudes, the two-band model we construct is broadly generic to the extent that it can describe \emph{any} two-dimensional two-band bulk insulators. Many renowned models, such as the Rice-Mele model \cite{rice1982elementary}, the Haldane model \cite{haldane1988model}, and the Qi-Wu-Zhang model \cite{qi2006topological}, can be viewed as special cases of our generic model. The two-dimensional two-band model is much richer in structure than the one-dimensional generalized SSH model. Nevertheless, the techniques devised in \cite{chen2017elementary} can be carried over essentially under a dimension-reduction scheme that recasts the Chern number of the two-dimensional Brillouin zone into a sum of winding numbers of various one-dimensional loops in the Brillouin zone.

Our elementary approach not only elucidates how the topological nontriviality gives rise to edge modes but also uncovers various intriguing nontopological features that have remained unknown or unclear so far.
(i) As opposed to the bulk states on a strip,\footnote{See \footref{foot:tomography} for how the width of a strip affects the bulk states.} the wavefunctions and energies of the edge states are independent of the width of the strip as long as the width is large enough (so that the finite size effect is negligible).
(ii) If a certain condition (called the ``semi-special'' condition in this paper) is satisfied (as in most renowned models), the loci of edge bands (except for those induced or modified by edge perturbation) in the energy spectrum can be directly inferred from the bulk momentum-space Hamiltonian alone without invoking any full-fledged numerical computation for the energy spectrum of a strip. (iii) As a consequence, the condition for having degenerate edge bands is also found. (iv) We obtain a precise description and a clear understanding of the phenomenon of ``spin-momentum locking'' --- namely, in edge states of a strip, the (pseudo)spin is polarized to a unique direction associated with the edge mode momentum. However, contrary to what many assume, the notion of spin-momentum locking is not a topological feature in the strict sense (i.e., robust under arbitrary adiabatic deformations); it is robust only under deformations within the confines of the semi-special condition. (v) While it is well known that the bulk-boundary correspondence is robust against any edge perturbation that is uniformly imposed along the edges of a strip, we obtain a more detailed picture about how different kinds of edge perturbation deform and induce edge modes differently. (vi) Under certain circumstances (as in many renowned models), the energy spectrum (both bulk and edge modes included) of a strip exhibits a symmetric feature that the energy eigenvalues appears in pairs with opposite signs. We elaborate and compare two different symmetries giving this feature, which become identical in the absence of edge perturbation.

This paper is organized as follows. The generic two-dimensional two-band model is first formulated and elaborated in \secref{sec:two-band model}. Its topology is then analyzed in detail in \secref{sec:topology of the model}, with the emphasis that the Chern number can be cast in terms of winding numbers. In \secref{sec:correspondence on a strip}, we give a detailed description and a rigorous proof of the bulk-boundary correspondence for the Chern insulator on a strip. To demonstrate the ideas and predictions of our approach, the numerical analyses of various concrete examples are presented in \secref{sec:examples}. Finally, the results of this work are summarized and discussed in \secref{sec:summary}.

\section{Generic two-band model of Chern insulators}\label{sec:two-band model}
We first formulate the generic two-band tight-binding model in a two-dimensional lattice.\footnote{The two-dimensional lattice is assumed to be generic, even though it is depicted as a square lattice such as in \figref{fig:strip} for illustrative purpose. The lattice momenta $k_x$ and $k_y$ in accord with the lattice are not necessarily perpendicular to each other, but they too are often depicted so. Furthermore, for convenience, $k_x$ and $k_y$ are conventionally rescaled to have $k_{x,y}\equiv k_{x,y}+2\pi$.} This model can describe any \emph{arbitrary} two-band bulk insulators (called ``Chern insulators'' if the corresponding Chern number is nonzero).

\subsection{Bulk momentum-space Hamiltonian}\label{sec:bulk momentum-space Hamiltonian}
To begin with, we neglect all boundary effects and study only the physics in the bulk. That is, we either consider an infinite system or impose the periodic (Born-von Karman) boundary condition.
In this idealized setting, if the lattice translational invariance is not broken (e.g., by external electromagnetic field), we have the lattice momentum $\mathbf{k}\equiv(k_x,k_y)$ as a good quantum number.
The system is described by the bulk Hamiltonian, which takes the form $\hat{H}_\mathrm{bulk}=\sum_\mathbf{k}\hat{H}(\mathbf{k})\ket{\mathbf{k}}\bra{\mathbf{k}}$ in the momentum space.
For a two-band tight-binding system, the bulk momentum-space Hamiltonian $\hat{H}(\mathbf{k})$ is formally given by
\begin{equation}\label{H(k)}
  \hat{H}(\mathbf{k}) := \bra{\mathbf{k}}\hat{H}_\mathrm{bulk}\ket{\mathbf{k}}
  =\sum_{\alpha,\beta\in\{\uparrow,\downarrow\}} \bra{\mathbf{k},\alpha}\hat{H}_\mathrm{bulk}\ket{\mathbf{k},\beta} \ket{\alpha}\bra{\beta},
\end{equation}
where the two-value variable $\alpha\in\{\uparrow,\downarrow\}$ accounts for the two-band degree of freedom and represents either real spin (i.e., spin-up and spin-down) or pseudospin (e.g., bipartite sublattice sites), depending on the underlying physics of the system.
Whatever the underlying physics is, $\hat{H}(\mathbf{k})$ is always given by a $2\times2$ hermitian matrix and hence takes the generic form:\footnote{We do not consider non-hermitian Hamiltonians, as they are nonphysical in the sense that their eigenvalues are not all real and the eigenstates are no longer orthogonal to one another. Nevertheless, a non-hermitian formalism is useful for describing transient physics --- typically, the dissipative or amplifying physics is neatly represented by the imaginary part of energy eigenvalues. When it comes to the physics of bulk-boundary correspondence in the ordinary sense, the energy spectrum under consideration is all based on the \emph{stationary} physics, not the \emph{transient} physics. Recently, there have been various works \cite{PhysRevLett.102.065703, PhysRevLett.116.133903, PhysRevLett.118.040401, xiong2018does} devoted to the bulk-boundary correspondence for non-hermitian systems, and the new development has broadened the concept of the bulk-boundary correspondence beyond the ordinary sense. In this paper, we restrict our investigation to the ordinary sense.}
\begin{equation}\label{H(kx,ky)}
  \hat{H}(\mathbf{k}) = \mathbf{h}(\mathbf{k})\cdot\boldsymbol{\sigma}
  + h_0(\mathbf{k})\mathbbm{1}_{2\times2}
  = \left(
      \begin{array}{cc}
        h_z(\mathbf{k})+h_0(\mathbf{k}) & h_x(\mathbf{k})-ih_y(\mathbf{k}) \\
        h_x(\mathbf{k})+ih_y(\mathbf{k}) & -h_z(\mathbf{k})+h_0(\mathbf{k})\\
      \end{array}
    \right).
\end{equation}
Obviously, the bulk energy spectrum is given by
\begin{equation}\label{E bulk}
E(\mathbf{k})=h_0(\mathbf{k})\pm\abs{\mathbf{h}(\mathbf{k})}.
\end{equation}
If the minimum of the upper band spectrum $h_0(\mathbf{k})+\abs{\mathbf{h}(\mathbf{k})}$ is larger than the maximum of the lower band spectrum $h_0(\mathbf{k})-\abs{\mathbf{h}(\mathbf{k})}$, the upper and lower bulk bands is gapped and the system is a bulk insulator.
For a given $\mathbf{k}$, $h_0(\mathbf{k})$ only offsets the energy. Therefore, the term $h_0(\mathbf{k}) \mathbbm{1}_{2\times2}$ is usually neglected, provided that $\abs{\nabla_\mathbf{k}h_0(\mathbf{k})}$ is small enough so the upper and lower bulk bands do not overlap.

Since the lattice momentum is periodic, i.e., $k_{x,y}\equiv k_{x,y}+2\pi$, the generic form of $h_{a=x,y,z,0}$ can be cast as
\begin{subequations}\label{h(k)}
\begin{eqnarray}
\label{h(k)a}
&&h_a(k_x,k_y) := \sum_{n_x,n_y=-\infty}^\infty w^a_{n_x,n_y} e^{i(n_xk_x+n_yk_y)},\\
\label{h(k)b}
&&\quad\text{where}\ w^{a*}_{n_x,n_y}=w^a_{-n_x,-n_y}\in\mathbb{C}.
\end{eqnarray}
\end{subequations}
As the Fourier series of \eqnref{h(k)} can represent any \emph{arbitrary} function mapping from $T^2\cong[0,2\pi]\times[0,2\pi]$ to $\mathbb{C}$ with $h_a(k_x+2m\pi,k_y+2n\pi)=h_a(k_x,k_y)$ for any $m,n\in\mathbb{Z}$, the form of \eqnref{h(k)} provides a starting point to study \emph{arbitrary} adiabatic deformations upon a system of an \emph{arbitrary} Chern number.\footnote{Our main goal is to obtain a \emph{mathematically rigorous} proof of the bulk-boundary correspondence. We are obliged to take into consideration \emph{all} arbitrary adiabatic deformations and arbitrary Chern numbers, even if the corresponding $\hat{H}(\mathbf{k})$ with arbitrary $h_a(\mathbf{k})$ is purely artificial and cannot be realized in a realistic system.}
Many renowned two-band models, such as the Rice-Mele model \cite{rice1982elementary}, the Haldane model \cite{haldane1988model}, and the Qi-Wu-Zhang model \cite{qi2006topological}, can be treated as special cases in this generic setting.
Particularly, if we set $w^x_{1,0}=w^{x*}_{-1,0}=-i/2$, $w^y_{0,1}=w^{y*}_{0,-1}=-i/2$, $w^z_{1,0}=w^{z*}_{-1,0}=1/2$, $w^z_{0,1}=w^{z*}_{0,-1}=1/2$, $w^z_{0,0}=u$, and $w^a_{n_x,n_y}=0$ otherwise, we have $h_0(\mathbf{k})=0$, $h_x(\mathbf{k})=\sin k_x$, $h_y(\mathbf{k})=\sin k_y$, and $h_z(\mathbf{k})=u+\cos k_x+\cos k_y$, which gives the Qi-Wu-Zhang model.

If we deal with a finite system with $N_x$ and $N_y$ unit cells in the $x$ and $y$ directions, respectively, $\mathbf{k}$ takes the discrete values $\mathbf{k}\in\{(m\delta_x,n\delta_y)|m=0,1,\dots,N_x-1,n=0,1,\dots,N_y-1\}$ with $\delta_{x,y}=2\pi/N_{x,y}$. It is only an approximation to treat $h_a(\mathbf{k})$ as a continuous map when $N_x$ and $N_y$ are large but finite. To make this approximation sensible, the map $h_a(\mathbf{k})$ has to be ``smooth'' enough, or more precisely, $\frac{\abs{\partial_{k_{x,y}}h_a(\mathbf{k})}}{\abs{h_a(\mathbf{k})}} \ll\frac{1}{\delta_{x,y}}$. This requires $\sum_{n_x,n_y=-\infty}^\infty$ to be truncated to $\sum_{n_x=-\bar{n}_x}^{\bar{n}_x}\sum_{n_y=-\bar{n}_y}^{\bar{n}_y}$ by two delimiting integers $\bar{n}_x,\bar{n}_y$ that satisfy $\bar{n}_x\ll N_x$ and $\bar{n}_y\ll N_y$.
In other words, to make sense of the bulk-boundary correspondence, the macroscopic length scales (indicated by $N_x$ and $N_y$) of a given sample has to be much larger than the upper bound (specified by $\bar{n}_x$ and $\bar{n}_y$) for the distance (specified by $n_x$ and $n_y$) of long-range interaction.\footnote{It will be made clear in the next subsection, \secref{sec:bulk real-space Hamiltonian}, that the coefficients $w^a_{n_x,n_y}$ are interpreted as long-range hopping and (pseudo)spin-orbit coupling amplitudes over the distance $(n_x,n_y)$.}

\subsection{Bulk real-space Hamiltonian}\label{sec:bulk real-space Hamiltonian}
To study the physics in the bulk for a finite system while neglecting the physics on the boundary, we impose the periodic boundary condition: i.e., $\ket{m_x+N_x,m_y,\alpha}\equiv\ket{m_x,m_y+N_y,\alpha}\equiv\ket{m_x,m_y,\alpha}$, where $\{(m_x,m_y)|,m_x,m_y\in\mathbb{Z}\}$ represents the two-dimensional lattice sites. As the periodic boundary condition respects the lattice translational invariance, Bloch's theorem applies. The Bloch's theorem allows us to introduce the plane wave basis states
\begin{eqnarray}\label{ket k}
  \ket{k_x}=\frac{1}{\sqrt{N_x}}\sum_{m_x =1}^{N_x} e^{im_x k_x} \ket{m_x}, \nonumber\\
  \ket{k_y}=\frac{1}{\sqrt{N_y}}\sum_{m_y =1}^{N_y} e^{im_y k_y} \ket{m_y},
\end{eqnarray}
so that the Bloch eigenstates, labeled by $\epsilon=\pm$ (for upper and lower bands) and $\mathbf{k}$, read as
\begin{eqnarray}
  \ket{\Psi_\epsilon(k_x, k_y)} &=& \ket{k_x}\otimes\ket{k_y}\otimes\ket{u_\epsilon(k_x,k_y)},\nonumber\\
  \ket{u_\epsilon(k_x,k_y)} &=& a_\epsilon(k_x,k_y)\ket{\uparrow}+b_\epsilon(k_x,k_y)\ket{\downarrow}.
\end{eqnarray}
The vectors $\ket{u_\epsilon(\mathbf{k})}$ are eigenstates of $\hat{H}(\mathbf{k})$ defined in \eqnref{H(k)}; i.e., $\hat{H}(\mathbf{k})\ket{u_\epsilon(\mathbf{k})}=E(\epsilon,\mathbf{k})\ket{u_\epsilon(\mathbf{k})}$.

Substituting \eqnref{ket k} into \eqnref{H(k)} with \eqnref{H(kx,ky)} and \eqnref{h(k)}, we obtain the bulk real-space Hamiltonian:
\begin{eqnarray}\label{H bulk}
  \hat{H}_\mathrm{bulk}
  &=& \sum_{m_x=1}^{N_x} \sum_{m_y=1}^{N_y}
  \sum_{n_x=-\bar{n}_x}^{\bar{n}_x}\sum_{n_y=-\bar{n}_y}^{\bar{n}_y}  \ket{m_x-n_x,m_y-n_y}\bra{m_x,m_y}\nonumber\\
  && \qquad \otimes
  \left(
  \begin{array}{cc}
        w^z_{n_x,n_y}+w^0_{n_x,n_y} & w^x_{n_x,n_y}-iw^y_{n_x,n_y} \\
        w^x_{n_x,n_y}+iw^y_{n_x,n_y} & -w^z_{n_x,n_y}+w^0_{n_x,n_y} \\
      \end{array}
  \right).
\end{eqnarray}
Therefore, the coefficients $w^{a}_{n_x,n_y}$ represent the coupling constants between the lattice sites $\ket{m_x,m_y}$ and $\ket{m_x-n_x,m_y-n_y}$. For $(n_x,n_y)=(0,0)$, $w^z_{0,0}+w^0_{0,0}$ and $w^z_{0,0}-w^0_{0,0}$ are the on-site potentials for the $\uparrow$ and $\downarrow$ states, respectively, whereas $w^x_{0,0}\pm iw^y_{0,0}$ corresponds to the on-site interaction between $\uparrow$ and $\downarrow$.
For $(n_x,n_y)\neq(0,0)$, $w^z_{n_x,n_y}+w^0_{n_x,n_y}$ and $w^z_{n_x,n_y}-w^0_{n_x,n_y}$ give the hopping amplitudes from $\ket{m_x,m_y}$ to $\ket{m_x-n_x,m_y-n_y}$ for $\uparrow$ and $\downarrow$, respectively. Meanwhile, $w^x_{n_x,n_y}+iw^y_{n_x,n_y}$ and $w^x_{n_x,n_y}-iw^y_{n_x,n_y}$ correspond to the (pseudo)spin precession from $\uparrow$ to $\downarrow$ and from $\downarrow$ ro $\uparrow$, respectively, when the spinor hops from $\ket{m_x,m_y}$ to $\ket{m_x-n_x,m_y-n_y}$.
The fact that the (pseudo)spin precession depends on the hopping variables $n_x,n_y$ is called the ``(pseudo)spin-orbit coupling''.
As we have taken into account all arbitrary amplitudes $w^{a}_{n_x,n_y}$ with arbitrary interaction distances (specified by $n_x$ and $n_y$), our generic two-band model of Chern insulators is said to include arbitrary long-range hopping and (pseudo)spin-orbit coupling amplitudes.\footnote{By contrast, most renowned models consider hopping and (pseudo)spin-orbit coupling only up to the nearest neighbor or next-nearest neighbor extent. For example, in the Qi-Wu-Zhang model, as $w^{a}_{n_x,n_y}=0$ for $\abs{n_x},\abs{n_y}\geq2$, it does not include any long-range amplitudes beyond the nearest neighbor interaction.}

As commented in the end of \secref{sec:bulk momentum-space Hamiltonian}, to make sense of the smooth approximation of $h_a(\mathbf{k})$ for a finite system, we have to introduce two positive integers $\bar{n}_x$ and $\bar{n}_y$ as the upper bounds for the long-range amplitudes in the $x$ and $y$ directions, respectively. That is, we need to assume $w^{a}_{n_x,n_y}\approx0$ as long as $\abs{n_x}>\bar{n}_x$ or $\abs{n_y}>\bar{n}_y$. This is a reasonable assumption, because long-range amplitudes should become inappreciable when the distance of interaction becomes very large. Smoothness of $h_a(\mathbf{k})$ demands that the macroscopic length scales, $N_x$ and $N_y$, of a finite system has to be much larger than the longest distance of appreciable long-range amplitudes, i.e., $\bar{n}_x\ll N_x$ and $\bar{n}_y\ll N_y$.

\subsection{$h_0=0$ symmetry}\label{sec:h0=0 symmetry}
In the case that $h_0(\mathbf{k})=0$, the energy spectrum is given by $E(\mathbf{k}) =h_0(\mathbf{k})\pm\abs{\mathbf{h}(\mathbf{k})} =\pm\abs{\mathbf{h}(\mathbf{k})}$, and consequently the upper and lower bands exhibit the symmetry of opposite eigenvalues of energy.
More precisely, if $\ket{u(\mathbf{k})} \equiv\inner{\mathbf{k}}{\Psi} =(a(\mathbf{k}),b(\mathbf{k}))^\mathrm{T}$ is an eigenstate with the eigenvalue $E(\mathbf{k})$ of $\hat{H}(\mathbf{k})$ given by \eqnref{H(kx,ky)} with $h_0=0$, we have
\begin{equation}
  \hat{H}(\mathbf{k})\left( \begin{array}{c}
                              a(\mathbf{k}) \\
                              b(\mathbf{k})
                            \end{array} \right)
  \equiv\left( \begin{array}{cc}
             h_z   & h_x-ih_y \\
             h_x+ih_y & -h_z
          \end{array} \right)
   \left( \begin{array}{c}
             a \\
             b
          \end{array} \right)
  =\left( \begin{array}{c}
             h_z a  + (h_x-ih_y)b \\
             (h_x+ih_y)a - h_z b
          \end{array} \right)
  =E(\mathbf{k})\left( \begin{array}{c}
                            a(\mathbf{k}) \\
                            b(\mathbf{k})
                          \end{array} \right),
\end{equation}
which follows
\begin{eqnarray}
  \hat{H}(\mathbf{k})\left( \begin{array}{c}
                              b(\mathbf{k})^* \\
                              -a(\mathbf{k})^*
                            \end{array} \right)
  &\equiv&\left( \begin{array}{cc}
             h_z   & h_x-ih_y \\
             h_x+ih_y & -h_z
          \end{array} \right)
   \left( \begin{array}{c}
             b^* \\
             -a^*
          \end{array} \right)
  =\left( \begin{array}{c}
             h_z b^*  - (h_x-ih_y)a^* \\
             (h_x+ih_y)b^* + h_z a^*
          \end{array} \right) \nonumber\\
  &=&\left( \begin{array}{c}
             \left(h_z b  - (h_x+ih_y)a\right)^* \\
             \left((h_x-ih_y)b + h_z a\right)^*
          \end{array} \right)
  =-E(\mathbf{k})\left( \begin{array}{c}
                            b(\mathbf{k})^* \\
                            -a(\mathbf{k})^*
                          \end{array} \right).
\end{eqnarray}
Consequently, $\ket{\tilde{u}(\mathbf{k})} \equiv\inner{\mathbf{k}}{\tilde{\Psi}} =p(\mathbf{k})(-b(\mathbf{k})^*,a(\mathbf{k})^*)^\mathrm{T}$ is an eigenstate of $\hat{H}(\mathbf{k})$ with the opposite eigenvalue $-E(\mathbf{k})$. Here, for generality, we also include an arbitrary phase factor $p(\mathbf{k})$ that satisfies $\abs{p(\mathbf{k})}=1$ and $p(\mathbf{k}+(2\pi,0)) = p(\mathbf{k}+(0,2\pi)) = p(\mathbf{k})$.
Particularly, we can choose $p(\mathbf{k})=e^{i\mathbf{k}\cdot \bar{\mathbf{m}}}\equiv e^{i(k_x\bar{m}_x+k_y\bar{m}_y)}$ with two arbitrary integers $\bar{m}_x, \bar{m}_y$.

Correspondingly, in the real-space representation, if $\inner{\mathbf{m}}{\Psi} \equiv(a_{m_x,m_y},b_{m_x,m_y})^\mathrm{T}$, then we have
\begin{eqnarray}\label{tilde a b}
\inner{\mathbf{m}}{\tilde{\Psi}}
&\equiv&
\left(\begin{array}{c}
          \tilde{a}_{m_x,m_y} \\
          \tilde{b}_{m_x,m_y}
      \end{array} \right)
=\sum_\mathbf{k} \inner{\mathbf{m}}{\mathbf{k}} \inner{\mathbf{k}}{\tilde{\Psi}}
=\sum_\mathbf{k} \frac{e^{i(\mathbf{m}+\bar{\mathbf{m}})\cdot\mathbf{k}}}{\sqrt{N_xN_y}}
\left( \begin{array}{c}
         b(\mathbf{k})^* \\
         -a(\mathbf{k})^*
       \end{array} \right) \nonumber\\
&=& \left(
\sum_\mathbf{k} \frac{e^{-i(\mathbf{m}+\bar{\mathbf{m}})\cdot\mathbf{k}}}{\sqrt{N_xN_y}}
(i\sigma_y)
\left( \begin{array}{c}
         a(\mathbf{k}) \\
         b(\mathbf{k})
       \end{array} \right)
\right)^*
=i\sigma_y
\left(
\sum_\mathbf{k} \frac{e^{-i(\mathbf{m}+\bar{\mathbf{m}})\cdot\mathbf{k}}}{\sqrt{N_xN_y}}
\inner{\mathbf{k}}{\Psi}
\right)^* \nonumber\\
&=& i\sigma_y \inner{-\mathbf{m}-\bar{\mathbf{m}}}{\Psi}^*
=
\left(\begin{array}{c}
          b_{-m_x-\bar{m}_x,-m_y-\bar{m}_y}^* \\
          -a_{-m_x-\bar{m}_x,-m_y-\bar{m}_y}^*
      \end{array} \right).
\end{eqnarray}
It is because of the lattice translational invariance that the relation between $(a_\mathbf{m},b_\mathbf{m})$ and $(\tilde{a}_\mathbf{m},\tilde{b}_\mathbf{m})$ is up to an arbitrary lattice vector $\bar{\mathbf{m}}\equiv(\bar{m}_x,\bar{m}_y)$.

Although the condition $h_0(\mathbf{k})=0$ is artificial and at best an approximation in reality, imposing $h_0=0$ is very helpful for finding various qualitative features of the two-band system.

\section{Topology of the two-band model}\label{sec:topology of the model}
The bulk momentum-space Hamiltonian \eqnref{H(kx,ky)} is specified by the map $h_a: \mathbf{k}\equiv(k_x,k_y)\in T^2\equiv [0,2\pi]\times[0,2\pi] \mapsto (\mathbf{h}(\mathbf{k}),h_0(\mathbf{k}))\in \left(\mathbb{R}^3\setminus\{0\}\right)\times\mathbb{R}$, where $\mathbf{h}(\mathbf{k})=0$ is excluded to ensure an open gap of the bulk spectrum. The topology (more precisely, homotopy type) of the map $h_a:T^2\rightarrow (\mathbb{R}^3\setminus\{0\})\times\mathbb{R}$ can be characterized by the map $\mathbf{h}:T^2\rightarrow \mathbb{R}^3\setminus\{0\}$. Therefore, without losing generality, we assume $h_0(\mathbf{k})=0$ and focus on $\mathbf{h}(\mathbf{k})$. In this section, we will prove that the topology of $\mathbf{h}:T^2\rightarrow \mathbb{R}^3\setminus\{0\}$ is classified by the Chern number and elaborate on its geometrical meaning. We adopt some of the techniques used in \cite{sun2013lecture}.

\subsection{Chern number and winding number}
In the case of $h_0=0$, the eigenvalues of $\hat{H}(\mathbf{k})$ are given by $E_\pm(\mathbf{k}) =\pm\abs{\mathbf{h}(\mathbf{k})}$, and the two eigenstates $u_{\pm}(\mathbf{k})$ are related with each other by the $h_0=0$ symmetry as discussed in \secref{sec:h0=0 symmetry}.
In the following, without losing generality, we focus only on the lower band $u_-(\mathbf{k})$.

The eigenstate $u_-(\mathbf{k})$ of $E_-(\mathbf{k})$ can be expressed as
\begin{equation}\label{uS}
u^{(S)}_-(\mathbf{k}) = \frac{1}{\mathcal{N}^{(S)}}
\left(
  \begin{array}{c}
    h_z-\abs{\mathbf{h}} \\
    h_x+ih_y \\
  \end{array}
\right),
\quad \text{singular when}\ h_x=h_y=0\ \text{and}\ h_z>0,
\end{equation}
or alternatively as
\begin{equation}\label{uN}
u^{(N)}_-(\mathbf{k}) = \frac{1}{\mathcal{N}^{(N)}}
\left(
  \begin{array}{c}
    -h_x+ih_y \\
    h_z+\abs{\mathbf{h}} \\
  \end{array}
\right),
\quad \text{singular when}\ h_x=h_y=0\ \text{and}\ h_z<0.
\end{equation}
Note that the wavefunction $u^{(S)}_-(\mathbf{k})$ is everywhere well defined except that $\mathbf{h}(\mathbf{k})$ is in the ``north-pole'' direction, whereas the wavefunction $u^{(N)}_-(\mathbf{k})$ is everywhere well defined except that $\mathbf{h}(\mathbf{k})$ in the ``south-pole'' direction. In the region where both wavefunctions are well defined, they differ with each other by a phase:
\begin{equation}\label{differ by a gauge}
u^{(N)}_-(\mathbf{k}) = u^{(S)}_-(\mathbf{k})\, e^{i\phi(\mathbf{k})},
\end{equation}
where
\begin{equation}
e^{i\phi(\mathbf{k})} =
\frac{\frac{h_z+\abs{\mathbf{h}}}{h_x+ih_y}}
     {\abs{\frac{h_z+\abs{\mathbf{h}}}{h_x+ih_y}}}
=\frac{\abs{h_x+ih_y}}{h_x+ih_y}.
\end{equation}
This corresponds to a gauge transformation for the Berry connection given by
\begin{equation}
{A}^{(N)}(\mathbf{k}) = {A}^{(S)}(\mathbf{k}) + d\mathbf{k}\,\partial_\mathbf{k}\phi(\mathbf{k}),
\end{equation}
where the Berry connection is defined as
\begin{equation}\label{Berry connection}
{A}(\mathbf{k})
:=-i\bra{u_-(\mathbf{k})}\partial_\mathbf{k}\ket{u_-(\mathbf{k})}
d\mathbf{k}.
\end{equation}
The corresponding Berry curvature, which is gauge independent, is defined as the exterior derivative of the Berry connection:
\begin{equation}\label{Berry curvature}
{\Omega}(\mathbf{k}) := d {A}(\mathbf{k}).
\end{equation}
Let the Brillouin zone $T^2$ be covered by $D_S\subset T^2$ and $D_N\subset T^2$: $D_S$ denotes a region (maybe disconnected) where $u^{(S)}_-$ is regular everywhere, $D_{N}\subset T^2$ denotes a region where $u^{(N)}_-$ is regular everywhere, and $\partial D_S=-\partial D_N$ denotes the boundary between $D_S$ and $D_N$.
The Berry curvature integrated over the Brillouin zone is then given by
\begin{eqnarray}\label{Berry flux}
\int_{T^2}{\Omega}
&=& \int_{D_S}d{A}^{(S)}
+ \int_{D_{N}}d{A}^{(N)}
=\int_{\partial D_S} {A}^{(S)}
+ \int_{\partial D_{N}} {A}^{(N)} \nonumber\\
&=&\int_{\partial D_S} \left({A}^{(S)}-{A}^{(N)}\right)
=-\int_{\partial D_S} d\mathbf{k}\,\partial_\mathbf{k}\phi(\mathbf{k}),
\end{eqnarray}
where we have used Stokes' theorem. We will use \eqnref{Berry flux} to prove that the total flux is quantized.

Pictorially, the noth- and south-pole singularities take place whenever the map $\mathbf{h}: T^2 \rightarrow \mathbb{R}^3\setminus\{0\}$ touches the positive or negative $z$ axis. If we suppose north-pole singularities do not coincide with south-pole singularities in positions of the same $k_y$, the Brillouin zone can be covered by $D_S$ and $D_N$ in such a way that each of them consists of ``horizontal strips'' as shown in \figref{fig:singularities}.\footnote{Generally, it is possible that a north- or south-pole singularity occupies a continues curve or even a continuous region, instead of an isolated point, in the Brillouin zone. The following argument for the quantization of the flux remains the same even if this occurs. Moreover, it is also possible that a noth pole and a south pole coincide in the same $k_y$. In this case, instead of a horizontal straight line as shown in \figref{fig:singularities}, we have to choose a deformed boundary line to detour around the coincident poles. For the special cases that $h_z(k_x,k_y)=h_z(k_y)$ is independent of $k_x$, which are the topic of \secref{sec:special cases}, the assumption that noth poles and south poles do not coincide in the same $k_y$ is completely valid.} As $\partial D_S$ consists of horizontal oriented lines as shown in \figref{fig:singularities}, it follows from \eqnref{Berry flux} that
\begin{eqnarray}
\int_{T^2}{\Omega} &=& -\int_{\partial D_S} d\mathbf{k}\,\partial_\mathbf{k}\phi(\mathbf{k})
=\sum_{k_y=k_y^{(1)},k_y^{(2)},\dots} \pm\, i\int_0^{2\pi} dk_x \frac{\partial}{\partial k_x}
\log \frac{\abs{h_x+ih_y}}{h_x+ih_y} \nonumber\\
&=&-\sum_{k_y=k_y^{(1)},k_y^{(2)},\dots} \pm\, i\int_0^{2\pi} dk_x \frac{\partial}{\partial k_x}
\log \left(h_x(\mathbf{k})+ih_y(\mathbf{k})\right),
\end{eqnarray}
where $k_y^{(1)},k_y^{(2)},\dots$ are the horizontal positions of the horizontal lines and $\pm$ corresponds to their orientations. For a fixed $k_y^{(i)}$, each summand is associated with the \emph{winding number} of the corresponding ``constant-$k_y$ loop'' winding around the $z$ axis. A constant-$k_y$ loop is defined as an $S^1$ loop given by $\{\mathbf{h}(k_x,k_y)|k_x\in[0,2\pi]\}$  and orientated in the increasing $k_x$ direction (see \figref{fig:tori}). More precisely, the winding number of the $k_y=k_y^{(i)}$ loop is given by the integral of the complex logarithm function as \cite{rudin1976principles}
\begin{equation}\label{winding number}
w(k_y)=\frac{1}{2\pi i}\int_0^{2\pi} dk_x \frac{\partial}{\partial k_x}
\log \left(h_x(k_x,k_y)+ih_y(k_x,k_y)\right).
\end{equation}
Consequently, we have
\begin{equation}\label{Chern number}
\int_{T^2}{\Omega} = 2\pi \sum_{k_y=k_y^{(1)},k_y^{(2)},\dots} \pm w(k_y)
\equiv 2\pi \mathtt{C},
\qquad \mathtt{C}\in\mathbb{Z}.
\end{equation}
That is, the total flux of the Berry curvature is quantized and characterized by the integer $\mathtt{C}$ known as the \emph{Chern number}. Furthermore, the Chern number is equal to the sum of the winding numbers around the $z$ axis of the constant-$k_y$ loops that separate the north-pole regions from the south-pole regions and inherit the orientations of the south-pole regions.
Alternatively, \eqnref{Chern number} can be rewritten as
\begin{equation}\label{Chern number 2}
\mathtt{C}\equiv\frac{1}{2\pi}\int_{T^2}{\Omega} = w(k_y^{(1)}) -  w(k_y^{(2)}) + w(k_y^{(3)}) - w(k_y^{(4)}) + \dots,
\qquad \mathtt{C}\in\mathbb{Z},
\end{equation}
where the orientation of the constant-$k_y$ loop is taken to be the increasing-$k_x$ direction.

\begin{figure}
\begin{tikzpicture}


\begin{scope}[shift={(0,0)},scale=1]

 \draw [help lines,->] (-0.5,0) -- (4.5,0);
 \node at (4.8,0) {$k_x$};
 \node [below] at (4,0) {$2\pi$};
 \draw [help lines,->] (0,-0.5) -- (0,4.5);
 \node at (-0.1,4.8) {$k_y$};
 \node [left] at (0,4) {$2\pi$};

 \path [fill=lightgray] (0,0) rectangle (4,1.25);
 \path [fill=lightgray] (0,1.8) rectangle (4,2.35);
 \path [fill=lightgray] (0,3.3) rectangle (4,4);

 \draw [blue] (0,0) rectangle (4,4);

 \draw [fill] (1.6,0.3) circle [radius=0.05];
 \draw [fill] (1.2,0.6) circle [radius=0.05];
 \draw [fill] (1,1) circle [radius=0.05];
 \draw [] (2,1.5) circle [radius=0.05];
 \draw [fill] (1.5,2.1) circle [radius=0.05];
 \draw [] (3,2.6) circle [radius=0.05];
 \draw [] (2.5,3) circle [radius=0.05];
 \draw [fill] (1.8,3.6) circle [radius=0.05];

\end{scope}


\begin{scope}[shift={(7,0)},scale=1]

 \draw [help lines,->] (-0.5,0) -- (4.5,0);
 \node at (4.8,0) {$k_x$};
 \node [below] at (4,0) {$2\pi$};
 \draw [help lines,->] (0,-0.5) -- (0,4.5);
 \node at (-0.1,4.8) {$k_y$};
 \node [left] at (0,4) {$2\pi$};

 \draw [magenta,thick,->] (0,1.25) -- (2,1.25);
 \draw [magenta,thick,-] (2,1.25) -- (4,1.25);
 \node [left] at (0,1.25) {$k_y^{(1)}$};

 \draw [magenta,thick,-] (0,1.8) -- (2,1.8);
 \draw [magenta,thick,<-] (2,1.8) -- (4,1.8);
 \node [left] at (0,1.8) {$k_y^{(2)}$};

 \draw [magenta,thick,->] (0,2.35) -- (2,2.35);
 \draw [magenta,thick,-] (2,2.35) -- (4,2.35);
 \node [left] at (0,2.35) {$k_y^{(3)}$};

 \draw [magenta,thick,-] (0,3.3) -- (2,3.3);
 \draw [magenta,thick,<-] (2,3.3) -- (4,3.3);
 \node [left] at (0,3.3) {$k_y^{(4)}$};

 \draw [blue] (0,0) rectangle (4,4);

 \draw [fill] (1.6,0.3) circle [radius=0.05];
 \draw [fill] (1.2,0.6) circle [radius=0.05];
 \draw [fill] (1,1) circle [radius=0.05];
 \draw [] (2,1.5) circle [radius=0.05];
 \draw [fill] (1.5,2.1) circle [radius=0.05];
 \draw [] (3,2.6) circle [radius=0.05];
 \draw [] (2.5,3) circle [radius=0.05];
 \draw [fill] (1.8,3.6) circle [radius=0.05];

\end{scope}

\end{tikzpicture}
\caption{An example configuration of noth-pole singularities (solid dots) and south-pole singularities (hollow dots) in the Brillouin zone $T^2\equiv[0,2\pi]\times[0,2\pi]$. \textit{Left}: $D_N$ is covered by the shaded strips, and $D_S$ is by the unshaded strips. \textit{Right}: $\partial D_S\equiv-\partial D_N$ is given by the horizontal oriented lines.}\label{fig:singularities}
\end{figure}
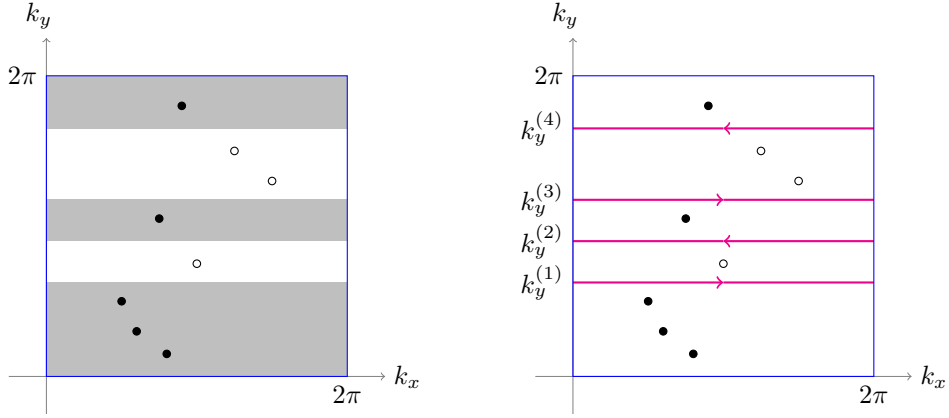

\subsection{Magnetic monopole}

The Berry connection \eqnref{Berry connection} as a one-form admits the equivalent expression in terms of $d\mathbf{h}$:
\begin{equation}\label{Berry connection in h}
  A(\mathbf{h}):= -i\bra{u_-} d\ket{u_-} \equiv -i\bra{u_-} \partial_{\mathbf{h}}\ket{u_-} d\mathbf{h}.
\end{equation}
Substituting \eqnref{uN} into \eqnref{Berry connection in h} then gives the explicit form of the Berry connection in terms of the components in the $\mathbf{h}$ space:
\begin{equation}
  A_{h_x}^{(N)}= \frac{-h_y}{2\abs{\mathbf{h}}(\abs{\mathbf{h}}+h_z)}, \qquad A_{h_y}^{(N)}=\frac{h_x}{2\abs{\mathbf{h}}(\abs{\mathbf{h}}+h_z)}, \qquad A_{h_z}^{(N)}=0.
\end{equation}
By \eqnref{Berry curvature}, the corresponding Berry curvature is then given by the $\mathbf{h}$-space components:
\begin{equation}\label{berry curvature}
 \Omega_{h_x h_y}=\frac{h_z}{2\abs{\mathbf{h}}^3},\quad \Omega_{h_y h_z}=\frac{h_x}{2\abs{\mathbf{h}}^3},\quad  \Omega_{h_z h_x}=\frac{h_y}{2\abs{\mathbf{h}}^3}.
\end{equation}
The magnetic filed strength $B^{h_\lambda}:=\epsilon^{h_\lambda h_\mu h_\nu}\Omega_{h_\mu h_\nu}$ is precisely the magnetic field produced by a unit magnetic monopole located at the origin in the $\mathbf{h}$ space. Therefore, the Chern number, as a measure of the total magnetic flux passing through the Brillouin zone, can be pictorially understood as how many times the torus $T^2$ encloses the origin in the $\mathbf{h}$ space for the map $\mathbf{h}:T^2\rightarrow \mathbb{R}^3\backslash\{0\}$.

To visualize how the map $\mathbf{h}:T^2\rightarrow \mathbb{R}^3\backslash\{0\}$ coils around the origin, see \figref{fig:tori} for examples of $\mathtt{C}=1$ and the figures in \secref{sec:examples} for examples of $\mathtt{C}=0,1,2$. As $k_y$ increases, if a constant-$k_y$ loop winding around the positive $z$ axis becomes winding around the negative $z$ axis or the other way around, the winding number will change by an integer. It is intuitive to see that the integer of this change counts up the number of times the torus encloses the origin. This gives a pictorial explanation for \eqnref{Chern number 2}.

\subsection{More about the winding number}\label{sec:more on winding number}
In order to prove the bulk-boundary correspondence in the next section, we need to elaborate on the winding number given in \eqnref{winding number}.

First, let us define
\begin{equation}\label{hxy}
h_{xy}(k_x,k_y) := h_x(k_x,k_y) + i h_y(k_x,k_y).
\end{equation}
By rewriting $z=e^{ik_x}$ and $dz=ie^{ik_x}dk_x$, the winding number given in \eqnref{winding number} can be recast as a contour integral along the unit circle on the complex plane:
\begin{equation}\label{nu 2}
  w(k_y) = \frac{1}{2\pi i}\oint_{\abs{z}=1} dz \frac{h'_{xy}(z;k_y)}{h_{xy}(z;k_y)}.
\end{equation}
Note that, according to \eqnref{h(k)}, $h_{xy}(z;k_y)$ is a Laurent polynomial of $z$ over $\mathbb{C}$ given by
\begin{equation}\label{hxy(z)}
h_{xy}(z;k_y) \equiv h_x(k_x,k_y) + i h_y(k_x,k_y) \Big|_{e^{ik_x}\rightarrow z}= \sum_{n=-\bar{n}_x}^{\bar{n}_x} w_n(k_y) z^n,
\end{equation}
where the coefficients are given by
\begin{equation}\label{wn}
w_{n_x}(y_k) := \sum_{n_y=-\bar{n}_y}^{\bar{n}_y}
e^{in_yk_y} \left(w^x_{n_x,n_y}+iw^y_{n_x,n_y}\right).
\end{equation}
Thus, $z^{\bar{n}_x}h_{xy}(z;k_y)$ is polynomial of $z$ and can be formally factorized as
\begin{equation}\label{factorization}
  z^{\bar{n}_x}h_{xy}(z;k_y)=\sum_{n=-\bar{n}_x}^{\bar{n}_x} w_n(k_y) z^{n+\bar{n}_x} = w_{\bar{n}_x}(k_y) \prod_j (z-\xi_j)^{\nu_j},
\end{equation}
where $\xi_j$ are the roots of $z^{\bar{n}_x}h_{xy}(z;k_y)$ and $\nu_j\in\mathbb{N}$ are the corresponding multiplicities. Substituting \eqnref{factorization} for $h_{xy}(z;k_y)$ into \eqnref{nu 2} leads to
\begin{equation}
  w(k_y) = \sum_{j} \frac{1}{2\pi i}\oint_{\abs{z}=1} dz \frac{\nu_j}{z-\xi_j}
  - \frac{1}{2\pi i}\oint_{\abs{z}=1} dz \frac{\bar{n}_x}{z}.
\end{equation}
Cauchy's integral formula then implies
\begin{equation}\label{key eq 1}
  w(k_y) = -\bar{n}_x + \sum_{j=1,\dots \atop \abs{\xi_j}<1} \nu_j,
  \quad \text{where}\ z^{\bar{n}_x}h_{xy}(z;k_y)\equiv\sum_{n=-\bar{n}_x}^{\bar{n}_x} w_n(k_y) z^{n+\bar{n}_x}
  \propto \prod_j (z-\xi_j)^{\nu_j}.
\end{equation}
That is, the winding number is the sum of the multiplicities of those roots of $\sum_{n=-\bar{n}_x}^{\bar{n}_x} w_n(k_y) z^{n+\bar{n}_x}$ that are located inside the unit circle on the complex plane.\footnote{Note that we assume $\abs{\xi_j}\neq1$ for all $\xi_j$ in \eqnref{factorization}. If $\abs{\xi_j}=1$, we would have $\xi_j=e^{i\theta}$ for some $\theta\in[0,2\pi]$ and therefore $h_{xy}(k_x=\theta;k_y)=0$, which is the case that the constant-$k_y$ loop hits the $z$ axis and the winding number cannot be defined.}

Similarly, repeating the above calculation with $z=e^{-ik_x}$, $dz=-ie^{-ik_x}dk_x$ and $h_{xy}(z^{-1};k_y)=\sum_{n=-\bar{n}_x}^{\bar{n}_x} w_n(k_y) z^{-n}\equiv\sum_{n=-\bar{n}_x}^{\bar{n}_x} w_{-n}(k_y) z^n$, we obtain a different expression:
\begin{equation}\label{key eq 2}
w(k_y) = \bar{n}_x - \sum_{j=1,\dots \atop \abs{\xi_j}<1} \nu_j,
\quad \text{where}\ z^{\bar{n}_x}h_{xy}(z^{-1};k_y)\equiv\sum_{n=-\bar{n}_x}^{\bar{n}_x} w_{-n}(k_y) z^{n+\bar{n}_x}
\propto \prod_j (z-\xi_j)^{\nu_j}.
\end{equation}

Equivalently, the winding number can also be expressed in terms of
\begin{equation}
h_{xy}^*(k_x,k_y) := h_x(k_x,k_y) -i h_y(k_x,k_y)
\end{equation}
as
\begin{equation}
  w(k_y) = -\frac{1}{2\pi i}\int_{-\pi}^{\pi} dk_x \frac{d}{dk_x}\log h_{xy}(k_x,k_y)^*,
\end{equation}
because the complex conjugation gives the opposite polar angle.
Consequently, we have
\begin{equation}\label{key eq 3}
w(k_y) = -\bar{n}_x + \sum_{j=1,\dots \atop \abs{\xi_j}<1} \nu_j,
\quad \text{where}\ z^{\bar{n}_x}h^*_{xy}(z^{-1};k_y)\equiv\sum_{n=-\bar{n}_x}^{\bar{n}_x} w^*_n(k_y) z^{n+\bar{n}_x}
\propto \prod_j (z-\xi_j)^{\nu_j},
\end{equation}
and
\begin{equation}\label{key eq 4}
w(k_y) = \bar{n}_x - \sum_{j=1,\dots \atop \abs{\xi_j}<1} \nu_j,
\quad \text{where}\ z^{\bar{n}_x}h^*_{xy}(z;k_y)\equiv\sum_{n=-\bar{n}_x}^{\bar{n}_x} w^*_{-n}(k_y) z^{n+\bar{n}_x}
\propto \prod_j (z-\xi_j)^{\nu_j},
\end{equation}
where $h_{xy}^*(z;k_y)$ is defined as
\begin{equation}\label{hxy(z)*}
h_{xy}^*(z;k_y)
\equiv h_x(k_x,k_y) - i h_y(k_x,k_y) \Big|_{e^{ik_x}\rightarrow z}
=\sum_{n=-\bar{n}_x}^{\bar{n}_x} w^*_n(k_y) z^{-n}
\end{equation}
in accord with \eqnref{hxy(z)}.
Equations \eqnref{key eq 1}, \eqnref{key eq 2}, \eqnref{key eq 3} and \eqnref{key eq 4} are the key identities that will be used to relate the winding number to the multiplicity of the edge states.

\section{Bulk-Boundary correspondence on a strip}\label{sec:correspondence on a strip}
To study the physics not only for the bulk but also for the boundaries, we have to remove the periodic (Born-von Karman) boundary condition. The standard treatment is to consider a strip of a lattice with a large but finite width as depicted in \figref{fig:strip}. That is, along the $y$ direction, we still impose the periodic  boundary condition or simply treat the strip as infinitely long (i.e., take the limit $N_y\rightarrow\infty$); along the $x$ direction, however, we keep $N_x$ large but finite and impose the open boundary condition on both edges (i.e., any out-of-edge hopping is set to vanish).
More precisely, the Hamiltonian of the strip, $\hat{H}_\mathrm{strip}$, is given exactly as the same as $\hat{H}_\mathrm{bulk}$ in \eqnref{H bulk} except that the summands involving $\ket{m_x-n_x,m_y-n_y}\bra{m_x,m_y}$ are dropped out as long as $m_x-n_x$ is out of range (i.e., $m_x-n_x<1$ or $m_x-n_x>N_x$).
In this section, we will make a precise statement of the bulk-boundary correspondence on a strip in terms of the edge modes (i.e., energy eigenstates localized on the right or left edge region) and provide a rigorous proof of it.

\begin{figure}
\begin{tikzpicture}

\begin{scope}[scale=0.4]

 \draw[-,olive] (1,0.05) grid (10,15.95);
 \draw [->,thick] (4,-1) -- (7,-1);
 \node [right] at (7,-1) {$x$};
 \node [below] at (0.7,0.3) {\tiny $m_x\!=\!1$};
 \node [below] at (10.6,0.3) {\tiny $m_x\!=\!N_x$};
 \draw [->, thick] (-0.5,6.5) -- (-0.5,9.5);
 \node [above] at (-0.5,9.5) {$y$};

 \foreach \i in {1,2,...,10}{
  \foreach \j in {1,2,...,15}{
   \draw [blue,fill=blue] (\i,\j) circle [radius=0.1];
  }
 }

\end{scope}


\begin{scope}[shift={(7,0)},scale=0.4]

 \draw[-,olive] (1,0.05) grid (10,15.95);
 \draw [->,thick] (4,-1) -- (7,-1);
 \node [right] at (7,-1) {$x$};
 \node [below] at (0.7,0.3) {\tiny $m_x\!=\!1$};
 \node [below] at (10.6,0.3) {\tiny $m_x\!=\!N_x$};
 \draw [->, thick] (-0.5,6.5) -- (-0.5,9.5);
 \node [above] at (-0.5,9.5) {$k_y$};

 \foreach \i in {1,2,...,10}{
  \foreach \j in {1,2,...,15}{
   \draw [blue,fill=blue] (\i,\j) circle [radius=0.1];
  }
 }

 \draw [red, rounded corners] (0.5,0.5) rectangle (10.5,1.5);

\end{scope}

\end{tikzpicture}
\caption{\textit{Left}: A strip of a two-dimensional lattice. \textit{Right}: The strip can be viewed as a one-dimensional lattice where each unit cell (shown as the enclosed region) consists of $N_x$ sublattice sites.}\label{fig:strip}
\end{figure}
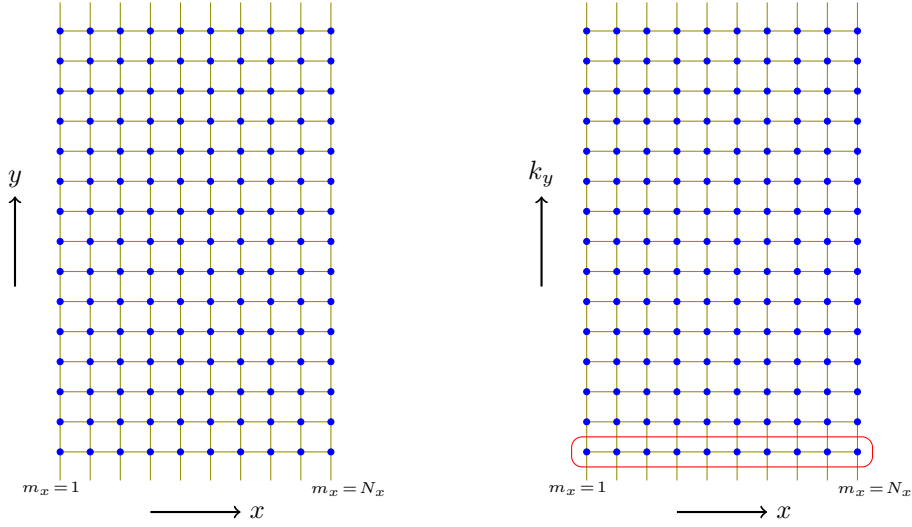

\subsection{Casting the eigenvalue problem}\label{sec:eigenvalue problem}
In the presence of the boundary edges, the lattice translational invariance is broken along the $x$ direction, but remains respected along the $y$ direction. Therefore, the lattice momentum $k_y$ is still a good quantum number. In other words, the system of a strip can be viewed as a one-dimensional lattice with each unit cell consisting of $N_x$ sublattice sites, as shwon in \figref{fig:strip}.
Under the Fourier transformation with respect to $k_y$, the Hamiltonian of the strip, $\hat{H}_\mathrm{strip}$, is reduced into a $k_y$-indexed Hamiltonian, denoted as $\hat{H}_{N_x}(k_y)$ and given by
\begin{eqnarray}
  &&\hat{H}_{N_x}(k_y) :=
  \bra{k_y}\hat{H}_{\text{strip}}\ket{k_y} \\
  && = \sum_{n_y=-\bar{n}_y}^{\bar{n}_y}  \sum_{n_x=0}^{\bar{n}_x} \sum_{m_x=n_x +1}^{N_x}
  e^{i n_y k_y} \ket{m_x-n_x}\bra{m_x}
  \otimes
  \left(
  \begin{array}{cc}
        w^0_{n_x,n_y}+w^z_{n_x,n_y} & w^x_{n_x,n_y}-iw^y_{n_x,n_y} \\
        w^x_{n_x,n_y}+iw^y_{n_x,n_y} & w^0_{n_x,n_y}-w^z_{n_x,n_y} \\
      \end{array}
  \right) \nonumber\\
  && \mbox{} + \sum_{n_y=-\bar{n}_y}^{\bar{n}_y} \sum_{n_x=1}^{\bar{n}_x} \sum_{m_x=1}^{N_x -n_x}
  e^{-i n_y k_y}  \ket{m_x+n_x}\bra{m_x}
  \otimes
  \left(
  \begin{array}{cc}
        w^0_{-n_x,-n_y}+w^z_{-n_x,-n_y} & w^x_{-n_x,-n_y}-iw^y_{-n_x,-n_y} \\
        w^x_{-n_x,-n_y}+iw^y_{-n_x,-n_y} & w^0_{-n_x,-n_y}-w^z_{n_x,n_y} \\
      \end{array}
  \right). \nonumber
\end{eqnarray}

The eigenvalue problem of $\hat{H}_{N_x}(k_y)\ket{\Psi(k_y)}=E(k_y)\ket{\Psi(k_y)}$ with
\begin{equation}\label{Psi ky}
  \ket{\Psi(k_y)}=\sum_{m_x =1}^{N_x} \ket{m_x}\otimes \left( \begin{array}{c}
                                                 a_{m_x} \\
                                                 b_{m_x}
                                               \end{array} \right),
\end{equation}
then gives $2N_x$ equations for $2N_x$ variables $a_m$ and $b_n$ as
\begin{subequations}\label{eqs}
\begin{eqnarray}
  \label{eq a left}
   E(k_y) a_{m_x} &=& \sum_{n_y=-\bar{n}_y}^{\bar{n}_y}e^{i n_y k_y}\sum_{n_x=-m_x+1}^{\bar{n}_x}  (w^0_{n_x,n_y}+w^z_{n_x,n_y}) a_{m_x+n_x} + (w^x_{n_x,n_y}-iw^y_{n_x,n_y}) b_{m_x+n_x},\nonumber\\
  && \qquad
  \text{for}\ m_x=1,\dots,\bar{n}_x, \\
  \label{eq a bulk}
  E(k_y) a_{m_x} &=& \sum_{n_y=-\bar{n}_y}^{\bar{n}_y}e^{i n_y k_y}\sum_{n_x=-\bar{n}_x}^{\bar{n}_x} (w^0_{n_x,n_y}+w^z_{n_x,n_y}) a_{m_x +n_x} + (w^x_{n_x,n_y}-iw^y_{n_x,n_y}) b_{m_x+n_x}, \nonumber\\
  &&\qquad
  \text{for}\ {m_x}=\bar{n}_x+1,\dots,N_x-\bar{n}_x, \\
  \label{eq a right}
  E(k_y) a_{m_x} &=& \sum_{n_y=-\bar{n}_y}^{\bar{n}_y}e^{i n_y k_y}\sum_{n_x=-\bar{n}_x}^{N_x-m_x}   (w^0_{n_x,n_y}+w^z_{n_x,n_y}) a_{m_x +n_x} + (w^x_{n_x,n_y}-iw^y_{n_x,n_y}) b_{m_x+n_x},\nonumber\\
  && \qquad
  \text{for}\ m_x=N_x-\bar{n}_x+1,\dots,N_x, \\
  \label{eq b left}
  E(k_y) b_{m_x}&=&  \sum_{n_y=-\bar{n}_y}^{\bar{n}_y}e^{i n_y k_y}\sum_{n_x=-\bar{n}_x}^{m_x-1} (w^x_{-n_x,n_y}+iw^y_{-n_x,n_y}) a_{m_x -n_x} + (w^0_{-n_x,n_y}-w^z_{-n_x,n_y}) b_{m_x-n_x},\nonumber\\
  && \qquad
  \text{for}\ m_x=1,\dots,\bar{n}_x, \\
  \label{eq b bulk}
  E(k_y) b_{m_x} &=& \sum_{n_y=-\bar{n}_y}^{\bar{n}_y}e^{i n_y k_y}\sum_{n_x=-\bar{n}_x}^{\bar{n}_x} (w^x_{-n_x,n_y}+iw^y_{-n_x,n_y}) a_{m_x-n_x} + (w^0_{-n_x,n_y}-w^z_{-n_x,n_y}) b_{m_x -n_x}, \nonumber\\
  &&\qquad
  \text{for}\ {m_x}=\bar{n}_x+1,\dots,N_x -\bar{n}_x, \\
  \label{eq b right}
  E(k_y) b_{m_x}&=& \sum_{n_y=-\bar{n}_y}^{\bar{n}_y}e^{i n_y k_y}\sum_{n_x=m_x-N_x}^{\bar{n}_x}    (w^x_{-n_x,n_y}+iw^y_{-n_x,n_y})a_{m_x -n_x} + (w^0_{-n_x,n_y}-w^z_{-n_x,n_y}) b_{m_x-n_x},\nonumber\\
  && \qquad
  \text{for}\ {m_x}=N_x-\bar{n}_x+1,\dots,N_x.
\end{eqnarray}
\end{subequations}
Here, each of \eqnref{eq a bulk} and \eqnref{eq b bulk} gives $N-2\bar{n}_x$ equations for the lattice points in the bulk region (i.e., away from the edges); each of \eqnref{eq a left} and \eqnref{eq b left} gives $\bar{n}_x$ equations for the points close to the left edge; each of \eqnref{eq a right} and \eqnref{eq b right} gives $\bar{n}_x$ equations for the points close to the right edge.
Particularly, \eqnref{eq a bulk} and \eqnref{eq b bulk} for the bulk region can be combined into
\begin{eqnarray}\label{eigenvalue eq}
E(k_y)
\left(
\begin{array}{c}
a_{m_x} \\
b_{m_x}
\end{array}
\right)
&=& \sum_{n_y=-\bar{n}_y}^{\bar{n}_y}\sum_{n_x=-\bar{n}_x}^{\bar{n}_x}e^{i n_y k_y}
\left(
  \begin{array}{cc}
     w^0_{n_x,n_y}+w^z_{n_x,n_y} & w^x_{n_x,n_y}-iw^y_{n_x,n_y} \\
     w^x_{n_x,n_y}+iw^y_{n_x,n_y} & w^0_{n_x,n_y}-w^z_{n_x,n_y}
  \end{array}
\right)
\left(
\begin{array}{c}
a_{m_x+n_x} \\
b_{m_x+n_x}
\end{array}
\right), \nonumber\\
&&\qquad \text{for}\ {m_x}=\bar{n}_x+1,\dots,N_x-\bar{n}_x.
\end{eqnarray}

To solve the difference equation \eqnref{eigenvalue eq}, the standard strategy is to make the ansatz
\begin{equation}\label{ansatz}
\left(
\begin{array}{c} a_{m_x} \\
                b_{m_x}
\end{array}
\right)
=\left(
\begin{array}{c} a \\
                 b
\end{array}
\right)
\xi^{m_x}
\end{equation}
for some complex numbers $a$, $b$, and $\xi$ to be solved. Substituting this ansatz into \eqnref{eigenvalue eq}, we have
\begin{eqnarray}\label{solve xi}
&&E(k_y)
\left(
\begin{array}{c}
a \\
b
\end{array}
\right)
= \sum_{n_y=-\bar{n}_y}^{\bar{n}_y}\sum_{n_x=-\bar{n}_x}^{\bar{n}_x}
\xi^{n_x} e^{i n_y k_y}
\left(
  \begin{array}{cc}
     w^0_{n_x,n_y}+w^z_{n_x,n_y} & w^x_{n_x,n_y}-iw^y_{n_x,n_y} \\
     w^x_{n_x,n_y}+iw^y_{n_x,n_y} & w^0_{n_x,n_y}-w^z_{n_x,n_y}
  \end{array}
\right)
\left(
\begin{array}{c}
a \\
b
\end{array}
\right) \nonumber\\
&\implies&
\left.
\left(
  \begin{array}{cc}
     h_0(k_x,k_y)+h_z(k_x,k_y) -E(k_y) & h_x(k_x,k_y)-ih_y(k_x,k_y) \\
     h_x(k_x,k_y)+ih_y(k_x,k_y) & h_0(k_x,k_y)-h_z(k_x,k_y) - E(k_y)
  \end{array}
\right)
\right|_{e^{ik_x}\rightarrow \xi}
\left(
\begin{array}{c}
a \\
b
\end{array}
\right) \nonumber\\
&\equiv&
\left(
  \begin{array}{cc}
     h_0(\xi;k_y)+h_z(\xi;k_y) -E & h^*_{xy}(\xi;k_y) \\
     h_{xy}(\xi;k_y) & h_0(\xi;k_y)-h_z(\xi;k_y) - E
  \end{array}
\right)
\left(
\begin{array}{c}
a \\
b
\end{array}
\right)
=0,
\end{eqnarray}
where $h_{xy}(z;k_y)$ and $h_{xy}^*(z,k_y)$ are defined in \eqnref{hxy(z)} and \eqnref{hxy(z)*}, and $h_a(z;k_y)$ is similarly defined as
\begin{equation}
h_a(z;k_y):=h_a(k_x,k_y)\Big|_{e^{ik_x}\rightarrow z}
\equiv
\sum_{n_y=-\bar{n}_y}^{\bar{n}_y}\sum_{n_x=-\bar{n}_x}^{\bar{n}_x}
e^{in_yk_y}w^a_{n,n_y}z^n.
\end{equation}
This admits nontrivial solutions if and only if the determinant of the $2\times2$ matrix vanishes, i.e.,
\begin{equation}\label{h square}
h_x^2(\xi;k_y) + h_y^2(\xi;k_y) + h_z^2(\xi;k_y) - h_0^2(\xi;k_y) - E(k_y)^2
=0.
\end{equation}
Obviously, $\xi=e^{ik_x}$ for any $k_x\in\mathbb{R}$ gives a solution to \eqnref{h square} corresponding to $E(k_y)=h_0(k_x,k_y)\pm\abs{\mathbf{h}(k_x,k_y)}$, because we have \eqnref{E bulk}. If a linear superposition of \eqnref{ansatz} with various values of $\xi=e^{ik_x}$ that correspond to a specific $E(k_y)$ satisfies the boundary conditions \eqnref{eq a left}, \eqnref{eq a right}, \eqnref{eq b left}, and \eqnref{eq b right}, it is then an eigenstate of $\hat{H}_{N_x}(k_y)$. Subject to these boundary conditions, the permitted values of $E(k_y)$ for a given $k_y$ are reduced to a discrete spectrum. The eigenstates of this kind are said to be inherited from the spectrum of $\hat{H}(\mathbf{k})$ and thus referred to as \emph{bulk states}. The spectrum lines, i.e.\ $E(k_y)$ against $k_y$, of the bulk states of $\hat{H}_{N_x}(k_y)$ are clustered into two groups corresponding to $h_0(\mathbf{k})+\abs{\mathbf{h}(\mathbf{k})}$ and $h_0(\mathbf{k})-\abs{\mathbf{h}(\mathbf{k})}$, respectively;\footnote{\label{foot:tomography}When $N_x$ is large enough, the effect of imposing the open boundary condition upon the bulk states of $\hat{H}_{N_x}(k_y)$ is negligible, and consequently the bulk states of $\hat{H}_{N_x}(k_y)$ are approximately given by \eqnref{ket k} with $k_x=m\delta_x$ for $m=1,\dots,N_x$ and $\delta_x=2\pi/N_x$ as if the periodic boundary condition was still imposed in the $x$ direction. Approximately, the bulk spectrum lines of $\hat{H}_{N_x}(k_y)$ against $k_y$ can be understood as the superimposition of the tomographic cross sections of the spectrum of $\hat{H}(\mathbf{k})$ scanned along the $k_x$ direction.} we will simply refer to these bulk spectrum lines as upper or lower \emph{bulk bands}, respectively.

Meanwhile, it is possible that \eqnref{h square} admits solutions with $\abs{\xi}\neq1$. These solutions, if any, are referred to as \emph{edge states}, which are defined as energy eigenstates of $\hat{H}_{N_x}(k_y)$ that are localized on the left or right edge region and exponentially decayed towards the opposite edge. The localization on the left or right edge entails the condition $\abs{\xi}<1$ or $\abs{\xi^{-1}}<1$, respectively.
We will refer to the spectrum lines against $k_y$ of the edge states as left or right \emph{edge bands}, respectively.
As long as $N_x$ is large enough (so that the finite size effect is negligible), the wavefunctions and energies of edge states are independent of $N_x$, as opposed to bulk states (see \footref{foot:tomography}). The proof is given as follows.

For given $k_y$ and $N_x$, suppose we have a left edge mode solution with energy $E(k_y)$. This solution $\ket{\Psi(k_y)}$ must be a linear superposition of \eqnref{ansatz}, i.e., $\inner{m_x}{\Psi(k_y)}=(a_{m_x},b_{m_x})^\mathrm{T}=\sum_{i}(a_i,b_i)^\mathrm{T}\xi_i^{m_x}$ with particular values of $\xi=\xi_i$ and coefficients $(a_i,b_i)$. All $\xi_i$ satisfy \eqnref{h square} and $\abs{\xi_i}<1$. Furthermore, the linear superposition with the particular values of $\xi_i$ and $(a_i,b_i)$ satisfies the left boundary conditions \eqnref{eq a left} and \eqnref{eq b left}. The right boundary conditions \eqnref{eq a right} and \eqnref{eq b right} on the other hand become superfluous, since $\abs{\xi_i}<0$ implies $\xi_i^{m_x}\rightarrow0$ as $m_x$ approaches the right edge in the large $N_x$ limit. Because the right boundary conditions become irrelevant and all the other relevant conditions --- \eqnref{h square} for the bulk and \eqnref{eq a left} and \eqnref{eq b left} for the left boundary --- are independent of $N_x$, the linear superposition $(a_{m_x},b_{m_x})^\mathrm{T}=\sum_{i}(a_i,b_i)^\mathrm{T}\xi_i^{m_x}$ remains an eigenstate with the same energy $E(k_y)$, even if we change the value of $N_x$. The same argument can be repeated for right edge modes. Therefore, we have proven that the edge states are independent of $N_x$.

We will study other features of edge states in depth shortly.

\subsection{$h_0=0$ symmetry}\label{sec:h0=0 symmetry'}
In case of $h_0(\mathbf{k})=0$ (i.e., $w^0_{n_x,n_y}=0$ for all $n_x,n_y$), if $(a_{m_x},b_{m_x})$ is a solution to \eqnref{eqs}, it is straightforward to show that \eqnref{eq a left}, \eqnref{eq a bulk}, and \eqnref{eq a right} are interchanged with \eqnref{eq b right}, \eqnref{eq b bulk}, and \eqnref{eq b left}, respectively, under the replacement $(a_{m_x},b_{m_x}) \rightarrow (b_{N_x+1-m_x}^*,-a_{N_x+1-m_x}^*)$ and $E(k_y)\rightarrow-E(k_y)$.
That is, if $\ket{\Psi(k_y)}$ given in \eqnref{Psi ky} is an eigenstate of $\hat{H}_{N_x}(k_y)$ with the eigenvalue $E(k_y)$, then $\ket{\tilde{\Psi}(k_y)}$ give by
\begin{equation}\label{tilde Psi ky}
  \ket{\tilde{\Psi}(k_y)}=\sum_{m_x =1}^{N_x} \ket{m_x}\otimes \left( \begin{array}{c}
                                                 \tilde{a}_{m_x} \\
                                                 \tilde{b}_{m_x}
                                               \end{array} \right)
                        =\sum_{m_x =1}^{N_x} \ket{m_x}\otimes \left( \begin{array}{c}
                                                 b_{N_x+1-m_x}^* \\
                                                 -a_{N_x+1-m_x}^*
                                               \end{array} \right),
\end{equation}
is an eigenstate of $\hat{H}_{N_x}(k_y)$ with the opposite eigenvalue $-E(k_y)$.

Compared with \eqnref{tilde a b}, the relation between $\ket{\Psi(k_y)}$ and $\ket{\tilde{\Psi}(k_y)}$ can be understood as inherited from the $h_0=0$ symmetry of $\hat{H}_\mathrm{bulk}$. However, because the lattice translational invariance in $x$ direction is broken in the presence of the left and right boundaries, $\bar{m}_x$ in \eqnref{tilde a b} is now fixed to $N_x+1$.
Consequently, the $h_0=0$ symmetry relates a \emph{left} edge mode with a \emph{right} edge mode.

\subsection{Special cases}\label{sec:special cases}
The analysis in \secref{sec:eigenvalue problem} shows that the bulk states of $\hat{H}_{N_x}(k_y)$ is inherited from the spectrum of $\hat{H}_\mathrm{bulk}$ but says little about the edge modes. To learn more about the edge modes, it is advantageous to first conduct a close investigation on the special case that $h_0(\mathbf{k})=0$ and $h_z(\mathbf{k})$ is independent of $k_x$. That is, before studying generic cases, we focus on the case that $h_0(\mathbf{k})$ vanishes identically and $h_z(\mathbf{k})$ given in \eqnref{h(k)} takes the special form:
\begin{equation}\label{hz special}
h_z(k_x,k_y) \equiv h_z(k_y) = \sum_{n_y=-\infty}^\infty w^z_{0,n_y} e^{in_yk_y},
\quad \text{with}\ w^z_{n_x\neq0,n_y}=0.
\end{equation}
See \figref{fig:tori} for examples of special and generic cases and their difference.
Substituting $w^0_{n_x,n_y}=0$ (i.e., $h_0(\mathbf{k})=0$) and \eqnref{hz special} into \eqnref{eqs}, we obtain\footnote{Also see \appref{appendix} for the corresponding matrix form.}
\begin{subequations}\label{eqs'}
\begin{eqnarray}
  \label{eq' a left}
  E(k_y) a_{m_x} &=& h_z(k_y)a_{m_x} + \sum_{n_y=-\bar{n}_y}^{\bar{n}_y}e^{i n_y k_y}\sum_{n_x=-m_x+1}^{\bar{n}_x} (w^x_{n_x,n_y}-iw^y_{n_x,n_y}) b_{m_x+n_x},\nonumber\\
  && \qquad
  \text{for}\ m_x=1,\dots,\bar{n}_x, \\
  \label{eq' a bulk}
  E(k_y) a_{m_x} &=& h_z(k_y)a_{m_x} + \sum_{n_y=-\bar{n}_y}^{\bar{n}_y}e^{i n_y k_y}\sum_{n_x=-\bar{n}_x}^{\bar{n}_x} (w^x_{n_x,n_y}-iw^y_{n_x,n_y}) b_{m_x+n_x}, \nonumber\\
  &&\qquad
  \text{for}\ {m_x}=\bar{n}_x+1,\dots,N_x-\bar{n}_x, \\
  \label{eq' a right}
  E(k_y) a_{m_x} &=& h_z(k_y)a_{m_x} + \sum_{n_y=-\bar{n}_y}^{\bar{n}_y}e^{i n_y k_y}\sum_{n_x=-\bar{n}_x}^{N_x-m_x} (w^x_{n_x,n_y}-iw^y_{n_x,n_y}) b_{m_x+n_x},\nonumber\\
  && \qquad
  \text{for}\ m_x=N_x-\bar{n}_x+1,\dots,N_x, \\
  \label{eq' b left}
  E(k_y) b_{m_x}&=& -h_z(k_y)b_{m_x} + \sum_{n_y=-\bar{n}_y}^{\bar{n}_y}e^{i n_y k_y}\sum_{n_x=-\bar{n}_x}^{m_x-1} (w^x_{-n_x,n_y}+iw^y_{-n_x,n_y}) a_{m_x -n_x},\nonumber\\
  && \qquad
  \text{for}\ m_x=1,\dots,\bar{n}_x, \\
  \label{eq' b bulk}
  E(k_y) b_{m_x} &=& -h_z(k_y)b_{m_x} + \sum_{n_y=-\bar{n}_y}^{\bar{n}_y}e^{i n_y k_y}\sum_{n_x=-\bar{n}_x}^{\bar{n}_x} (w^x_{-n_x,n_y}+iw^y_{-n_x,n_y}) a_{m_x-n_x}\nonumber\\
  &&\qquad
  \text{for}\ {m_x}=\bar{n}_x+1,\dots,N_x -\bar{n}_x, \\
  \label{eq' b right}
  E(k_y) b_{m_x}&=& -h_z(k_y)b_{m_x} + \sum_{n_y=-\bar{n}_y}^{\bar{n}_y}e^{i n_y k_y}\sum_{n_x=m_x-N_x}^{\bar{n}_x}    (w^x_{-n_x,n_y}+iw^y_{-n_x,n_y})a_{m_x -n_x},\nonumber\\
  && \qquad
  \text{for}\ {m_x}=N_x-\bar{n}_x+1,\dots,N_x.
\end{eqnarray}
\end{subequations}

The coupled equations \eqnref{eqs'} may admit edge modes that are purely ``$a$-type'' (i.e., with $b_{m_x}=0$) or ``$b$-type'' (i.e., with $a_{m_x}=0$). Imposing $b_{m_x}=0$ on \eqnref{eq' a left}, \eqnref{eq' a bulk}, and \eqnref{eq' a right}, and substituting the ansatz \eqnref{ansatz} into \eqnref{eq' b bulk}, we obtain the condition for $a$-type solutions:
\begin{subequations}\label{a part eq}
\begin{eqnarray}
\label{a part eq, part 1}
a\text{-type:}\qquad E(k_y) &=& h_z(k_y),\\
\label{a part eq, part 2}
0 &=& h_{xy}(\xi;k_y),
\end{eqnarray}
\end{subequations}
where $h_{xy}(z;k_y)$ is defined in \eqnref{hxy(z)}.
The solutions of $\xi$ in \eqnref{a part eq, part 2} are exactly those roots $\xi_j$ of the polynomial $z^{\bar{n}_x}h_{xy}(z;k_y)$. If $\xi_j$ has multiplicity $\nu_j$, any linear superpositions of
\begin{equation}\label{am solutions}
a_{m_x} = m_x^\ell\xi_j^{m_x}, \qquad \ell=0,1,\dots,\nu-1,
\end{equation}
are also solutions to \eqnref{a part eq, part 2}.\footnote{If $F(z):=z^{\bar{n}_x}h_{xy}(z;k_y)\equiv \sum_{n=-\bar{n}_x}^{\bar{n}_x} w_n(k_y) z^{n+\bar{n}_x}$ can be factorized as $F(z)=(z-\xi_j)^{\nu_j}f(z)$, where $f(z)$ is a polynomial of $z$ and $f(\xi_j)\neq0$, we have $\left.\frac{\partial^\ell F(z)}{\partial z^\ell}\right|_{z=\xi_j}= \left.\sum_{n=-\bar{n}_x}^{\bar{n}_x} w_n(k_y) \frac{d^\ell}{dz^\ell}z^{n+\bar{n}_x}\right|_{z=\xi_j}=0$ for $\ell=1,\dots,\nu_j-1$, implying that $a_{m_x}=\frac{d^\ell}{d\xi_j^\ell}\xi_j^{m_x+\bar{n}_x}$ for $\ell=1,\dots,v_i-1$ are all solutions to $\sum_{m_x=-\bar{n}_x}^{\bar{n}_x} w_{m_x}(k_y) a_{m_x}=0$. These solutions of $a_{m_x}$ can be rearranged into \eqnref{am solutions}.}
In case $\xi_j=0$, all the solutions in \eqnref{am solutions} collapse to $a_{m_x}=0$, which is problematic and has to be subjected to closer scrutiny. The fact that $\xi_j=0$ is a root of $z^{\bar{n}_x}h_{xy}(z;k_y)$ with multiplicity $\nu_j$ means that $F(z):=z^{\bar{n}_x}h_{xy}(z;k_y) \equiv\sum_{n=-\bar{n}_x}^{\bar{n}_x} w_n(k_y) z^{n+\bar{n}_x} =z^{\nu_j}f(z)$, where $f(z)$ is a polynomial of $z$ and $f(0)\neq0$. Consequently, in $F(z)$, the coefficients of $z^0,z^1,\dots,z^{\nu_j-1}$ all vanish, namely, $w_n(k_y)=0$ for $n=-\bar{n}_x,-\bar{n}_x+1,\dots,-\bar{n}_x+\nu_j-1$. This implies that \eqnref{eq' b bulk} in fact does not involve $a_1,a_2,\dots,a_{\nu_j}$ at all, as the index $m_x$ thereof runs for $m_x=\bar{n}_x+1,\dots,N_x-\bar{n}_x$. Therefore, in case of $\xi_j=0$, \eqnref{eq' b bulk} leaves $a_1,a_2,\dots,a_{\nu_j}$ untouched, and thus we still have $\nu_j$ linearly independent solutions to \eqnref{a part eq, part 2}, or more precisely to \eqnref{eq' b bulk}, given as
\begin{equation}\label{am solutions'}
\quad a_{m_x} = \delta_{m_x,\ell}, \qquad \ell=1,\dots,\nu_j.
\end{equation}
To sum up, whether $\xi_j$ vanishes or not, the root $\xi_j$ with multiplicity $\nu_j$ corresponds to $\nu_j$ linearly independent solutions to \eqnref{eq' b bulk}.

To be left edge modes, we must have $\abs{\xi_j}<1$ and consequently we have $\sum_{j=1,\dots;\,\abs{\xi_j}<1} \nu_j$ independent candidate left edge solutions. Furthermore, the solutions must satisfy the additional equations that have not been considered yet, namely \eqnref{eq' b left} and \eqnref{eq' b right} (with $b_{m_x}=0$). Whereas \eqnref{eq' b right} becomes superfluous in the $N_x\rightarrow\infty$ limit as $\xi^{N_x}\rightarrow0$, \eqnref{eq' b left} gives additional $\bar{n}_x$ constraints.\footnote{\label{foot:degenerate constraints}Accidentally, \eqnref{eq' b left} might become degenerate (i.e., with rank smaller than $\bar{n}_x$) for a special value of $k_y$, thus imposing fewer than $\bar{n}_x$ constraints. This occurs when a different edge band touches or intercepts the trajectory of $h_z(k_y)$ (or, equivalent, $-h_z(k_y)$) at the special point of $k_y$, thus increasing the number of purely $a$-type ($b$-type) solutions at the interception point. This \emph{accidental} situation can be avoided by an adiabatic deformation.} Therefore, we have $-\bar{n}_x+\sum_{j=1,\dots;\,\abs{\xi_j}<1} \nu_j$ left edge modes.\footnote{\label{foot:extra constraints}Note that the role of \eqnref{eq' b left} is to impose extra $\bar{n}_x$ constraints and its exact form is not essential here. \secref{sec:uniform edge perturbation} will elaborate on this point.}
By comparison with \eqnref{key eq 1}, it turns out that, for a given $k_y$, the number of left edge modes that are purely $a$-type is given by the winding number $w(k_y)$, provided $w(k_y)\geq0$.
Meanwhile, it follows from \eqnref{a part eq, part 1} that, in the region of $k_y$ with $w(k_y)>0$, the spectrum line of $a$-type left edge modes follows the function of $h_z(k_y)$ (more precisely, in the $N_x\rightarrow\infty$ limit).
As a consequence, the $a$-type edge band branches out from the upper bulk band cluster at the positions of $k_y$ where the constant-$k_y$ loop touches noth poles or from the lower bulk band cluster at the positions where the constant-$k_y$ loop touches south poles, because the upper and lower bulk bands are given by $E=\pm\abs{\mathbf{h}(k_x,k_y)}$, respectively, and north and south poles correspond to $\abs{\mathbf{h}}=\pm h_z$, respectively. See \figref{fig:band scheme} for a schematic illustration. In case there are no north or south poles (i.e., $\abs{\mathbf{h}}\neq\pm h_z$ for all $\mathbf{k}\in T^2$), the $a$-type edge band is a ``standalone'' edge band, which does not touch the bulk bands anywhere.

On the other hand, imposing $a_{m_x}=0$ gives rise to the condition for $b$-type solutions:
\begin{subequations}\label{b part eq}
\begin{eqnarray}
\label{b part eq, part 1}
b\text{-type:}\qquad E(k_y) &=& -h_z(k_y),\\
\label{b part eq, part 2}
0 &=& h_{xy}^*(\xi;k_y).
\end{eqnarray}
\end{subequations}
Following the same argument as above leads to the conclusion that, according to \eqnref{key eq 4}, the number of left edge modes that are purely $b$-type is given by the $\abs{w(k_y)}$, provided $w(k_y)\leq0$.
Furthermore, the edge band of $b$-type modes follows the trajectory of $-h_z(k_y)$ and branches out from the upper bulk band cluster at the positions of south poles or from the lower bulk band cluster at the positions of north poles. See \figref{fig:band scheme} again.
Also note that purely $a$-type and $b$-type modes cannot appear at the same time, because they correspond to opposite signs of $w(k_y)$.

Similarly, recasting the ansatz \eqnref{ansatz} as
\begin{equation}
\left(
\begin{array}{c} a_{m_x} \\
                b_{m_x}
\end{array}
\right)
=\left(
\begin{array}{c} a \\
                 b
\end{array}
\right)
\xi^{N_x-m_x},
\end{equation}
we can obtain the relation between $w(k_y)$ and the number of right edge modes that are purely $a$-type or $b$-type, by virtue of \eqnref{key eq 2} and \eqnref{key eq 3}. This relation is completely ``symmetric'' to that for left edge modes under the $h_0=0$ symmetry as discussed in \secref{sec:h0=0 symmetry'}. In the following, we will focus solely on the left edge modes, as the right edge modes can be readily obtained via the symmetry.

It must be noted that we have not yet considered all possible edge modes, as it is well possible that an edge mode may have both nonzero $a_{m_x}$ and $b_{m_x}$.\footnote{As we will see in \secref{sec:edge perturbation I} and \secref{sec:edge perturbation II}, these edge bands are most likely induced by ``edge perturbation'', which is the topic of \secref{sec:uniform edge perturbation}.} However, for our main purpose (which will become clear later), we are interested primarily in the edge bands that connect the upper and lower bulk band clusters across the bulk gap. These edge bands, it turns out, must be purely $a$-type or $b$-type. The proof by contradiction is given as follows.

Suppose there exists a ``nonconforming'' edge band that connects the upper and lower bulk band clusters but does not follow the trajectory of $\pm h_z(k_y)$. The nonconforming band nevertheless unavoidably intercepts the trajectory of $\pm h_z(k_y)$ somewhere. Imposing $E(k_y)=h_z(k_y)$ or $E(k_y)=-h_z(k_y)$ upon \eqnref{eqs'} entails $b_{m_x}=0$ or $a_{m_x}=0$, respectively, implying that any edge mode must become purely $a$-type or $b$-type at the interception point with $E(k_y)=\pm h_z(k_y)$.
However, as we have just proved, the multiplicity of purely $a$-type or $b$-type edge modes at $k_y$ is given by $\abs{w(k_y)}$, except that $k_y$ happens to be an \emph{accidental} point (as mentioned in \footref{foot:degenerate constraints}).
Since the interception of a nonconforming edge band with $\pm h_z(k_y)$ is \emph{nonaccidental} as it is unavoidable, the assumption of existing a nonconforming edge band therefore leads to the contradiction against what we have just proved.
As a conclusion, the edge bands connecting the upper and lower bulk band clusters must exactly follow the trajectory of $\pm h_z(k_y)$ and be purely $a$-type or $b$-type.

Apart from the ``upper-to-lower'' edge bands (i.e., those connecting the upper bulk band cluster to the lower bulk band cluster in the increasing-$k_y$ direction; e.g., see (a) and (b) of \figref{fig:deformations}) and the ``lower-to-upper'' edge bands, there are other possibilities of edge bands. First, there might be ``upper-to-upper'' edge bands, which appear from the upper bulk band cluster and merge back into the upper bulk band cluster without touching the lower bulk band cluster. An upper-to-upper edge band may occur within the bulk gap (e.g., see (d) and (e) of \figref{fig:deformations}), above the top contour of the upper bulk band cluster, or inside the upper bulk cluster. Likewise, there might be ``lower-to-lower'' edge bands with both ends anchored to the lower bulk band cluster. As opposed to upper-to-lower and lower-to-upper edge bands, an upper-to-upper or lower-to-lower edge band may follow the trajectories of $\pm h(k_y)$, intercept $\pm h_z(k_y)$ somewhere, or remain untouched with $\pm h_z(k_y)$, since the interception with $\pm h_z(k_y)$ is no longer unavoidable. If it follows the trajectories of $\pm h(k_y)$, it is purely $a$-type or $b$-type. If it intercepts $\pm h_z(k_y)$ somewhere, it becomes locally $a$-type or $b$-type at the interception point.\footnote{\label{foot:interception point}This gives the accidental situation mentioned in \footref{foot:degenerate constraints}.}
Finally, there might be ``standalone'' edge bands, which stretch over the whole domain of $k_y$ and touch neither the upper bulk band cluster nor the lower one (see (o) of \figref{fig:deformations2}). The standalone edge band is possible only if the trajectory of $h_z(k_y)$ does not connect the upper and lower bulk band clusters, because otherwise the standalone edge band would unavoidably intercept $\pm h_x(k_y)$ somewhere, thus leading to the same contradiction as mentioned above for the case of upper-to-lower and lower-to-upper edge bands. Particularly, the standalone edge band and the upper-to-lower or lower-to-upper edge band cannot coexist in the special case. In case $\mathbf{h}(\mathbf{k})$ has no north or south poles,  all constant-$k_y$ loops are of the same winding number and the trajectory of $\pm h_z(k_y)$ will give rise to a standalone edge band with the multiplicity given by the winding number.\footnote{This is demonstrated in \secref{sec:special case IV}.}

\begin{figure}

\centering
    \includegraphics[width=0.85\textwidth]{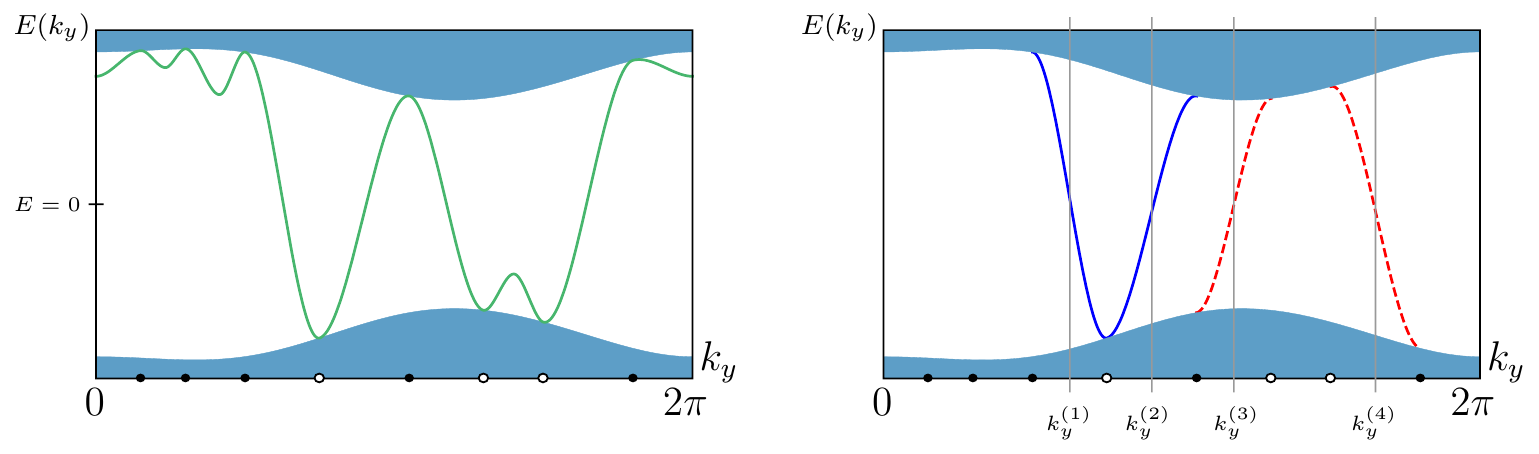}

\caption{\textit{Left}: A schematic plot of the bulk band clusters (upper and lower shaded areas) and the function of $h_z(k_y)$ (solid curve). The positions of north and south poles (assumed to be the same as given in \figref{fig:singularities}) are indicated by solid and hollow dots, at witch the trajectory of $h_z(k_y)$ branches out from the upper and lower bulk band clusters, respectively. \textit{Right}: An example of the left edge bands connecting the upper and lower bulk bands. Those of $a$-type are indicated by solid curves, which follow the trajectory of $h_z(k_y)$; those of $b$-type are indicated by dashed curves, which follow the trajectory of $-h(k_y)$. Here, we assume $w(k_y^{(1)}),w(k_y^{(2)})>0$ and $w(k_y^{(3)}),w(k_y^{(4)})<0$.}\label{fig:band scheme}
\end{figure}

\subsection{Counting the edge bands}\label{sec:counting edge bands}
According to the analysis in the previous subsection, the left edge bands connecting the upper and lower bulk band clusters must either be $a$-type and follow the trajectory of $+h_z(k_y)$ if $w(k_y)>0$ or be $b$-type and follow the trajectory of $-h_z(k_y)$ if $w(k_y)<0$. Furthermore, the multiplicity of the edge band is given by $\abs{w(k_y)}$.
Take \figref{fig:singularities} as an example of the north and south poles configuration for $\mathbf{h}(\mathbf{k})$. The trajectory of $h_z(k_y)$ against the bulk band clusters is schematically depicted on the left panel of \figref{fig:band scheme}.
The left edge bands connecting the upper and lower bulk band clusters take place in the intervals of $k_y$ that are delimited by a pair of north and south poles as depicted on the right panel of \figref{fig:band scheme}.

Let $N_1$ be the number (counted with multiplicities of degeneracy) of upper-to-lower left edge bands and $N_2$ be that of lower-to-upper left edge bands. Taking the right panel of \figref{fig:band scheme} as an example, where the depiction assumes $w(k_y^{(1)}),w(k_y^{(2)})>0$ and $w(k_y^{(3)}),w(k_y^{(4)})<0$, we have
\begin{eqnarray}\label{N1-N2}
N_1-N_2 &=& \abs{w(k_y^{(1)})} - \abs{w(k_y^{(2)})} - \abs{w(k_y^{(3)})} + \abs{w(k_y^{(4)})} \nonumber\\
&=& w(k_y^{(1)}) - w(k_y^{(2)}) + w(k_y^{(3)}) - w(k_y^{(4)}).
\end{eqnarray}
A moment of reflection tells that $N_1-N_2$ remains the same as given in the final line of \eqnref{N1-N2} even if we assume different signs of $w(k_y^{(i)})$, because flipping the sign of $w(k_y^{(i)})$ also exchanges an $a$-type edge band for an $b$-type edge band and \textit{vice versa}, thus flipping an upper-to-lower edge band into a lower-to-upper edge band or the other way around.
The final line of \eqnref{N1-N2} is exactly the Chern number as given in \eqnref{Chern number 2}. Therefore, for the special case, we have proved the bulk-boundary correspondence, which is explicitly stated as follows:
\begin{theorem}[Bulk-boundary correspondence]\label{theorem 1} Let $N_1$ be the number (counted with multiplicities of degeneracy) of the left edge bands connecting the upper bulk band cluster to the lower bulk cluster (in the increasing-$k_y$ direction) and $N_2$ be that of the left edge bands connecting the lower bulk band bluster to the upper bulk band cluster. We have
\begin{equation}
N_1-N_2 = \mathtt{C},
\end{equation}
where $\mathtt{C}$ is the Chern number of the bulk momentum-space Hamiltonian $\hat{H}(\mathbf{k})$.
Alternatively, $N_1$ is also the number of the right edge bands connecting the lower bulk band cluster to the upper bulk band cluster, and $N_2$ is also the number of the right edge bands connecting the upper bulk band cluster to the lower bulk band cluster.\footnote{This alternative description can be easily inferred under the $h_0=0$ symmetry, but it remains true even if the $h_0=0$ symmetry is broken (see \secref{sec:generic cases}).}
\end{theorem}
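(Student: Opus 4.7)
The plan is to prove the theorem first in the special case $h_0(\mathbf{k})=0$ with $h_z(\mathbf{k})=h_z(k_y)$, since the general case will reduce to it by an adiabatic deformation argument that preserves both $\mathtt{C}$ and $N_1-N_2$. In this special setting, the analysis of \secref{sec:special cases} has already established everything needed: any edge band that traverses the bulk gap must be purely $a$-type (following $+h_z(k_y)$) or purely $b$-type (following $-h_z(k_y)$), with local multiplicity $\lvert w(k_y)\rvert$; the $a$-type band exists exactly on those $k_y$ intervals where $w(k_y)>0$ and the $b$-type band on those where $w(k_y)<0$. So the combinatorics of $N_1$ and $N_2$ reduces to bookkeeping over these segments.

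The first step is to fix the pole configuration and label the horizontal lines $k_y^{(1)}<k_y^{(2)}<\cdots$ separating $D_S$ from $D_N$ as in \figref{fig:singularities}. Between consecutive poles the sign of $w(k_y)$ is constant, and at each pole $k_y^{(i)}$ the relevant edge band terminates on a bulk cluster: since $\lvert\mathbf{h}\rvert=\lvert h_z\rvert$ at a pole, the $+h_z$ curve meets the upper cluster at a north pole and the lower cluster at a south pole, while the $-h_z$ curve does the opposite. Combining this with the $a$- versus $b$-type assignment determined by $\mathrm{sgn}\,w(k_y)$ pins down, for each contiguous edge-band segment, whether it runs from the upper to the lower cluster (contributing $+1$ per multiplicity to $N_1$) or the reverse (contributing $+1$ per multiplicity to $N_2$).

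The second step is the actual sum. A direct case check on a single strip shows that each pole $k_y^{(i)}$ contributes $\pm w(k_y^{(i)})$ to $N_1-N_2$, with the sign matching the orientation assigned to the corresponding component of $\partial D_S$ in \secref{sec:more on winding number}. Summing over all $i$ yields $N_1-N_2=w(k_y^{(1)})-w(k_y^{(2)})+w(k_y^{(3)})-\cdots$, which is exactly $\mathtt{C}$ by \eqnref{Chern number 2}. The subtle point, and the step most prone to sign errors, is verifying that the upper-to-lower versus lower-to-upper assignment is consistent across all sign patterns of $w(k_y^{(i)})$; the clean way to handle it is to note that flipping the sign of $w(k_y)$ on a strip simultaneously swaps $a$-type with $b$-type (hence swapping which bulk cluster the band attaches to at each endpoint) and flips $\lvert w\rvert\mapsto -\lvert w\rvert$ in the signed count, so the two sign flips cancel and the bookkeeping is invariant.

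The main obstacle is not the special-case counting itself but controlling the accidental situations flagged in \footref{foot:degenerate constraints} and \footref{foot:interception point}, where additional edge bands can accidentally coincide with $\pm h_z(k_y)$ at isolated $k_y$: these have to be excluded as non-generic so that upper-to-upper and lower-to-lower edge bands do not spuriously contribute to $N_1-N_2$. I would dispose of them by invoking that an arbitrarily small adiabatic deformation removes such accidental degeneracies without altering $\mathtt{C}$ or the integer $N_1-N_2$. The alternative description in terms of right edge bands then follows immediately in the special case from the $h_0=0$ symmetry of \secref{sec:h0=0 symmetry'}, which relates each left edge mode of energy $E(k_y)$ to a right edge mode of energy $-E(k_y)$ and thereby interchanges upper-to-lower with lower-to-upper when passing from left to right edges.
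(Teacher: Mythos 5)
Your proposal is correct and follows essentially the same route as the paper: the special-case identification of gap-traversing edge bands as purely $a$-type or $b$-type following $\pm h_z(k_y)$ with multiplicity $\abs{w(k_y)}$, the north/south-pole attachment rule, the alternating sum reproducing \eqnref{Chern number 2} with the same sign-swap bookkeeping, the dismissal of accidental interceptions as non-generic, the $h_0=0$ symmetry for the right-edge statement, and the reduction of generic cases by adiabatic deformation all mirror \secref{sec:special cases}--\secref{sec:generic cases}. The only place the paper does substantially more work than your sketch is in justifying that final reduction itself --- the ``pinching'' construction of \figref{fig:tori} showing that an arbitrary gapped $\mathbf{h}(\mathbf{k})$ really can be adiabatically connected to a special case, and the enumeration of allowed and forbidden edge-band reconfigurations in \figref{fig:deformations} and \figref{fig:deformations2} (single-valuedness in $k_y$, no mixing of left and right edge bands) showing that $N_1-N_2$ cannot change along the deformation --- which you assert rather than argue.
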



\begin{figure}

\centering
    \includegraphics[width=0.8\textwidth]{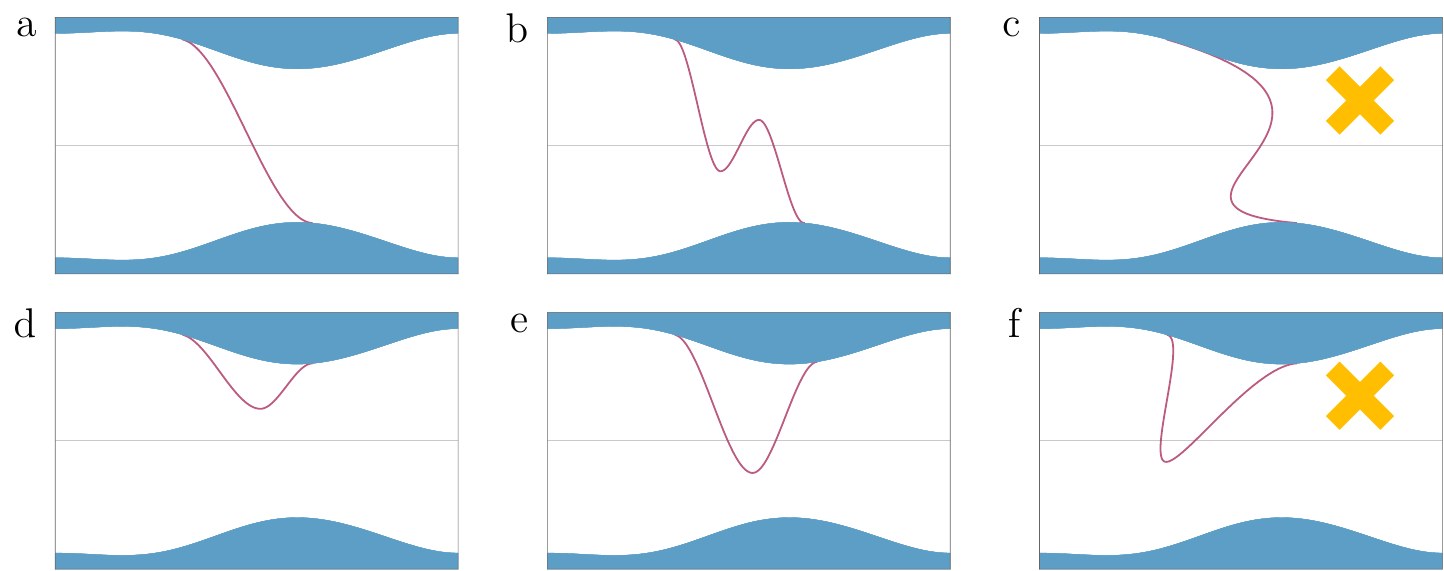}

\caption{More possible configurations of edge bands and impossible ones (marked with no-go signs.)}\label{fig:deformations}
\end{figure}

There are more possibilities and detailed features of edge bands not shown in \figref{fig:band scheme}. Let us discuss them with the illustration of \figref{fig:deformations}. An edge band connecting the upper and lower bulk bands as depicted in (a) of \figref{fig:deformations} might be continuously deformed into (b), thus giving rise to more interception points with $E=0$ in pairs with opposite signs of $dE(k_y)/dk_y$. However, it cannot be deformed into (c), where the edge band is not a single-valued function of $k_y$. This is because for any given $k_y$, the total number of energy eigenstates (both bulk and edge modes included) must equal $2N_x$, which would be violated if an edge band were not a single-valued function of $k_y$.
Furthermore, it is possible to have an upper-to-upper (or lower-to-lower) edge band as depicted in (d), which may or may not follow the trajectory of $\pm h_z(k_y)$. Similarly, (d) might be deformed into (e) and gives rise to interception points with $E=0$ in pairs with opposite signs of $dE(k_y)/dk_y$, but it cannot be deformed into (f) for the same reason as that for (c).
Finally, if the trajectory of $h_z(k_y)$ does not connect the upper and lower bulk bands, it is possible to have a standalone edge band, which may or may not intercept $E=0$ but anyway must be a single-valued function of $k_y$, for which the interception points with $E=0$ are in pairs with opposite signs of $dE(k_y)/dk_y$. Taking all the possible and forbidden cases into account, we can paraphrase \thmref{theorem 1} into the equivalent statement in terms of interception points with $E=0$ of \emph{all} edge bands (upper-to-lower, lower-to-upper, upper-to-upper, lower-to-lower and standalone all included) as follows:
\begin{theorem}[Bulk-boundary correspondence, equivalent description]\label{theorem 2}
Let $N'_1$ be the number (counted with multiplicities of degeneracy) of the interception points with $E=0$ of all left edge bands at which the derivative $dE(k_y)/dk_y$ of the edge band is negative, and $N'_2$ be the number of those at which the derivative $dE(k_y)/dk_y$ is positive. We have
\begin{equation}
N'_1-N'_2=\mathtt{C}.
\end{equation}
Alternatively, let $N''_1$ be the number of the interception points with $E=0$ of all right edge bands at which the derivative $dE(k_y)/dk_y$ of the edge band is positive, and $N''_2$ be the number of those at which the derivative $dE(k_y)/dk_y$ is negative.\footnote{With the $h_0=0$ symmetry, we have $N'_1=N''_1$ and $N'_2=N''_2$.} We have
\begin{equation}
N''_1-N''_2=\mathtt{C}.
\end{equation}
In case $h_0(\mathbf{k})\neq0$, the energy level is offset and the $E=0$ level is replaced by the middle line between the bulk energy gap.
\end{theorem}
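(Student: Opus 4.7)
The plan is to reduce \thmref{theorem 2} to \thmref{theorem 1} by tallying, band by band, the signed count of $E=0$ crossings, where a crossing with $dE/dk_y<0$ contributes $+1$ and one with $dE/dk_y>0$ contributes $-1$, and showing that the total over all left edge bands equals $N_1-N_2$. The classification of every edge band into one of the five types already enumerated in \secref{sec:special cases} (upper-to-lower, lower-to-upper, upper-to-upper, lower-to-lower, standalone), together with the single-valuedness of each band as a continuous function $E(k_y)$ of $k_y$ (a consequence of the requirement that the total eigenstate count at each $k_y$ equals $2N_x$, which rules out configurations such as (c) and (f) of \figref{fig:deformations}), provides the whole framework.

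The core step is an elementary intermediate-value argument: for any continuous curve $E(k_y)$ whose endpoints both have nonzero $E$, the signed count of $E=0$ crossings equals $\tfrac{1}{2}\bigl[\operatorname{sgn}E(k_y^{\mathrm{start}}) - \operatorname{sgn}E(k_y^{\mathrm{end}})\bigr]$; and for a closed (periodic) standalone curve the signed count is $0$. Applying this to each type, an upper-to-lower band contributes $+1$, a lower-to-upper band contributes $-1$, and each of the remaining three types contributes $0$. Summing over all left edge bands therefore yields
\begin{equation}
N'_1 - N'_2 \;=\; N_1\cdot(+1) + N_2\cdot(-1) + 0 \;=\; N_1 - N_2 \;=\; \mathtt{C}
\end{equation}
by \thmref{theorem 1}. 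The right-edge statement $N''_1 - N''_2 = \mathtt{C}$ follows from the same counting: the opposite slope convention in the definition of $N''_1$ (positive slope) relative to $N'_1$ (negative slope) is precisely compensated by the interchange of upper-to-lower and lower-to-upper multiplicities between the right and left edges that was already recorded in \thmref{theorem 1}.

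The main obstacle will be the treatment of nongeneric tangential crossings (where $dE/dk_y=0$ at $E=0$) and of edge bands that terminate exactly on the $E=0$ level at the boundary of a bulk band cluster. These configurations are nongeneric in the space of adiabatic deformations and, under an arbitrarily small adiabatic perturbation of the bulk Hamiltonian, split into pairs of transverse crossings of opposite slope whose contributions to $N'_1 - N'_2$ cancel; since $\mathtt{C}$ is invariant under adiabatic deformation, both sides of the identity are undisturbed. The directive to count ``with multiplicities of degeneracy'' in the statement of the theorem encodes this resolution automatically. For $h_0(\mathbf{k})\neq 0$ the entire argument carries over verbatim with $E=0$ replaced by the midgap line between the two bulk band clusters, since the intermediate-value reasoning is insensitive to a rigid energy offset.
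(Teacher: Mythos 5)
Your proposal is correct and follows essentially the same route as the paper: Theorem~2 is obtained as a paraphrase of Theorem~1 by noting that every edge band is a single-valued continuous function of $k_y$ (ruling out patterns like (c) and (f) of Fig.~\ref{fig:deformations}), so that upper-to-lower and lower-to-upper bands contribute net $\pm1$ signed crossings while upper-to-upper, lower-to-lower, and standalone bands contribute crossings only in canceling pairs of opposite slope. Your explicit intermediate-value bookkeeping and the perturbative treatment of tangential crossings merely formalize what the paper states more informally, so there is nothing to correct.
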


\subsection{Uniform edge perturbation}\label{sec:uniform edge perturbation}
Recall our proof of the bulk-boundary correspondence starting from \eqnref{eqs'}. For purely $a$-type edge modes, the obtained candidate edge solutions as linear superpositions of \eqnref{am solutions} or \eqnref{am solutions'} are further subject to the ``edge equations'' \eqnref{eq' b left} and \eqnref{eq' b right}, of which one becomes superfluous (due to edge-mode exponential decay) and the other imposes additional $\bar{n}_x$ constraints upon the candidate solutions. As the role of these edge equations is mainly to place extra $\bar{n}_x$ constraints, their exact forms seem inessential as far as edge mode counting is concerned (also recall \footref{foot:extra constraints}).
Consequently, we should expect that many results in \secref{sec:special cases} remain true, even if the edge equations are deviated from the open boundary condition.

Let us study in more detail what will happen if, on top of the open boundary condition, we introduce ``uniform edge perturbation'', which perturbs only the edge regions and is translationally invariant along the $y$ direction. That is, the edge perturbation leaves the bulk equations \eqnref{eq' a bulk} and \eqnref{eq' b bulk} unchanged and modifies the edge equations \eqnref{eq' a left}, \eqnref{eq' a right}, \eqnref{eq' b left} and \eqnref{eq' b right} into
\begin{subequations}\label{eqs''}
\begin{eqnarray}
  \label{eq'' a left}
  E(k_y) a_{m_x} &=& h_z(k_y)a_{m_x} + \sum_{n_y=-\bar{n}_y}^{\bar{n}_y}e^{i n_y k_y}\sum_{n_x=-m_x+1}^{\bar{n}_x} (w^x_{n_x,n_y}-iw^y_{n_x,n_y}) b_{m_x+n_x}\nonumber\\
  && \mbox{} + \sum_{n_y=-\bar{n}_y}^{\bar{n}_y}e^{i n_y k_y}\sum_{m'_x=1}^{\bar{n}_x} (\zeta^0_{m_x,m'_x;n_y}+\zeta^z_{m_x,m'_x;n_y}) a_{m'_x} + (\zeta^x_{m_x,m'_x;n_y}-i\zeta^y_{m_x,m'_x;n_y}) b_{m'_x},\nonumber\\
  && \qquad\qquad
  \text{for}\ m_x=1,\dots,\bar{n}_x, \\
  \label{eq'' a right}
  E(k_y) a_{m_x} &=& h_z(k_y)a_{m_x} + \sum_{n_y=-\bar{n}_y}^{\bar{n}_y}e^{i n_y k_y}\sum_{n_x=-\bar{n}_x}^{N_x-m_x}   (w^x_{n_x,n_y}-iw^y_{n_x,n_y}) b_{m_x+n_x}\nonumber\\
  && \mbox{} + \sum_{n_y=-\bar{n}_y}^{\bar{n}_y}e^{i n_y k_y}\sum_{m'_x=N_x}^{N_x-\bar{n}_x+1} (\zeta^0_{m_x,m'_x;n_y}+\zeta^z_{m_x,m'_x;n_y}) a_{m'_x} + (\zeta^x_{m_x,m'_x;n_y}-i\zeta^y_{m_x,m'_x;n_y}) b_{m'_x},\nonumber\\
  && \qquad\qquad
  \text{for}\ m_x=N_x-\bar{n}_x+1,\dots,N_x, \\
  \label{eq'' b left}
  E(k_y) b_{m_x}&=& -h_z(k_y)b_{m_x} + \sum_{n_y=-\bar{n}_y}^{\bar{n}_y}e^{i n_y k_y}\sum_{n_x=-\bar{n}_x}^{m_x-1} (w^x_{-n_x,n_y}+iw^y_{-n_x,n_y}) a_{m_x -n_x}\nonumber\\
  && \mbox{} + \sum_{n_y=-\bar{n}_y}^{\bar{n}_y}e^{i n_y k_y}\sum_{m'_x=1}^{\bar{n}_x} (\zeta^x_{m_x,m'_x;n_y}+i\zeta^y_{m_x,m'_x;n_y}) a_{m'_x} + (\zeta^0_{m_x,m'_x;n_y}-\zeta^z_{m_x,m'_x;n_y}) b_{m'_x},\nonumber\\
  && \qquad\qquad
  \text{for}\ m_x=1,\dots,\bar{n}_x, \\
  \label{eq'' b right}
  E(k_y) b_{m_x}&=&  -h_z(k_y)b_{m_x} + \sum_{n_y=-\bar{n}_y}^{\bar{n}_y}e^{i n_y k_y}\sum_{n_x=m_x-N_x}^{\bar{n}_x}    (w^x_{-n_x,n_y}+iw^y_{-n_x,n_y})a_{m_x -n_x}\nonumber\\
  && \mbox{} + \sum_{n_y=-\bar{n}_y}^{\bar{n}_y}e^{i n_y k_y}\sum_{m'_x=N_x}^{N_x-\bar{n}_x+} (\zeta^x_{m_x,m'_x;n_y}+i\zeta^y_{m_x,m'_x;n_y}) a_{m'_x} +(\zeta^0_{m_x,m'_x;n_y}-\zeta^z_{m_x,m'_x;n_y}) b_{m'_x},\nonumber\\
  && \qquad\qquad
  \text{for}\ {m_x}=N_x-\bar{n}_x+1,\dots,N_x,
\end{eqnarray}
\end{subequations}
where $\zeta^a_{m_x,m'_x;n_y}$ for $m_x=1,\dots,\bar{n_x},N_x-\bar{n}_x+1,\dots,N_x$ are arbitrary constant parameters that give edge perturbation upon the coefficients $w^a_{n_x,n_y}$ (for $n_x=m'_x-m_x$) at different edge sites indexed by $m_x$,\footnote{Recall that in a special case we have $w^0_{n_x,n_y}=0$ and $w^z_{n_x\neq0,n_y}=0$ as shown in \eqnref{hz special}.} and the condition of hermiticity demands\footnote{See \appref{appendix} for a matrix form of \eqnref{eqs'} with \eqnref{eqs''} and the condition of hermiticity.}
\begin{equation}\label{zeta}
\zeta^{a*}_{m_x,m'_x;n_y}=\zeta^a_{m'_x,m_x;-n_y}.
\end{equation}

Firstly, consider the case that $\zeta^0_{m_x,m'_x;n_y}$ and $\zeta^z_{m_x,m'_x;n_y}$ all vanish but some of $\zeta^x_{m_x,m'_x;n_y}$ and $\zeta^y_{m_x,m'_x;n_y}$ are nonzero. For a purely $a$-type (i.e., $b_{m_x}=0$) left edge mode, \eqnref{eq'' a left} and \eqnref{eq'' a right} still yield $E(k_y)=h_z(k_y)$ and \eqnref{eq'' b right} still becomes superfluous due to the exponential decay. The only relevant change is in \eqnref{eq'' b left}, which imposes different extra $\bar{n}_x$ constraints upon the linear superposition of the candidate solutions. Consequently, an $a$-type left edge band remains purely $a$-type and following the trajectory of $h_z(k_y)$ with the same multiplicity given by $w(k_y)$, but the corresponding wavefunctions are altered as the modified extra $\bar{n}_x$ constraints permit different coefficients for the superposition of the candidate solutions.
For $a$-type right, $b$-type left, and $b$-type right edge modes, the similar conclusion can also be drawn correspondingly. The edge bands that follow the trajectories of $\pm h(k_y)$ are said to be robust under the edge perturbation with vanishing $\zeta^0_{m_x,m'_x;n_y}$ and $\zeta^z_{m_x,m'_x;n_y}$.

Meanwhile, the edge perturbation with nonzero $\zeta^x_{m_x,m'_x;n_y}$ or $\zeta^y_{m_x,m'_x;n_y}$ might give rise to additional edge bands that are neither purely $a$-type nor purely $b$-type and do not follow the trajectories of $\pm h_z(k_y)$. We refer to them as ``edge-perturbation-induced'' edge modes. If an edge-perturbation-induced edge band intercepts the trajectories of $\pm h_z(k_y)$, it becomes purely $a$-type or $b$-type at the interception point, as a consequence of substituting $E=\pm h_z$ into \eqnref{eqs'} and \eqnref{eqs''} with $\zeta^0_{m,m';n_y}=0$ and $\zeta^z_{m,m';n_y}=0$.\footnote{This is what might happen as mentioned in \footref{foot:degenerate constraints}.}

Secondly, consider the case that some of $\zeta^0_{m_x,m'_x;n_y}$ and $\zeta^z_{m_x,m'_x;n_y}$ are nonzero. Under the edge perturbation of this kind, imposing $b_{m_x}=0$ upon \eqnref{eq'' a left} or \eqnref{eq'' a right} yields $E(k_y)a_{m_x}=E(k_y)a_{m_x}+\dots$, which is in direct conflict with $E(k_y)a_{m_x}=E(k_y)a_{m_x}$ implied by the bulk equation \eqnref{eq' a bulk} with $b_{m_x}=0$. Imposing $a_{m_x}=0$ leads to a similar contradiction. Therefore, edge modes are no longer purely $a$-type or $b$-type (unless accidentally at some points) even at the points where edge bands intercept the trajectories of $\pm h_z(k_y)$. The original $a$-type and $b$-type edge bands are deformed from the trajectories of $\pm h_z(k_y)$ and become mixed in $a_{m_x}$ and $b_{m_x}$ under the edge perturbation.
Meanwhile, just like the previous case of vanishing $\zeta^0_{m_x,m'_x;n_y}$ and $\zeta^z_{m_x,m'_x;n_y}$, the edge perturbation might give rise to more edge bands.

With the inclusion of edge perturbation, the $h_0=0$ symmetry discussed in \secref{sec:h0=0 symmetry'} is broken in general.
Nevertheless, in the case of $\zeta^0_{m_x,m'_x;n_y}=0$ and $\zeta^z_{m_x,m'_x;n_y}=0$, the energy spectrum still exhibits the symmetric feature that the energy eigenvalues always appear in pairs with opposite signs as predicted in \thmref{theorem 3}. For the edge modes following the trajectories of $\pm h_z(k_y)$, this symmetry associates a \emph{left} edge mode of $E=\pm h_z$ with a \emph{right} edge mode of $E=\mp h_z$, but the corresponding wavefunctions are no longer related with each other via \eqnref{tilde Psi ky}. On the other hand, for edge-perturbation-induced edge modes, this symmetry associates a \emph{left} (right) edge mode of $E=E_0$ with a different \emph{left} (right) edge mode of $E=-E_0$.\footnote{Edge-perturbation-induced edge modes arise as a result of edge perturbation. As the left (right) induced edge modes know nothing about the edge perturbation upon the right (left) region (because of edge-mode exponential decay), the symmetry of \thmref{theorem 3} must be between two left (right) induced edge modes.} Also see the comment in the last paragraph of \appref{appendix} for the comparison of the symmetry of \thmref{theorem 3} and the $h_0=0$ symmetry.

Under uniform edge perturbation, the original edge bands may or may not deformed and more edge bands may or may not arise, but in any case \thmref{theorem 1} and \thmref{theorem 2} still hold true as will be explained in the next subsection.
We will also demonstrate concrete examples of edge perturbation in \secref{sec:edge perturbation I} and \secref{sec:edge perturbation II}

\subsection{Generic cases}\label{sec:generic cases}
We have proved the bulk-boundary correspondence for the special case with $h_0(\mathbf{k})=0$ and $h_z(\mathbf{k})=h_z(k_y)$ and without edge perturbation. In this subsection, we will show that \thmref{theorem 1} and \thmref{theorem 2} in fact remain valid for any generic cases of $h_a(\mathbf{k})$ as well as under any uniform edge perturbation, mainly because any given $h_0(\mathbf{k})$ and $\mathbf{h}(\mathbf{k})$ can always be adiabatically deformed into a special case.

For any arbitrary map $h_a: \mathbf{k}\in T^2 \mapsto (\mathbf{h}(\mathbf{k}),h_0(\mathbf{k}))\in \left(\mathbb{R}^3\backslash\{0\}\right)\times\mathbb{R}$, it is trivial to see that the part $h_0(\mathbf{k})$ can be continuously deformed into $h_0(\mathbf{k})=0$ without deforming the part of $\mathbf{h}(\mathbf{k})$. On the other hand, at first thought, it seems impossible that an arbitrary $\mathbf{h}(\mathbf{k})$ can always be continuously deformed into a special case with $h_z(k_x,k_y)=h_z(k_y)$ while maintaining gapped in the bulk spectrum (i.e., $\mathbf{h}(\mathbf{k})\neq0$ for any $\mathbf{k}$). Take \figref{fig:tori} as an example, where $\mathbf{h}(\mathbf{k})$ as a map from $\mathbf{k}\in T^2\equiv[0,2\pi]\times[0,2\pi]$ to $\mathbf{h}\in\mathbb{R}^3\backslash\{0\}$ is illustrated as a torus embedded in the $\mathbf{h}$ space. As the special case depicted in (a) has constant-$k_y$ loops all in the same orientation while the generic case depicted in (b) has constant-$k_y$ loops paired in opposite orientations, it looks dubious that they can be adiabatically deformed into each other although they give the same Chern number. However, one can adiabatically deform (a) into (c) by ``pinching'' the tube of the torus in a way that the part not close to the origin is shrunk into a one-dimensional line (i.e., constant-$k_y$ loops are shrunk into single points) while the part close to the origin remain bulged. Likewise, (b) can be adiabatically deformed into (d). It is then obvious that (c) and (d) can be adiabatically deformed into each other. Therefore, via $\text{(a)} \leftrightarrow \text{(c)} \leftrightarrow \text{(d)} \leftrightarrow \text{(b)}$, it turns out the special case (a) and the generic case (b) can be adiabatically deformed into each other. Generically, $\mathbf{h}(\mathbf{k})$ might be more complicated than depicted in \figref{fig:tori} and the torus embedded in the $\mathbf{h}$ space may coil around the origin several times, depending on its Chern number. In any case, we can always adiabatically deform it into a special case with the same coiling structure by the same strategy illustrated in \figref{fig:tori}.

\begin{figure}

\centering
    \includegraphics[width=0.7\textwidth]{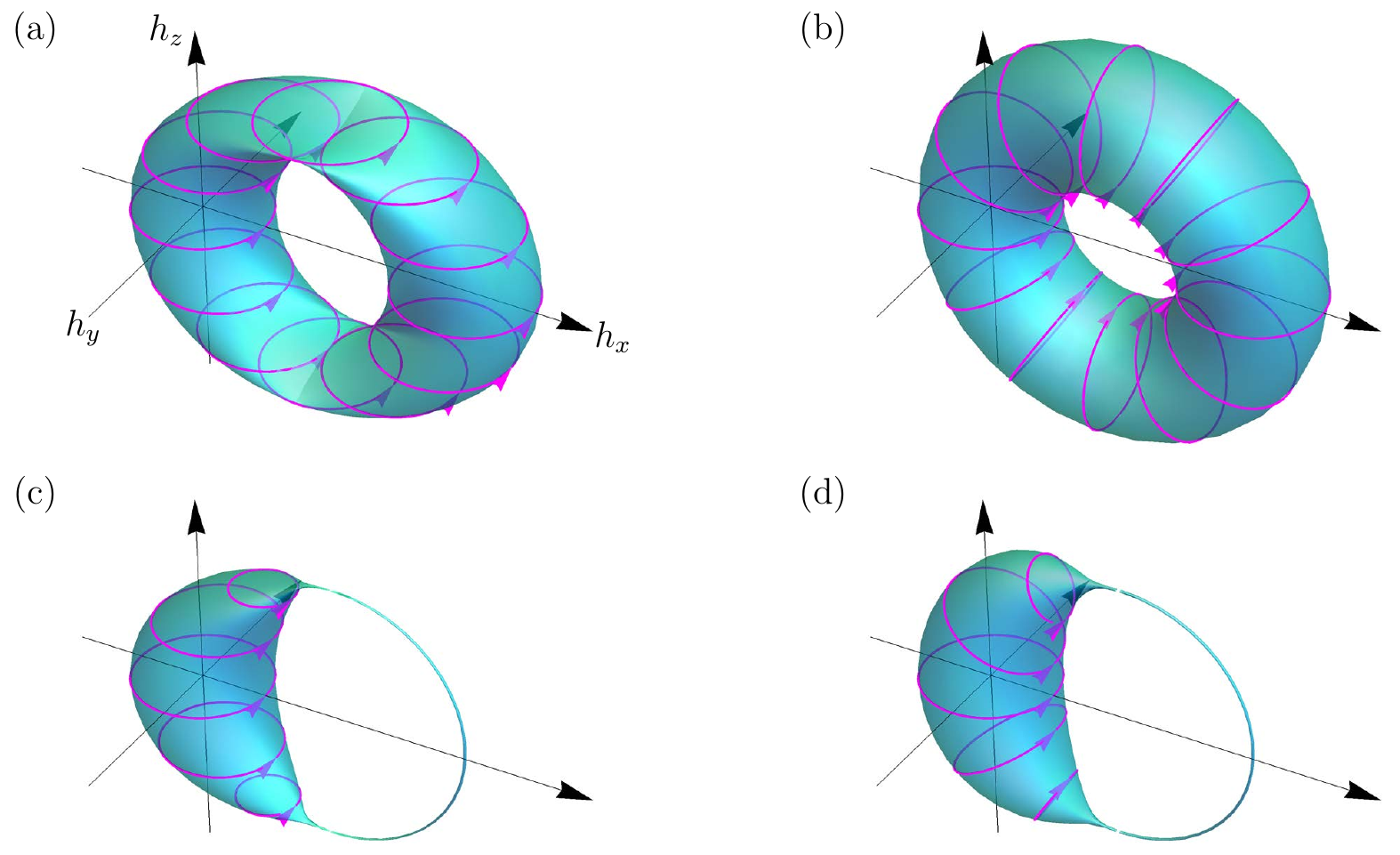}

\caption{$\mathbf{h}: \mathbf{k}\mapsto \mathbf{h}(\mathbf{k})$ is illustrated as a torus embedded in the $\mathbf{h}$ space. Trajectories $\{\mathbf{h}(k_x,k_y)|k_x\in[0,2\pi]\}$ of constant $k_y$ are depicted as oriented loops (called ``constant-$k_y$ loops''). (a): An example of a special case with $h_z(\mathbf{k})=h_z(k_y)$. (b): An example of a generic case. (c): An example as adiabatically deformed from (a). (d): An example as adiabatically deformed from (b). All four examples are of $\mathtt{C}=1$.}\label{fig:tori}
\end{figure}

Once we realize that any generic $h_a(\mathbf{k})$ can be adiabatically deformed from a special case, the energy spectrum of both bulk and edge modes for a generic $h_a(\mathbf{k})$ can be viewed as continuously deformed from that of a corresponding special case. Furthermore, if edge perturbation is introduced, the energy spectrum can also be viewed as continuously deformed from that without edge perturbation.
When a special case is adiabatically deformed into a generic case or continuously deformed to include uniform edge perturbation, the upper and lower bulk bands remain gapped but a few qualitative changes can happen for the edge bands as discussed in the following.

Firstly, if $h_0(\mathbf{k})=0$ is continuously deformed into $h_0(\mathbf{k})\neq0$ in the $\mathbf{k}$ space, for a given $k_y$, each discrete eigenvalue $E(k_y)$ (whether it corresponds to a bulk mode or an edge mode) in the spectrum of $\hat{H}_{N_x}(k_y)$ will continuously shift accordingly to \eqnref{h square}. The energy shift of each spectrum line is also continuous with respect to the change of $k_y$. As a result, the whole band structure of spectrum lines (both bulk and edge bands included) against $k_y$ is continuously deformed into a new one. Provided that $\abs{\nabla_\mathbf{k}h_0(\mathbf{k})}$ is small enough so that energies of the upper and lower bulk bands do not overlap, the system remains a bulk insulator. In this case, the numbers of upper-to-lower and lower-to-upper edge bands remain the same under the continuous deformation upon the whole band structure. Therefore, \thmref{theorem 1} and \thmref{theorem 2} still hold up, except that the energy level is offset and accordingly the $E=0$ level is replaced by the middle line between the bulk energy gap.

Secondly, unlike the special case, $h_z$ is no longer a function of $k_y$ alone and it makes no sense to talk about the trajectory of $\pm h_z(k_y)$.
Consequently, the upper-to-lower and lower-to-upper edge modes in general are no longer purely $a$-type or $b$-type. The positions of $k_y$ where the edge bands branch out from the bulk band clusters are no longer identified with north or south poles, either. Furthermore, the fact that the left (right) edge bands connecting the upper and lower bulk band clusters cannot intercept one another in a special case no longer holds true in a generic case. Two left (right) edge bands that do not intercept each other can be deformed into two left (right) edge bands that intercept each other and \textit{vice versa}.

Thirdly, deformations between (a) and (b) and between (d) and (e) in \figref{fig:deformations} can also take place when a special case is deformed into a generic case.

Finally, more different kinds of change are possible as depicted in \figref{fig:deformations2}.\footnote{\label{foot:only schematic}\figref{fig:deformations} and \figref{fig:deformations2} are only schematic. Under deformation, both the bulk bands and the edge bands are deformed, but only the deformation of the latter is depicted. Also note that edge bands in general can appear anywhere --- within the bulk gap, above the upper bulk band cluster, below the lower bulk band cluster, or even inside the bulk band clusters (see (a) and (b) of \figref{fig:edge perturbation I}). \figref{fig:deformations} and \figref{fig:deformations2} only depict the case of edge bands within the bulk gap, and accordingly the upper (lower) contour of the upper (lower) bulk band cluster is not shown.} We focus solely on left edge bands in the following discussion, as the situation for right edge bands is similar.
For (a)--(c): In a special case, an upper-to-lower (or lower-to-upper) edge band may have degeneracy (if the multiplicity $\abs{w(k_y)}>0$). When the special case is deformed into a generic case, the degeneracy is in general lifted and the edge band is split into many bands as shown as the change from (a) to (b). However, the change from (b) to (c) is impossible for the same reason for the no-go patterns in \figref{fig:deformations}.
For (d)--(f): A new upper-to-upper (or lower-to-lower) edge band may appear as splitting from the bulk band cluster as shown in (d);\footnote{A new upper-to-upper edge band can appear above the bulk band cluster or inside the cluster as well as within the bulk gap, although only the case within the bulk gap is depicted.} conversely, an upper-to-upper edge band may disappear as submerging into the bulk band cluster. The upper-to-upper edge band can be stretched downwards to the extent that it touches the lower (upper) bulk band cluster as shown in (e). The configuration in (e) can be further morphed into two edge bands connecting the upper and lower bulk band clusters as shown in (f). The converse changes from (f) to (e) and (e) to (d) are also possible.
For (g)--(i): An upper-to-lower edge band and a lower-to-upper edge band as shown in (g) can be deformed into an upper-to-upper edge band and a lower-to-lower left edge band as shown in (h). The converse from (h) to (g) is also possible. However, (g) cannot be morphed into (i) for the same reason for the no-go patterns in \figref{fig:deformations}. It is noteworthy that the pattern of (g) is impossible in a special case, where left edge bands connecting the upper and lower bulk band clusters follow the trajectory of $\pm h_z(k_y)$ and do not intercept one another. Nevertheless, (g) is possible in a generic case, as the pattern of (e) or (f) in a special case can be morphed into (g).
For (j)--(l): An upper-to-upper (or lower-to-lower) edge band and an upper-to-lower (or lower-to-upper) edge band as shown in (j) can be deformed to touch each other as shown in (k). They can be further morphed into a single edge band connecting the upper and lower bulk band clusters as shown in (l). The converse changes from (l) to (k) and (k) to (j) are also possible.
For (m)--(o): Two upper-to-upper (lower-to-lower) edge bands as shown in (m) can be merged into a single upper-to-upper (lower-to-lower) edge band as shown in (n). An edge band as in (n) can be deformed into a standalone edge mode as shown in (o). Conversely, a standalone edge band as in (o) can be deformed to touche the bulk band and become a nonstandalone edge band as in (n); an edge band as in (n) can split into two edge bands as in (m).

It should be emphasized that in \figref{fig:deformations2} two edge bands depicted in a same plot are both left (or, equivalently, right) edge bands. A left edge band and a right edge band cannot merge with each other as two left edge bands do, because the left edge band remains localized on the left edge while the right edge band on the right edge.
Therefore, under deformations, the left edge modes and right edge modes can be treated as independent of each other. If $h_0(\mathbf{k})=0$ remains true, the left edge modes and the right edge modes are related via \eqnref{tilde Psi ky}.

\begin{figure}

\centering
    \includegraphics[width=0.8\textwidth]{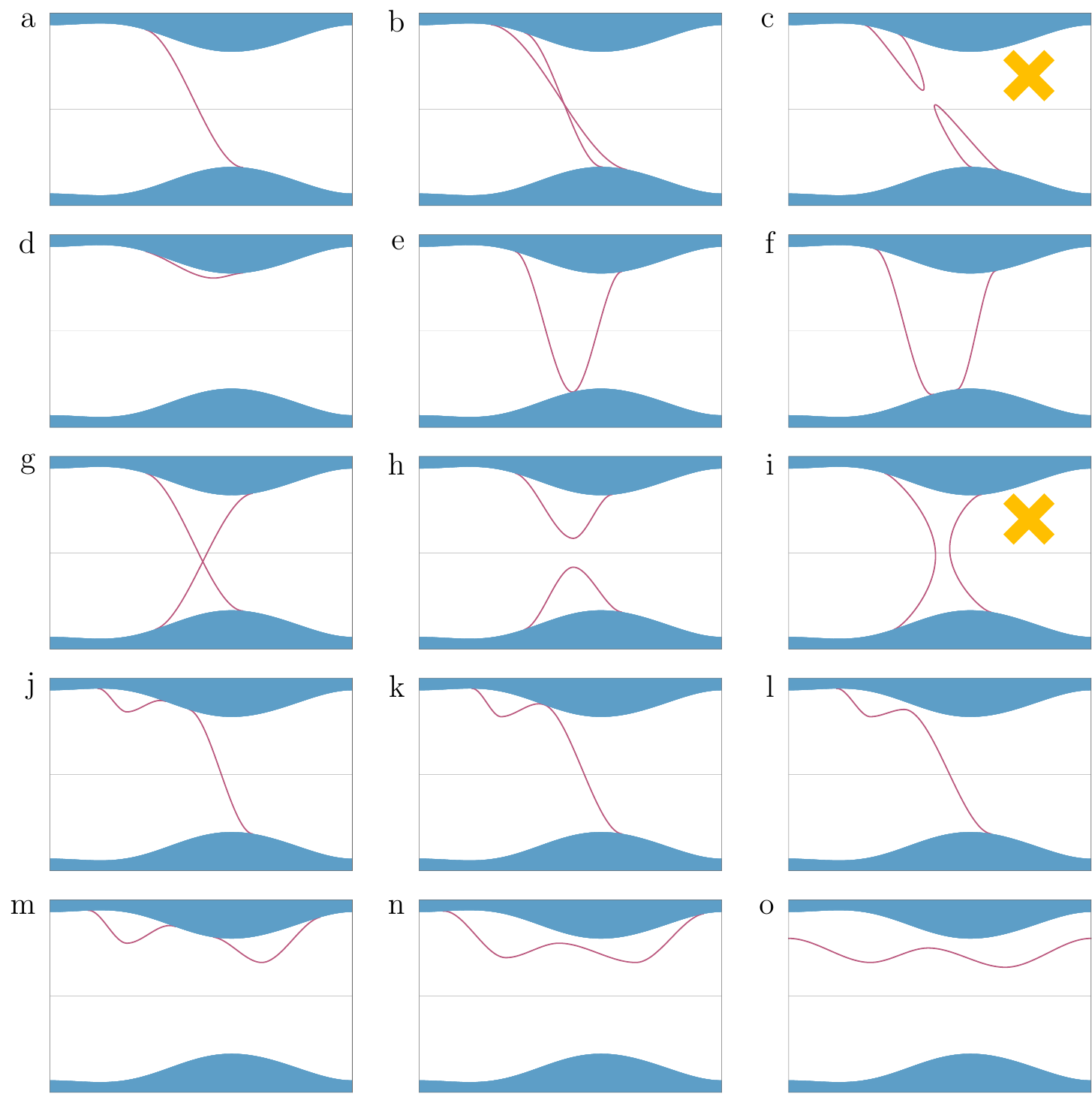}

\caption{Various examples of possible configurations of edge bands and impossible ones (marked with no-go signs) in generic cases.}\label{fig:deformations2}
\end{figure}

Taking into account all possible and forbidden deformations discussed above and the fact that the left and right edge modes do not mix with each other, we can easily draw the conclusion that, even though $N_{1,2}$, $N'_{1,2}$, and $N''_{1,2}$ as defined in \thmref{theorem 1} and \thmref{theorem 2} are in general altered under the deformation from a special case to a generic case and under arbitrary uniform edge perturbation, $N_1-N_2$, $N'_1-N'_2$, and $N''_1-N''_2$ nevertheless remain fixed.
Therefore, we have proved \thmref{theorem 1} and \thmref{theorem 2} for any arbitrary cases including those with uniform edge perturbation.

\subsection{Semi-special cases}\label{sec:semi-special cases}
For special cases, in addition to the topological features in relation to edge mode numbers, we also know a great deal about nontopological traits of the energy spectrum --- particularly, the trajectories of $\pm h_z(k_y)$ give rise to edge bands with the multiplicity given by $\abs{w(k_y)}$ and these edge bands are purely $a$-type or $b$ type. Therefore, even before conducting numerical computation for the energy spectrum of a strip, the energy spectrum (for both bulk and edge modes included) can be largely anticipated from $\mathbf{h}(\mathbf{k})$ alone, except for the finite size effect (which is negligible when $N_x$ is large enough) and possibly extra edge bands that do not follow the trajectories of $\pm h_z(k_y)$ and are most likely induced by edge perturbation.

For generic cases, although the bulk-boundary correspondence still holds up, we are unable to foresee nontopological features in advance as for special cases. However, if a generic case happens to be ``semi-special'', the nontopological features can be anticipated again. The semi-special condition is satisfied if $h_0(\mathbf{k})=0$ and every constant-$k_y$ loop lies on a planar surface in the $\mathbf{k}$ space. See (c) of \figref{fig:semi-special case I} and (f) of \figref{fig:semi-special case II} for examples of the semi-special case. Also note that many renowned models, such as the Rice-Mele model \cite{rice1982elementary}, the Haldane model \cite{haldane1988model}, and the Qi-Wu-Zhang model \cite{qi2006topological}, satisfy the semi-special condition.

Let $\hat{\mathbf{n}}(k_y)$ be the normal unit vector of the plane where the constant-$k_y$ loop lies (and the orientation of the plane is defined as that of the constant-$k_y$ loop). For a given $k_y$, we can always rotate the three axes  $(h_x,h_y,h_z)$ in the $\mathbf{k}$ space into the new ones $(h'_x,h'_y,h'_z)$ such that the direction of $h'_z$ is aligned with $\hat{\mathbf{n}}(k_y)$. Then, in the vicinity of $k_y$, we have $h'_z(k_x,k_y)\approx h'_z(k_y)$ and the system can be temporarily viewed as a special case by treating $h'_z$ as $h_z$. The results obtained  in \secref{sec:special cases} for the special case can then be straightforwardly carried over for the semi-special case in the vicinity of $k_y$, except that, in accordance with the rotation from $(h_x,h_y,h_z)$ to $(h'_x,h'_y,h'_z)$, the components $(a_{m_x},b_{m_x})$ are replaced by $(a^\bot_{m_x},b^\bot_{m_x})$, where the spinor $a_{m_x}\ket{\uparrow}+b_{m_x}\ket{\downarrow}$ is adaptively recast as $a^\bot_{m_x}\ket{\uparrow_\bot}+b^\bot_{m_x}\ket{\downarrow_\bot}$ with $\ket{\uparrow_\bot}$ and $\ket{\downarrow_\bot}$ being the two eigenstates of $\hat{\mathbf{n}}(k_y)\cdot\boldsymbol{\sigma}$.

Let $h_\bot(k_y)$ be a function of $k_y$ defined as $h_\bot:k_y\mapsto h'_z(k_y)$. We arrive at the conclusion for the semi-special case that the trajectories of $\pm h_\bot(k_y)$ against $k_y$ give rise to edge bands with the multiplicity given by $\abs{w_\bot(k_y)}$, which is the winding number of the constant-$k_y$ loop winding around the $h'_z(k_y)$ axis. These edge bands are no longer purely $a$-type or $b$-type; instead, they become purely $a^\bot$-type or $b^\bot$-type (i.e., either $b^\bot_{m_x}=0$ or $a^\bot_{m_x}=0$ for all $m_x$). That is, the semi-special condition entails ``spin-momentum locking'': the (pseudo)spin is either parallel or antiparallel to the direction of $\hat{\mathbf{n}}(k_y)$ in edge states that follow the trajectories of $\pm h_\bot(k_y)$.

It should be noted that in general one cannot associate any given constant-$k_y$ loop with a unique direction like $\hat{\mathbf{n}}(k_y)$, unless the semi-special condition is satisfied. Therefore, the notion of spin-momentum locking does not make sense beyond semi-special cases. The spin-momentum locking is not a topological trait in the strict sense that it is robust under any arbitrary adiabatic deformations; rather, it is robust only under deformations within the confines of the semi-special condition.

Finally, the semi-special condition might be satisfied only in a local open neighborhood of $k_y$. If this happens, all the consequences discussed above hold true locally inside the neighborhood.

\section{Examples}\label{sec:examples}
In order to illustrate various concepts and results we have obtained, we investigate a few concrete examples of special, semi-special, and generic cases, and perform numerical computation for the energy spectrum of a strip.\footnote{The exact numerical values of various parameters and $N_x$ are carefully chosen to give optimal illustration. Particularly, $N_x$ is chosen adaptively to be big enough so that the finite-size effect is negligible and small enough so that the spectrum lines are distinctively visible.} All the examples confirm what we have discussed, especially \thmref{theorem 1} and \thmref{theorem 2}. For simplicity, we set $h_0(\mathbf{k})=0$ and focus only on left edge modes as the right counterparts are simply related via the $h_0=0$ symmetry.
In \secref{sec:edge perturbation I} and \secref{sec:edge perturbation II}, we also take into account edge perturbation upon a special case.

\subsection{Special case I}
We first study a special case with the bulk momentum-space Hamiltonian $\hat{H}(\mathbf{k})=\mathbf{h}\cdot\boldsymbol{\sigma}$ given by
\begin{subequations}\label{special case I}
\begin{eqnarray}
  h_x(k_x,k_y) &=& 1-0.2 \sin k_y + \cos 2k_y +0.4 \cos k_x, \\
  h_y(k_x,k_y) &=& 0.4 \sin k_x, \\
  h_z(k_x,k_y) &=& h_z(k_y) = 0.16\cos k_y + 0.8 \sin 2k_y.
\end{eqnarray}
\end{subequations}
The map $\mathbf{h}:T^2\rightarrow \mathbb{R}^3\setminus\{0\}$ can be visualized as a torus embedded in the $\mathbf{h}$ space as shown in (c) of \figref{fig:special case I}. Each constant-$k_y$ loop is a regular circle, lying on a plane perpendicular to the $h_z$ axis (i.e., $h_z(\mathbf{k})=h_z(k_y)$). The torus as a tube of constant-$k_y$ loops coils around the origin $\mathbf{h}=0$ twice, thus yielding the Chern number $\mathtt{C}=2$.
The regions of the bulk band clusters for a strip are delineated by $\pm\min_{k_x}\abs{\mathbf{h}(k_x,k_y)}$ and $\pm\max_{k_x}\abs{\mathbf{h}(k_x,k_y)}$, provided $N_x$ is large enough. The regions of the bulk band clusters and the trajectory of $h_z(k_y)$ are depicted in (a) of \figref{fig:special case I}.

To see the energy spectrum of a strip, we numerically solve the eigenvalue problem \eqnref{eqs} --- or equivalently \eqnref{eqs'} --- associated wih \eqnref{special case I}. The energy spectrum with $N_x$ chosen to be $N_x=15$ is depicted in (b) of \figref{fig:special case I}. The schematic illustration of \figref{fig:band scheme} is affirmed: The edge bands follow the trajectories of $\pm h_z(k_y)$ with the multiplicity given by $\abs{w(k_y)}$. The wavefunctions for two particular points in the edge bands are depicted in terms of $\abs{a_{m_x}},\abs{b_{m_x}}$ in (d) and (c). They are localized at the left edge and exponentially decayed towards the right edge. Both of them are purely $a$-type.


\begin{figure}

\centering
    \includegraphics[width=0.95\textwidth]{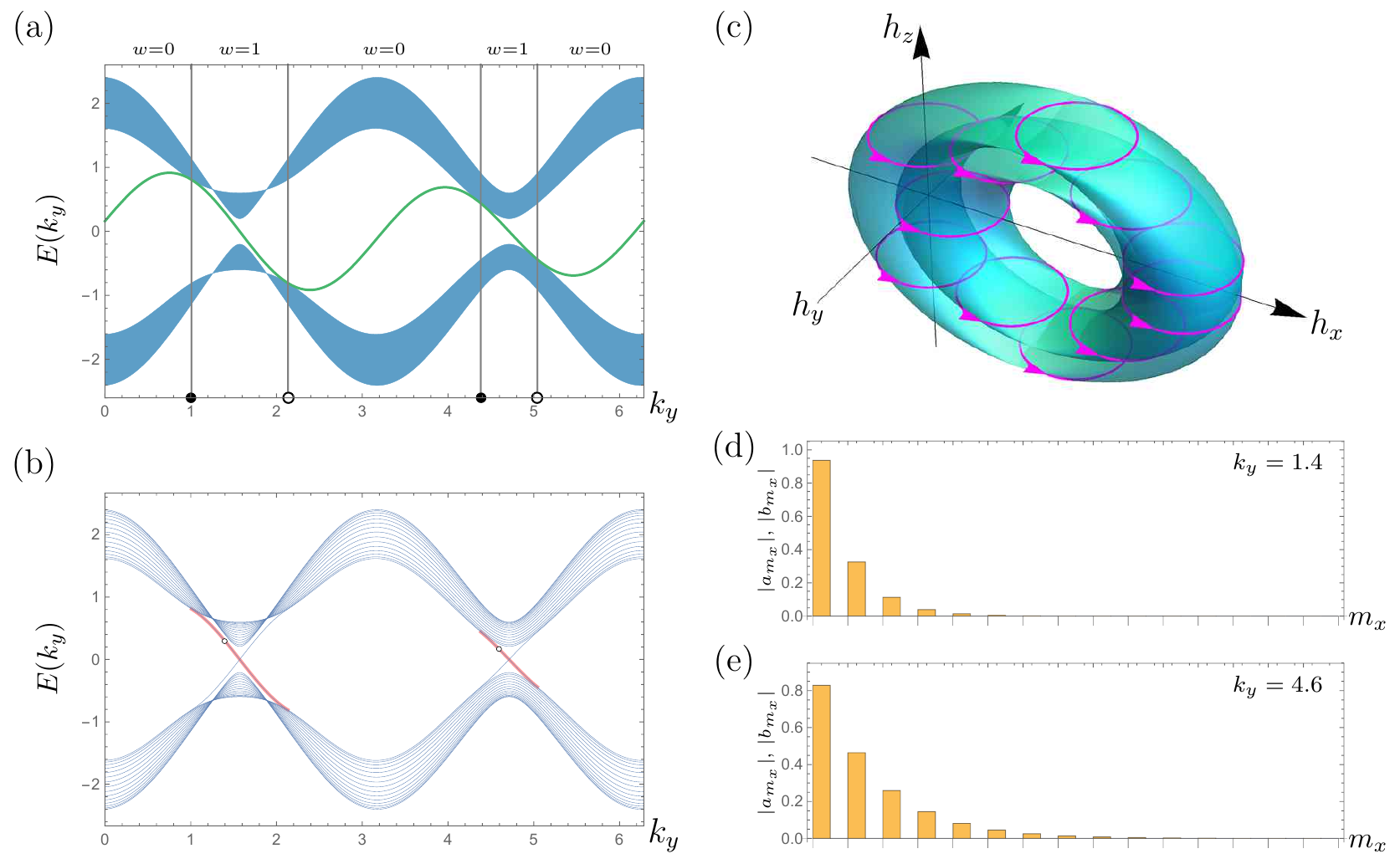}

\caption{\textbf{Special case I} with \eqnref{special case I}. (a): The anticipated regions of the bulk bands (shaded areas) and the trajectory of $h_z(k_y)$ (solid curve). The positions of the north and south poles are indicated by solid and hollow dots, respectively. The winding number $w(k_y)$ for different portions of $k_y$ is indicated on the top. (b): The energy spectrum of a strip with $N_x=15$. The edge bands appear within the bulk gap, following the trajectories of $\pm h_z(k_y)$. The left edge bands are highlighted by thick curves. (c): $\mathbf{h}: \mathbf{k}\mapsto \mathbf{h}(\mathbf{k})$ illustrated as a torus embedded in the $\mathbf{h}$ space. The oriented loops represent various constant-$k_y$ loops. (d) and (e): The edge mode wavefunctions presented as bar charts of $\abs{a_1},\abs{b_1},\dots,\abs{a_{N_x}},\abs{b_{N_x}}$ for the two dotted points as indicated in (b). They are localized at the left edge and purely $a$-type ($b_{m_x}=0$).}\label{fig:special case I}
\end{figure}

\subsection{Semi-special case I}
Next, we study a semi-special case given by
\begin{subequations}\label{semi-special case I}
\begin{eqnarray}
  h_x(k_x,k_y) &=& 1-0.2 \sin k_y + (1-0.4 \cos k_x ) \cos 2k_y, \\
  h_y(k_x,k_y) &=& 0.4 \sin k_x, \\
  h_z(k_x,k_y) &=& 0.16\cos k_y + (1-0.4 \cos k_x ) \sin 2k_y.
\end{eqnarray}
\end{subequations}
Each constant-$k_y$ loop is a regular circle lying on a planar surface in the $\mathbf{k}$ space. Unlike the special case, however, we no longer have $h_z(\mathbf{k})=h_z(k_y)$.
As depicted in (c) of \figref{fig:semi-special case I}, \eqnref{semi-special case I} can be viewed as deformed from \eqnref{special case I} in the similar manner that (a) and (b) in \figref{fig:tori} are related to each other.

The energy spectrum of a strip with $N_x=15$ is depicted in (b) of \figref{fig:semi-special case I}. The edge bands follow the trajectories of $\pm h_\bot(k_y)$. The wavefunctions of edge modes are no longer purely $a$-type or $b$-type but have both nonzero $a_{m_x}$ and nonzero $b_{m_x}$ as shown in (d) and (e). However, if we transform $a_{m_x}$ and $b_{m_x}$ into $a^\bot_{m_x}$ and $b^\bot_{m_x}$, the edge modes become purely $a^\bot$-type or $b^\bot$-type as shown in (f) and (g).


\begin{figure}

\centering
    \includegraphics[width=0.95\textwidth]{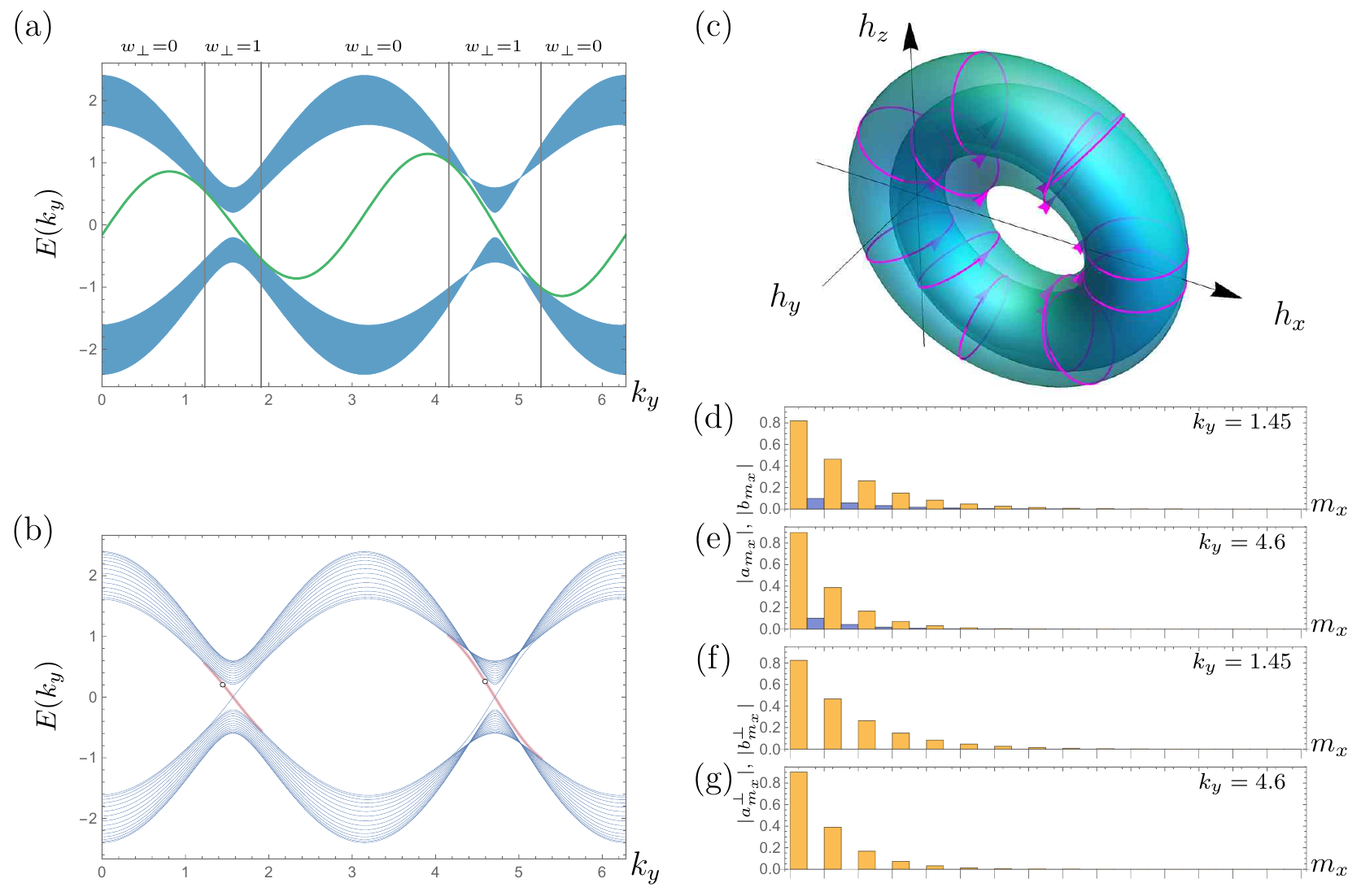}

\caption{ \textbf{Semi-special case I} with \eqnref{semi-special case I}. (a): The anticipated regions of the bulk bands and the trajectory of $h_\bot(k_y)$ (solid curve). The winding number $w_\bot(k_y)$ around the $h'(k_y)$ axis for different portions of $k_y$ is indicated on the top. (b): The energy spectrum of a strip with $N_x=15$. The edge bands appear within the bulk gap, following the trajectories of $\pm h_\bot(k_y)$. The left edge bands are highlighted by thick curves. (c): $\mathbf{h}: \mathbf{k}\mapsto \mathbf{h}(\mathbf{k})$ illustrated as a torus embedded in the $\mathbf{h}$ space. (d) and (e): The edge mode wavefunctions for the two dotted points as indicated in (b). They are localized at the left edge but no longer purely $a$-type or $b$-type. (f) and (g): The same wavefunctions of (d) and (e) shown in terms of $\abs{a^\bot_{m_X}}$ and $\abs{b^\bot_{m_X}}$. They are purely $a^\bot$-type.}\label{fig:semi-special case I}
\end{figure}

\subsection{Generic case I}
We further deform the semi-special case \eqnref{semi-special case I} into a generic case given by
\begin{subequations}\label{generic case I}
\begin{eqnarray}
  h_x(k_x,k_y) &=& 1-0.2 \sin k_y + (1-0.4 \cos k_x) \cos 2k_y,  \\
  h_y(k_x,k_y) &=& 0.4 \sin k_x, \\
  h_z(k_x,k_y) &=& 0.16 \cos k_y + (1-0.4 \cos k_x +0.3 \sin 2k_x) \sin 2k_y.
\end{eqnarray}
\end{subequations}
This case is truly generic in the sense that each constant-$k_y$ loop no longer lies on a planar surface in the $\mathbf{h}$ space as can be seen in (c) and (d) of \figref{fig:generic case I}.
The trajectories of edge bands cannot be anticipated in advance, but the energy spectrum shown in (b) of \figref{fig:generic case I} can be understood as deformed from (b) of \figref{fig:semi-special case I}. The edge modes are not purely $a$-type or $b$-type. Unlike the semi-special case, we can no longer associate the constant $k_y$-loop with a unique direction.


\begin{figure}

\centering
    \includegraphics[width=0.95\textwidth]{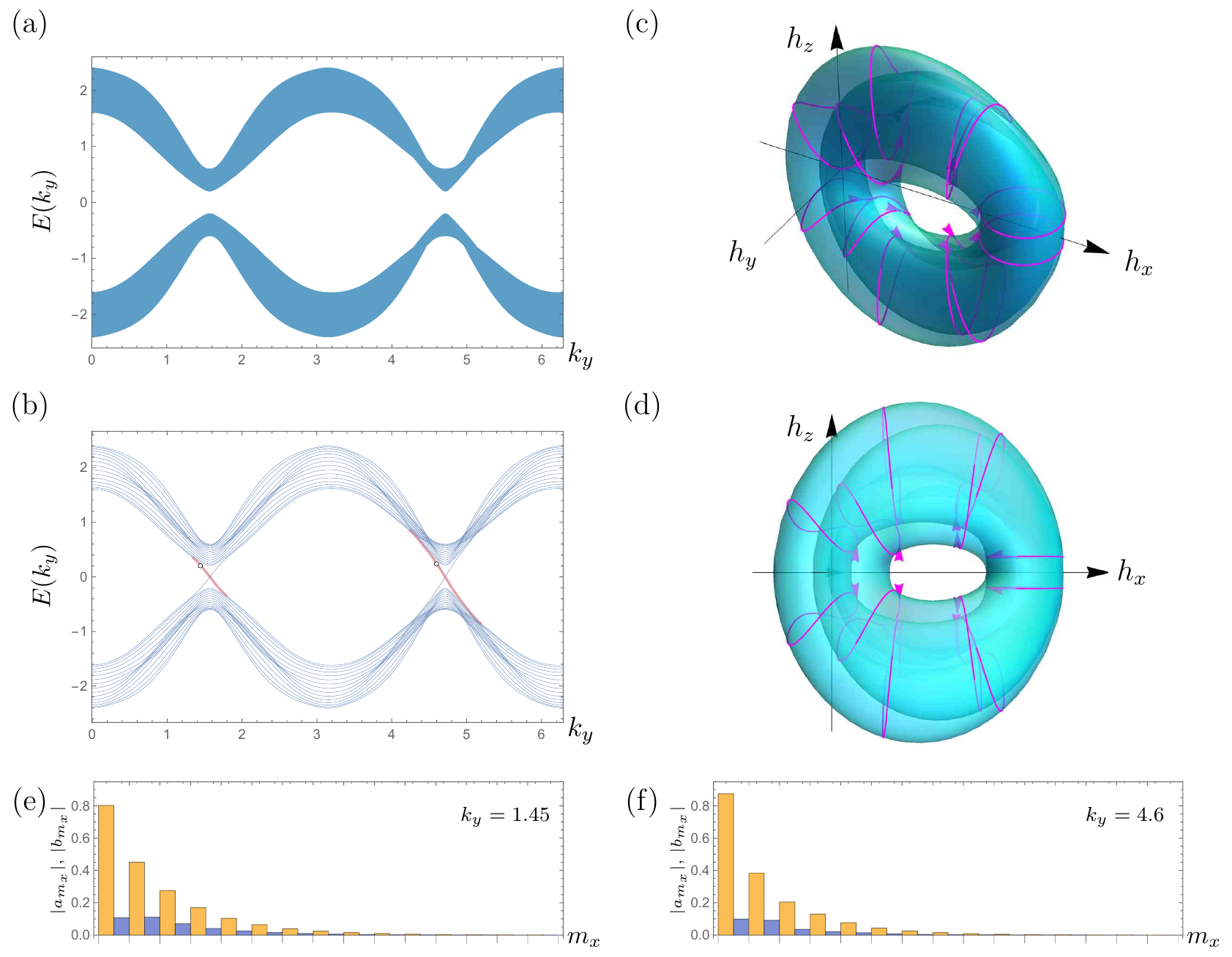}

\caption{\textbf{Generic case I} with \eqnref{generic case I}. (a): The anticipated regions of the bulk bands. (b): The energy spectrum of a strip with $N_x=15$. The edge bands appear within the bulk gap. The left edge bands are highlighted by thick curves. (c): $\mathbf{h}: \mathbf{k}\mapsto \mathbf{h}(\mathbf{k})$ illustrated as a torus embedded in the $\mathbf{h}$ space. (d): The side view of (c). (e) and (f): The edge mode wavefunctions for the two dotted points as indicated in (b).}\label{fig:generic case I}
\end{figure}

\subsection{Special case II}
We also study a special case with a different $\mathbf{h}(\mathbf{k})$ given by
\begin{subequations}\label{special case II}
\begin{eqnarray}
  h_x(k_x,k_y) &=& 0.8+\cos k_y +0.4\cos k_x + 0.5 \cos (2k_x-\pi), \\
  h_y(k_x,k_y) &=& 0.4\sin k_x + 0.5\sin (2k_x-\pi), \\
  h_z(k_x,k_y) &=& h_z(k_y) = 0.3\sin k_y.
\end{eqnarray}
\end{subequations}
As visualized in (c) of \figref{fig:special case II}, each constant-$k_y$ loop is a rhodonea curve, lying on a plane perpendicular to the $h_z$ axis. The rhodonea curves of different values of $k_y$ winds around the $h_z$ axis with the different winding numbers $w=0$, $w=1$, and $w=2$, consequently yielding the Chern number $\mathtt{C}=2$.

The energy spectrum with $N_x=25$ is depicted in (b) of \figref{fig:special case II}. The schematic illustration of \figref{fig:band scheme} is affirmed again: The edge bands follow the trajectories of $\pm h_z(k_y)$ with the multiplicity given by $\abs{w(k_y)}$. The wavefunctions of edge bands are purely $a$-type or $b$-type as depicted in (d)--(f).


\begin{figure}

\centering
    \includegraphics[width=0.95\textwidth]{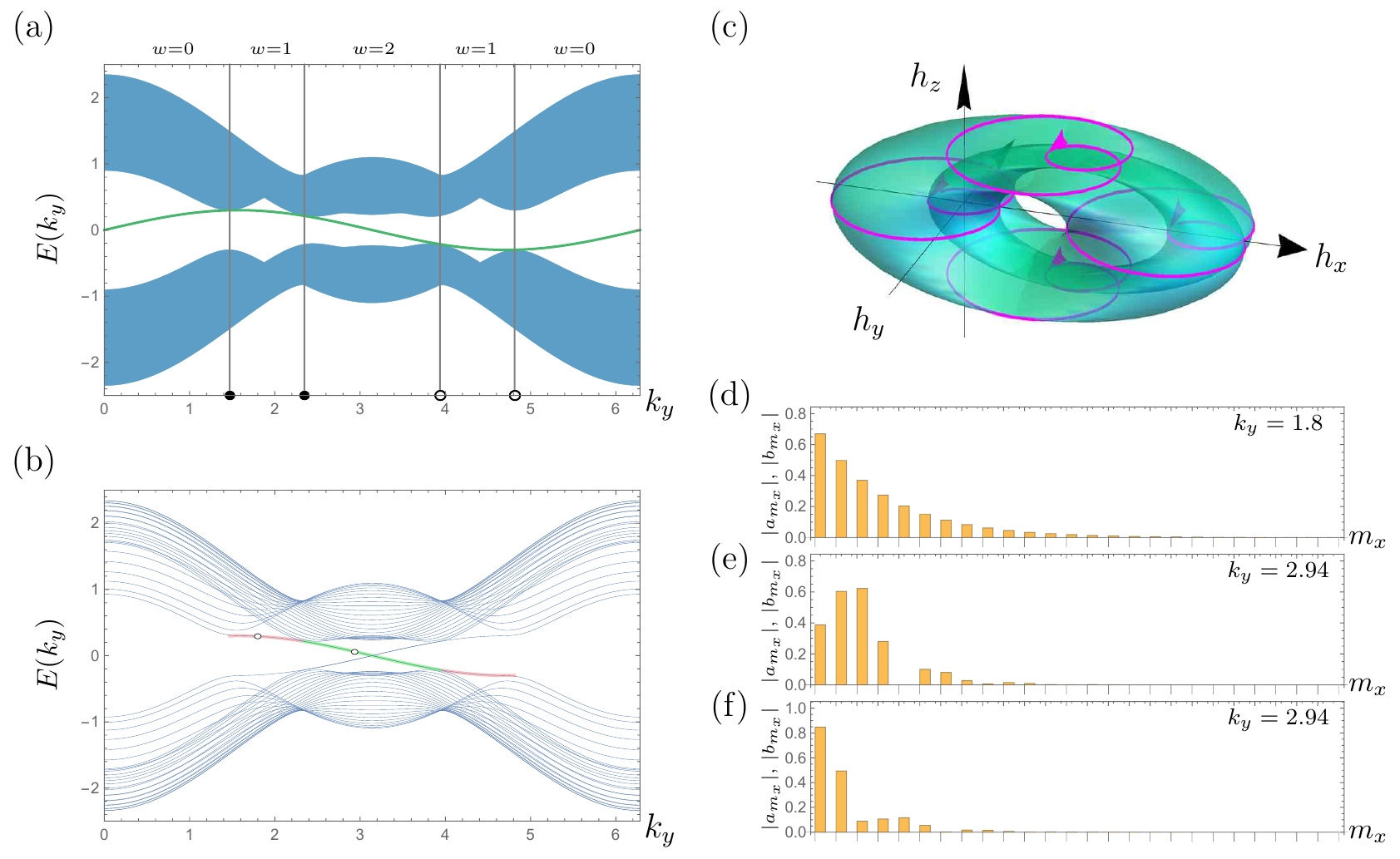}

\caption{\textbf{Special case II} with \eqnref{special case II}. (a): The anticipated regions of the bulk bands and the trajectory of $h_z(k_y)$. The positions of the north and south poles is indicated by solid and hollow dots, respectively. The winding number $w(k_y)$ for different portions of $k_y$ are indicated on the top. (b): The energy spectrum of a strip with $N_x=25$. The edge bands appear within the bulk gap, following the trajectories of $\pm h_z(k_y)$. The left edge bands are highlighted by thick curves. (c): $\mathbf{h}: \mathbf{k}\mapsto \mathbf{h}(\mathbf{k})$ illustrated as a torus embedded in the $\mathbf{h}$ space. Each constant-$k_y$ loop is a rhodonea curve. (d)--(f): The edge mode wavefunctions for the two dotted points as indicated in (b). They are localized at the left edge and purely $a$-type. Note that the edge bands in the $w=2$ portion of $k_y$ (e.g., $k_y=2.94$) are two-fold degenerate --- there are two eigenstates for the same edge band point as shown in (e) and (f).}\label{fig:special case II}
\end{figure}

\subsection{Semi-special case II}
Next, we study a semi-special case given by
\begin{subequations}\label{semi-special case II}
\begin{eqnarray}
  h_x(k_x,k_y) &=& 0.8+\big(1-0.4\cos k_x + 0.5\cos (2k_x-\pi)\big)\cos k_y, \\
  h_y(k_x,k_y) &=& 0.4\sin k_x + 0.5\sin (2k_x-\pi), \\
  h_z(k_x,k_y) &=& 0.3 \big(1-0.4\cos k_x + 0.5\cos (2k_x-\pi)\big)\sin k_y.
\end{eqnarray}
\end{subequations}
As depicted in (f) of \figref{fig:semi-special case II}, \eqnref{semi-special case II} can be viewed as deformed from \eqnref{special case II} in the similar manner that (a) and (b) in \figref{fig:tori} are related to each other.

The energy spectrum of a strip with $N_x=25$ is depicted in (d) and (e) \figref{fig:semi-special case II}. The edge bands follow the trajectories of $\pm h_\bot(k_y)$ with the multiplicity given by $\abs{w_\bot(k_y)}$.
The edge mode wavefunctions are shown in \figref{fig:semi-special case II p2}.
The edge modes are no longer purely $a$-type or $b$-type, but they are purely $a^\bot$-type or $b^\bot$-type.


\begin{figure}

\centering
    \includegraphics[width=0.95\textwidth]{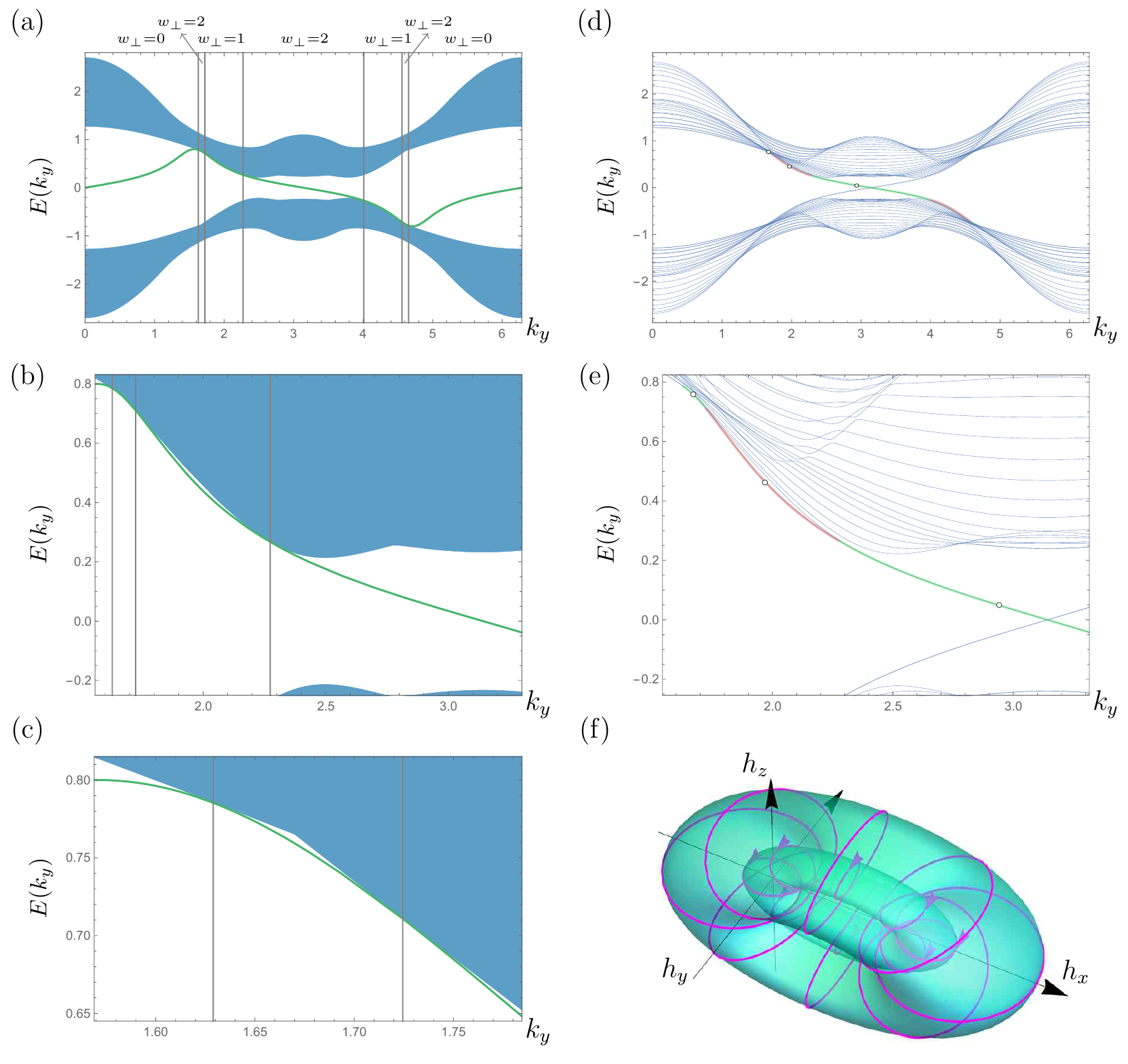}

\caption{\textbf{Semi-special case II} with \eqnref{semi-special case II}. (a): The anticipated regions of the bulk bands and the trajectory of $h_\bot(k_y)$ (solid curve). The winding number $w_\bot(k_y)$ around the $h'_z(k_y)$ axis for different portions of $k_y$ is indicated on the top. (b) and (c): The close-up views of (a). (d): The energy spectrum of a strip with $N_x=25$. The edge bands appear within the bulk gap, following the trajectories of $\pm h_\bot(k_y)$. The left edge bands are highlighted by thick curves. (e) The close-up view of (d). (f): $\mathbf{h}: \mathbf{k}\mapsto \mathbf{h}(\mathbf{k})$ illustrated as a torus embedded in the $\mathbf{h}$ space.}\label{fig:semi-special case II}
\end{figure}

\begin{figure}

\centering
    \includegraphics[width=0.95\textwidth]{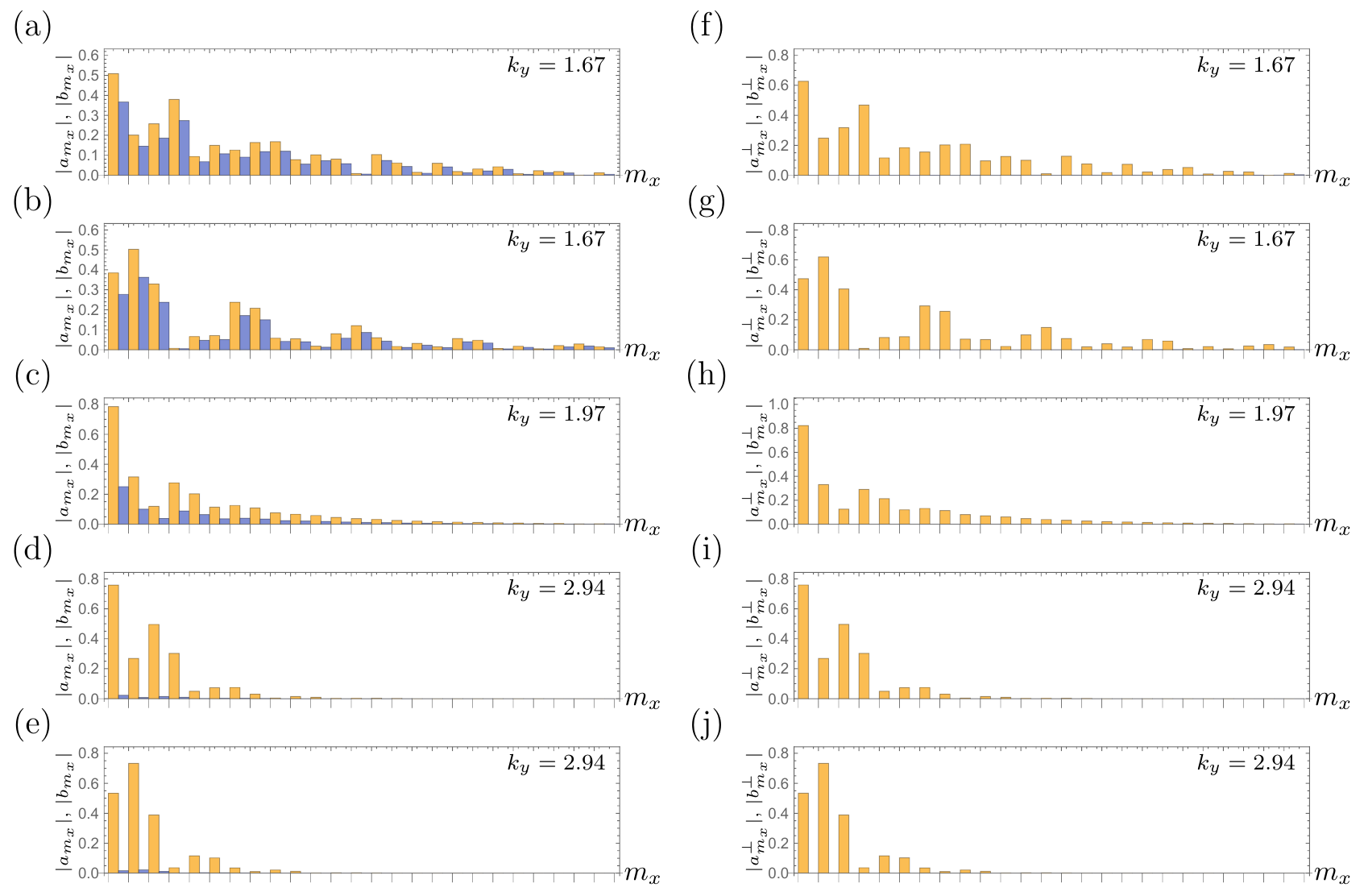}

\caption{\textbf{Semi-special case II} with \eqnref{semi-special case II} (Contiued). (a)--(e): The edge mode wavefunctions for the three dotted points as indicated in (d) and (e) of \figref{fig:semi-special case II}. They are localized at the left edge but not purely $a$-type or $b$-type. Note that the edge bands in the $w_\bot=2$ portions of $k_y$ (e.g., $k_y=1.67$ and $k_y=2.94$) are two-fold degenerate. (f)--(j): The same wave functions of (a)--(e) shown in terms of $\abs{a^\bot_{m_X}}$ and $\abs{b^\bot_{m_X}}$. They are purely $a^\bot$-type.}\label{fig:semi-special case II p2}
\end{figure}

\subsection{Generic case II}
We further deform the semi-special case \eqnref{semi-special case II} into a generic case given by
\begin{subequations}\label{generic case II}
\begin{eqnarray}
  h_x(k_x,k_y) &=& 0.8+\big(1-0.4\cos k_x + 0.5\cos (2k_x-\pi)\big)\cos k_y, \\
  h_y(k_x,k_y) &=& 0.4\sin k_x + 0.5\sin (2k_x-\pi), \\
  h_z(k_x,k_y) &=&  0.3 \big(1-0.4\cos k_x + 0.5\cos (2k_x-\pi)+0.3\sin k_x\big) \sin k_y.
\end{eqnarray}
\end{subequations}
In this case, each constant-$k_y$ loop no longer lies on a planar surface in the $\mathbf{h}$ space as can be seen in (d) and (e) of \figref{fig:generic case II}.
The trajectories of edge bands cannot be anticipated in advance, but the energy spectrum shown in (b) and (c) of \figref{fig:generic case II} can be understood as deformed from (d) and (e) of \figref{fig:semi-special case II}.
Particularly, the two-fold degenerate edge bands in the neighborhood of $k_y=\pi$ as shown in (d) of \figref{fig:semi-special case II} are split into two distinct edge bands as shown in (b) and (c) of \figref{fig:generic case II}, exhibiting the occurrence of $\text{(a)}\rightarrow \text{(b)}$ as depicted in \figref{fig:deformations2}.\footnote{This tells us that the condition for having a degenerate edge band is that the semi-special condition has to be satisfied and that the edge band multiplicity give by $\abs{w_\bot(k_y)}$ is greater than one.}
Furthermore, one of the split edge bands is merged with an upper-to-upper edge band in the manner depicted as $\text{(k)}\rightarrow \text{(l)}$ in \figref{fig:deformations2}. Finally, the two-fold degenerate upper-to-upper edge band appearing in the neighborhood of $k_y=1.67$ as shown in (e) of \figref{fig:semi-special case II} disappears in (b) and (c) of \figref{fig:generic case II}, exhibiting the reverse process as mentioned for (d) of \figref{fig:deformations2}.


\begin{figure}

\centering
    \includegraphics[width=0.95\textwidth]{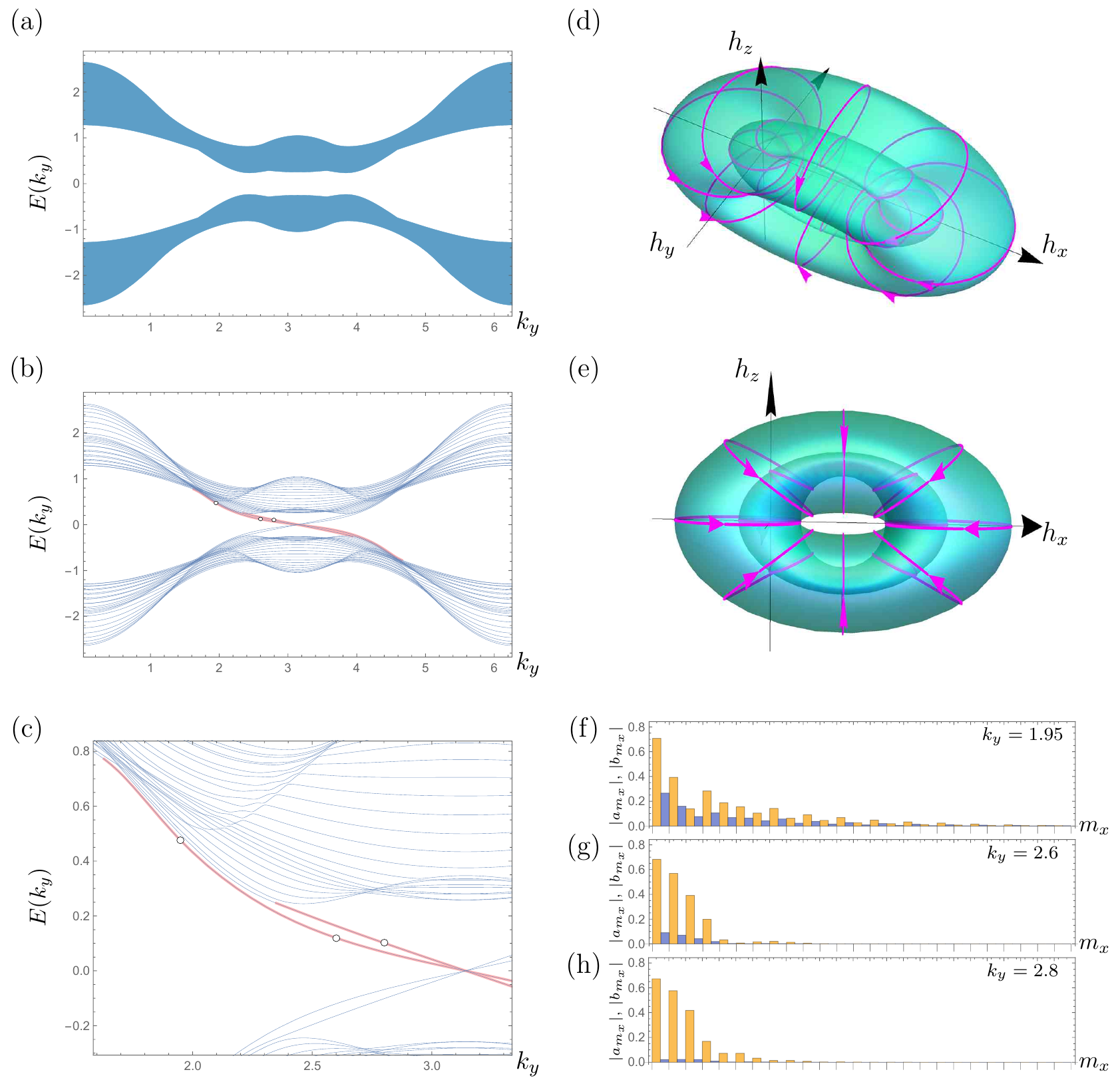}

\caption{\textbf{Generic case II} with \eqnref{generic case II}. (a): The anticipated regions of the bulk bands. (b): The energy spectrum of a strip with $N_x=25$. The edge bands appear within the bulk gap. The left edge bands are highlighted by thick curves. (c): The close-up view of (b). (d): $\mathbf{h}: \mathbf{k}\mapsto \mathbf{h}(\mathbf{k})$ illustrated as a torus embedded in the $\mathbf{h}$ space. (e): The side view of (d). (f)--(h): The edge mode wavefunctions for the three dotted points as indicated in (b) and (c). They are localized at the left edge but not purely $a$-type or $b$-type. Note that the two-fold degeneracy of edge bands in \figref{fig:semi-special case II} is now lifted.}\label{fig:generic case II}
\end{figure}

\subsection{Special case III}
Additionally, we consider a simple example that exhibits standalone edge bands. The function $\mathbf{h}(\mathbf{k})$ is given by
\begin{subequations}\label{special case III}
\begin{eqnarray}
  h_x(k_x,k_y) &=& 0.3+ (1-0.5\cos k_y)\cos k_x,  \\
  h_y(k_x,k_y) &=& (1-0.5\cos k_y)\sin k_x,  \\
  h_z(k_x,k_y) &=& h_z(k_y) = 0.5\sin k_y.
\end{eqnarray}
\end{subequations}
As shown in (c) and (d) of \figref{fig:special case III}, each constant-$k_y$ loop is a regular circle, lying on a plane perpendicular to the $h_z$ axis and winding around the $h_z$ axis once. As the torus of $\mathbf{h}(\mathbf{k})$ does not enclose the origin $\mathbf{h}=0$, the Chern number is $\mathtt{C}=0$. Furthermore, there are no south and north poles, so the trajectory of $h_z(k_y)$ does not touch the bulk band clusters. Consequently, it is expected that the trajectory of $h_z(k_y)$ gives rise to a standalone edge band with the multiplicity given by $\abs{w(k_y)}=1$. The numerical result indeed confirms the expectation as shown in \figref{fig:special case III}.


\begin{figure}
\centering
    \includegraphics[width=0.95\textwidth]{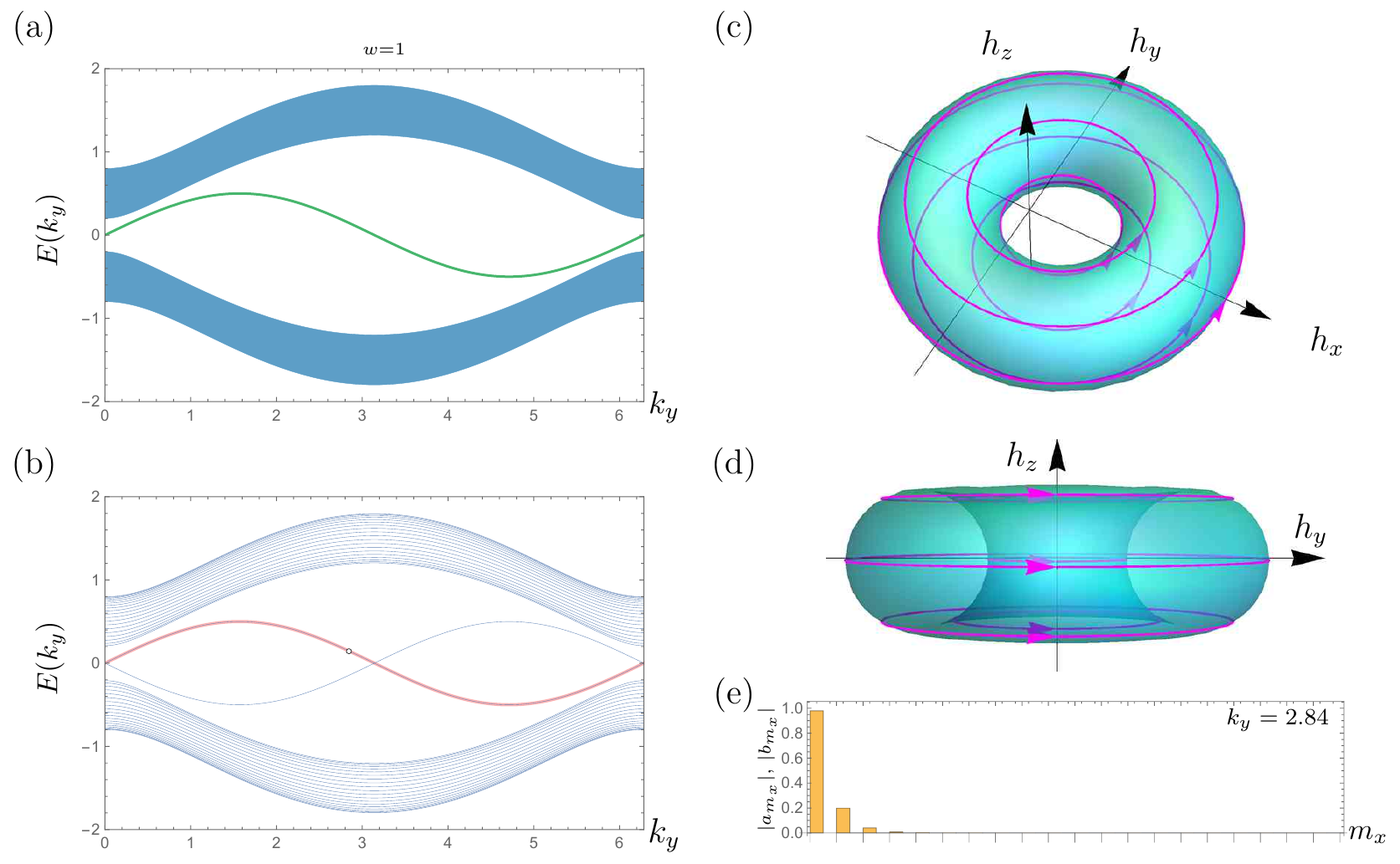}

\caption{\textbf{Special case III} with \eqnref{special case III}. (a): The anticipated regions of the bulk bands (shaded areas) and the trajectory of $h_z(k_y)$ (solid curve). The winding number $w(k_y)$ is $w=1$ everywhere. (b): The energy spectrum of a strip with $N_x=20$. The standalone edge bands appear within the bulk gap. The left edge band is highlighted by a thick curve. (c): $\mathbf{h}: \mathbf{k}\mapsto \mathbf{h}(\mathbf{k})$ illustrated as a torus embedded in the $\mathbf{h}$ space. (d) The side view of (c). (e): The edge mode wavefunction for the dotted point indicated in (b). It is localized at the left edge and purely $a$-type.}\label{fig:special case III}
\end{figure}

\subsection{Special case IV}\label{sec:special case IV}
Next, we consider a simple example with the function $\mathbf{h}(\mathbf{k})$ given by
\begin{subequations}\label{special case IV}
\begin{eqnarray}
  h_x(k_x,k_y) &=& 0.8+  (1-0.5\cos k_y) \cos k_x,\\
  h_y(k_x,k_y) &=&  (1-0.5\cos k_y)\sin k_x, \\
  h_z(k_x,k_y) &=& h_z(k_y) = 1+0.5\sin k_y,
\end{eqnarray}
\end{subequations}
which is deformed from \eqnref{special case III} simply by a translation in the $\mathbf{h}$ space as shown in (c) and (d) of \figref{fig:special case IV}.
The Chern number remains $\mathtt{C}=0$. As the trajectory of $h_z(k_y)$ now touches the upper bulk band cluster, the standalone edge bands in (b) of \figref{fig:special case III} become nonstandalone edge bands in (b) of \figref{fig:special case IV}, exhibiting the occurrence of $\text{(o)}\rightarrow\text{(n)}$ as depicted in \figref{fig:deformations2}.


\begin{figure}
\centering
    \includegraphics[width=0.95\textwidth]{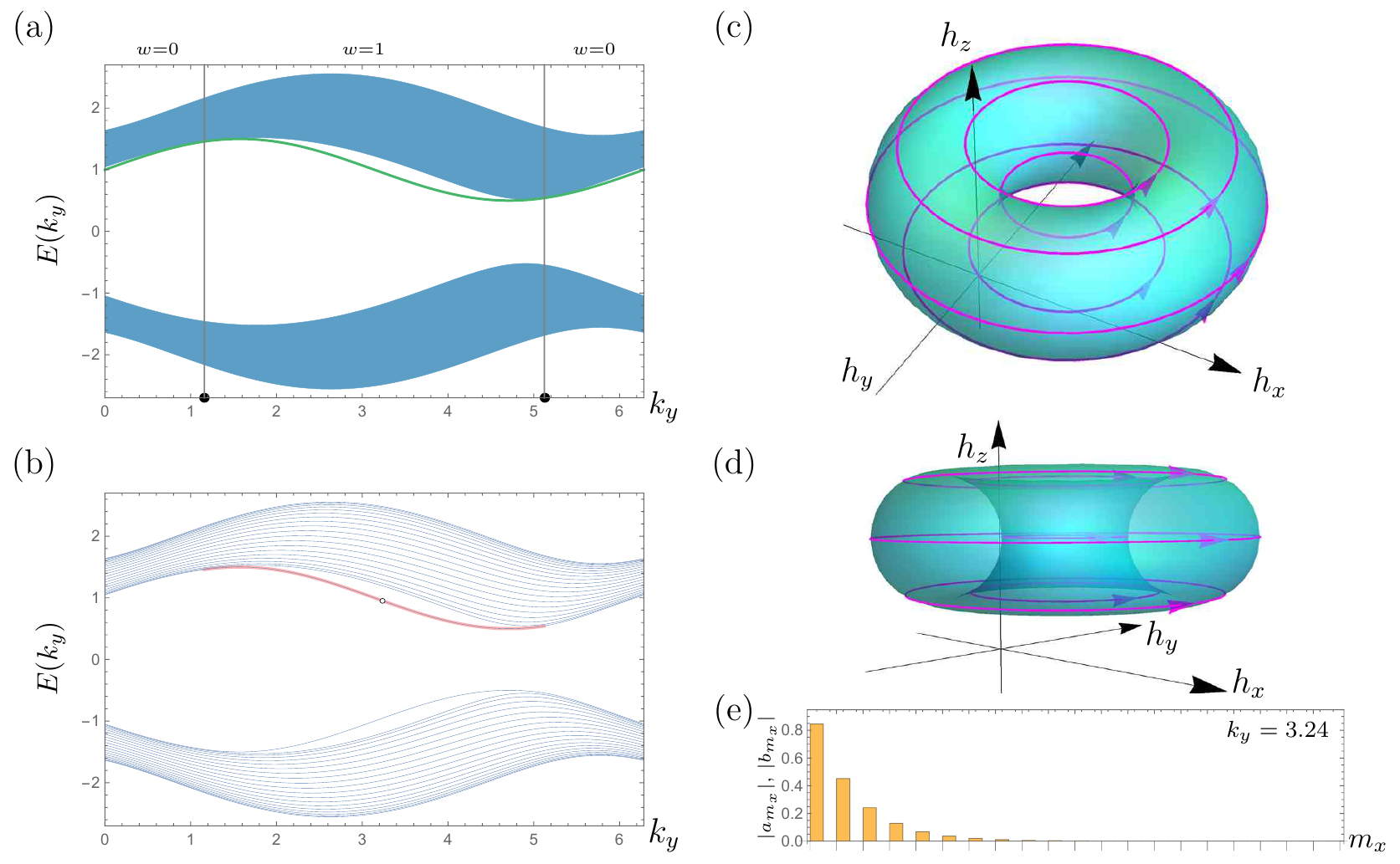}

\caption{\textbf{Special case IV} with \eqnref{special case IV}. (a): The anticipated regions of the bulk bands (shaded areas) and the trajectory of $h_z(k_y)$ (solid curve). The positions of the noth poles are indicated by solid dots. The winding number $w(k_y)$ for different portions of $k_y$ is indicated on the top. (b): The energy spectrum of a strip with $N_x=20$. The edge bands appear within the bulk gap. The left edge band is highlighted by a thick curve. (c): $\mathbf{h}: \mathbf{k}\mapsto \mathbf{h}(\mathbf{k})$ illustrated as a torus embedded in the $\mathbf{h}$ space. (d): The side view of (c). (e): The edge mode wavefunction for the dotted point indicated in (b). It is localized at the left edge and purely $a$-type.}\label{fig:special case IV}
\end{figure}

\subsection{Special case V}\label{sec:special case V}
Finally, we also study a special case with the function $\mathbf{h}(\mathbf{k})$ given by
\begin{subequations}\label{special case V}
\begin{eqnarray}
  h_x(k_x,k_y) &=& 0.5-0.6 \sin k_y + \cos 2k_y +0.5 \cos k_x, \\
  h_y(k_x,k_y) &=& 0.5 \sin k_x, \\
  h_z(k_x,k_y) &=& h_z(k_y) = 0.3\cos k_y + 0.5 \sin 2k_y.
\end{eqnarray}
\end{subequations}
As visualized in (c) of \figref{fig:special case V}, embedding of $\mathbf{h}(\mathbf{k})$ in the $\mathbf{h}$ space is similar to (c) of \figref{fig:special case I} except that now the torus coils around the origin $\mathbf{h}=0$ only once, thus yielding the Chern number $\mathtt{C}=1$.
The energy spectrum with $N_x=15$ is depicted in (b) of \figref{fig:special case V}. The schematic illustration of \figref{fig:band scheme} is affirmed: The edge bands follow the trajectories of $\pm h_z(k_y)$ with the multiplicity given by $\abs{w(k_y)}$. They are purely $a$-type or $b$-type.

This provides an example of a special case that possesses three kinds (upper-to-upper, lower-to-lower, and upper-to-lower) of left edge modes and serves as a good testing ground for adding edge perturbation. In the following two subsections, we will impose edge perturbation upon this case to demonstrate what has been discussed in \secref{sec:uniform edge perturbation}.


\begin{figure}

\centering
    \includegraphics[width=0.95\textwidth]{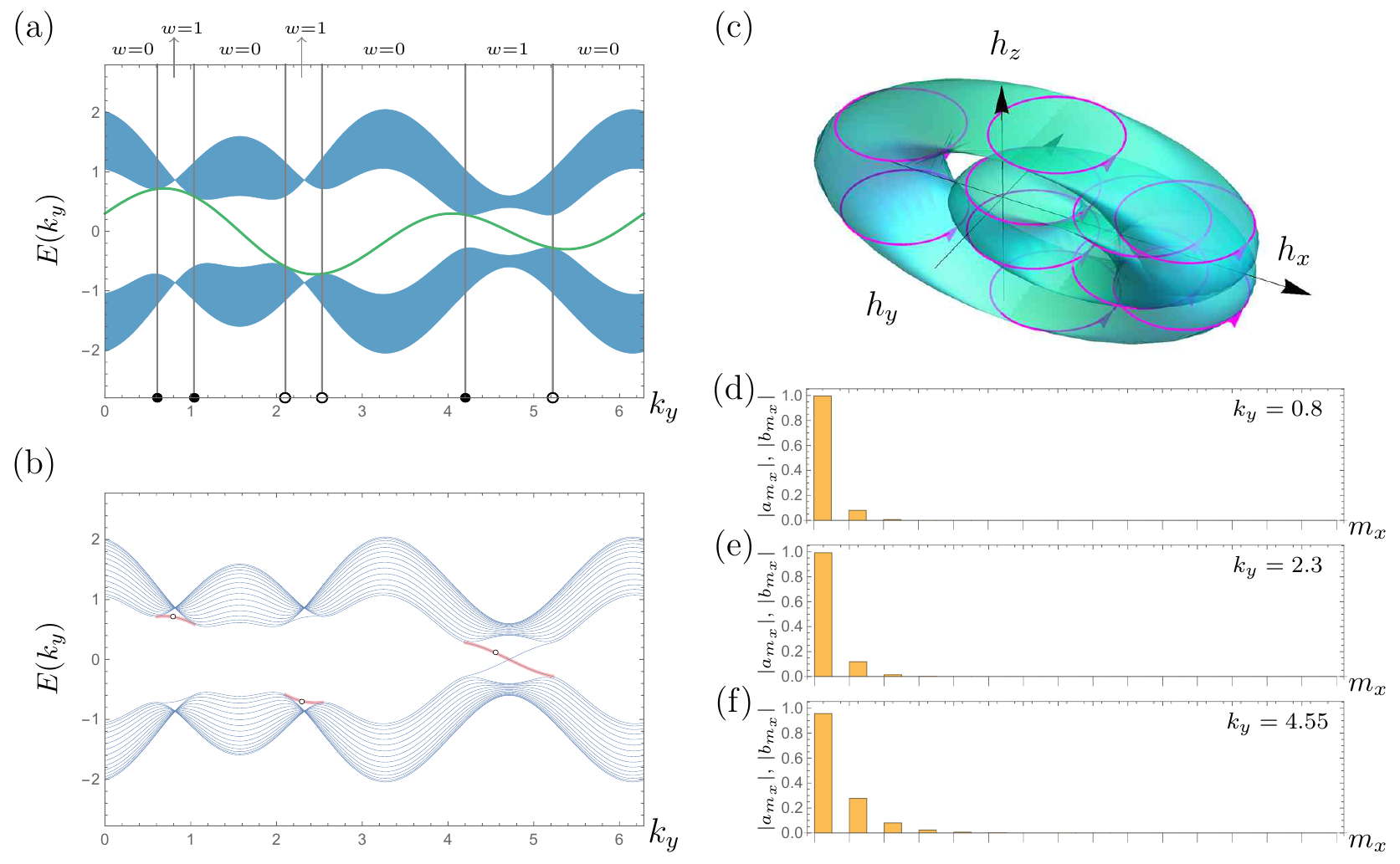}

\caption{\textbf{Special case V} with \eqnref{special case V}. (a): The anticipated regions of the bulk bands and the trajectory of $h_z(k_y)$. The positions of the north and south poles is indicated by solid and hollow dots, respectively. The winding number $w(k_y)$ for different portions of $k_y$ are indicated on the top. (b): The energy spectrum of a strip with $N_x=15$. The edge bands appear within the bulk gap, following the trajectories of $\pm h_z(k_y)$. The left edge bands are highlighted by thick curves. (c): $\mathbf{h}: \mathbf{k}\mapsto \mathbf{h}(\mathbf{k})$ illustrated as a torus embedded in the $\mathbf{h}$ space. (d)--(f): The edge mode wavefunctions for the three dotted points as indicated in (b). They are localized at the left edge and purely $a$-type.}\label{fig:special case V}
\end{figure}

\subsection{Edge perturbation I}\label{sec:edge perturbation I}
In the above examples so far, edge bands all appear within the bulk gap. We have not seen edge bands appear above the upper bulk band cluster, below the lower bulk band cluster, or inside the bulk band clusters (recall \footref{foot:only schematic}). Nor have we seen two left (right) edge bands appear simultaneously as in the configurations of (g) and (h) in \figref{fig:deformations2}. It turns out these possibilities can be produced by adding uniform edge perturbation, which is now studied in the present and next subsections.

For simplicity, we impose edge perturbation only on the left edge region upon the special case given by \eqnref{special case V}.
First, we consider the case that, in \eqnref{eqs''}, $\zeta^0_{m_x,m'_x;n_y}$ and $\zeta^z_{m_x,m'_x;n_y}$ all vanish but some of $\zeta^x_{m_x,m'_x;n_y}$ and $\zeta^y_{m_x,m'_x;n_y}$ are nonzero. Particularly, we study two sets of edge perturbation parameters given respectively by
\begin{subequations}\label{edge perturbation Ia}
\begin{eqnarray}
&&\delta\omega_{1,2}(k_y) = 0.8 \sin{k_y}+0.3\sin{3k_y}+0.8\sin{4k_y} \\
&&\delta\omega_{2,1}(k_y) = 0.8 \sin{k_y}+0.3\sin{3k_y}+0.8\sin{4k_y} ,\\
&&\delta\omega_{m,m'}(k_y)=0\quad \text{otherwise}, \\
&&\delta h^{\pm}_{m,m'}(k_y)  = 0,
\end{eqnarray}
\end{subequations}
and
\begin{subequations}\label{edge perturbation Ib}
\begin{eqnarray}
&&\text{same as \eqnref{edge perturbation Ia} except}\\
&&\delta\omega_{1,1}(k_y) = -0.1,
\end{eqnarray}
\end{subequations}
where $\delta\omega_{m,n}(k_y)$ and $\delta h^\pm_{m,n}(k_y)$ are defined as
\begin{subequations}\label{delta omega and h}
\begin{eqnarray}
\delta\omega_{m,m'}(k_y) &:=& \sum_{n_y=-\bar{n}_y}^{\bar{n}_y} e^{in_yk_y}
\left(\zeta^x_{m,m';n_y}-i\zeta^y_{m,m';n_y}\right), \\
\delta h^\pm_{m,m'}(k_y) &:=& \sum_{n_y=-\bar{n}_y}^{\bar{n}_y} e^{in_yk_y}
\left(\zeta^0_{m,m';n_y}\pm\zeta^z_{m,m';n_y}\right)
\end{eqnarray}
\end{subequations}
in relation to \eqnref{omega and h}.

The results of numerical computation for a strip with $N_x=15$ are presented in \figref{fig:edge perturbation I}: the left column for \eqnref{edge perturbation Ia} and the right column for \eqnref{edge perturbation Ib}. Comparing (a) and (b) of \figref{fig:edge perturbation I} with (b) of \figref{fig:special case V}, we see that the loci of the edge bands following the trajectories of $\pm h_z(k_y)$ remain unchanged, but meanwhile additional left edge bands induced by the edge perturbation also appear in various places --- within the bulk gap, above the upper bulk band cluster, below the lower bulk band cluster, and inside the bulk band clusters.
As predicted in \thmref{theorem 3}, both (a) and (b) are exactly symmetric under $E\rightarrow-E$.
The spectra of (a) and (b) are qualitatively similar to each other except that the deformation of $\text{(h)}\rightarrow \text{(g)}$ as depicted in \figref{fig:deformations2} takes place around $k_y=3\pi/2\approx4.71$ from (a) to (b) in \figref{fig:edge perturbation I}.

We depict edge-mode wavefunctions for a few various points in the spectra. In (c)--(h) of \figref{fig:edge perturbation I}, the wavefunctions of edge modes following the trajectories of $\pm h_z(k_y)$ are shown for the three opposite-energy pairs of dotted points as indicated in (a). They remain purely $a$-type or $b$-type. The left-edge-mode wavesfunctions of (c), (e), and (g) are different from the unperturbed counterparts (d)--(f) in \figref{fig:special case V}.
On the other hand, the right-edge-mode wavesfunctions of (d), (f), and (h) are unaffected by the left edge perturbation (except through minuscule finite size effect) and thus remain exactly dual to (d)--(f) in \figref{fig:special case V} via \eqnref{tilde Psi ky}.

In (i)--(n) of \figref{fig:edge perturbation I}, the wavefunctions of edge-perturbation-induced edge modes are shown for the three opposite-energy pairs of dotted points as indicated in (b). They are all localized at the left edge. In general, the edge-perturbation-induced edge modes are not purely $a$-type or $b$-type as shown in (k)--(n), except at the points where the edge band accidently intercepts the trajectories of $\pm h_z(k_y)$ such as shown in (i) and (j).
Furthermore, in (m) and (n), in addition to a peak localized at the left edge, the wave function also exhibits an almost periodic part over the bulk. This is because (m) and (n) are in the edge bands appearing inside the bulk band clusters and therefore the edge states are degenerate and mixed with bulk states.


\begin{figure}

\centering
    \includegraphics[width=0.95\textwidth]{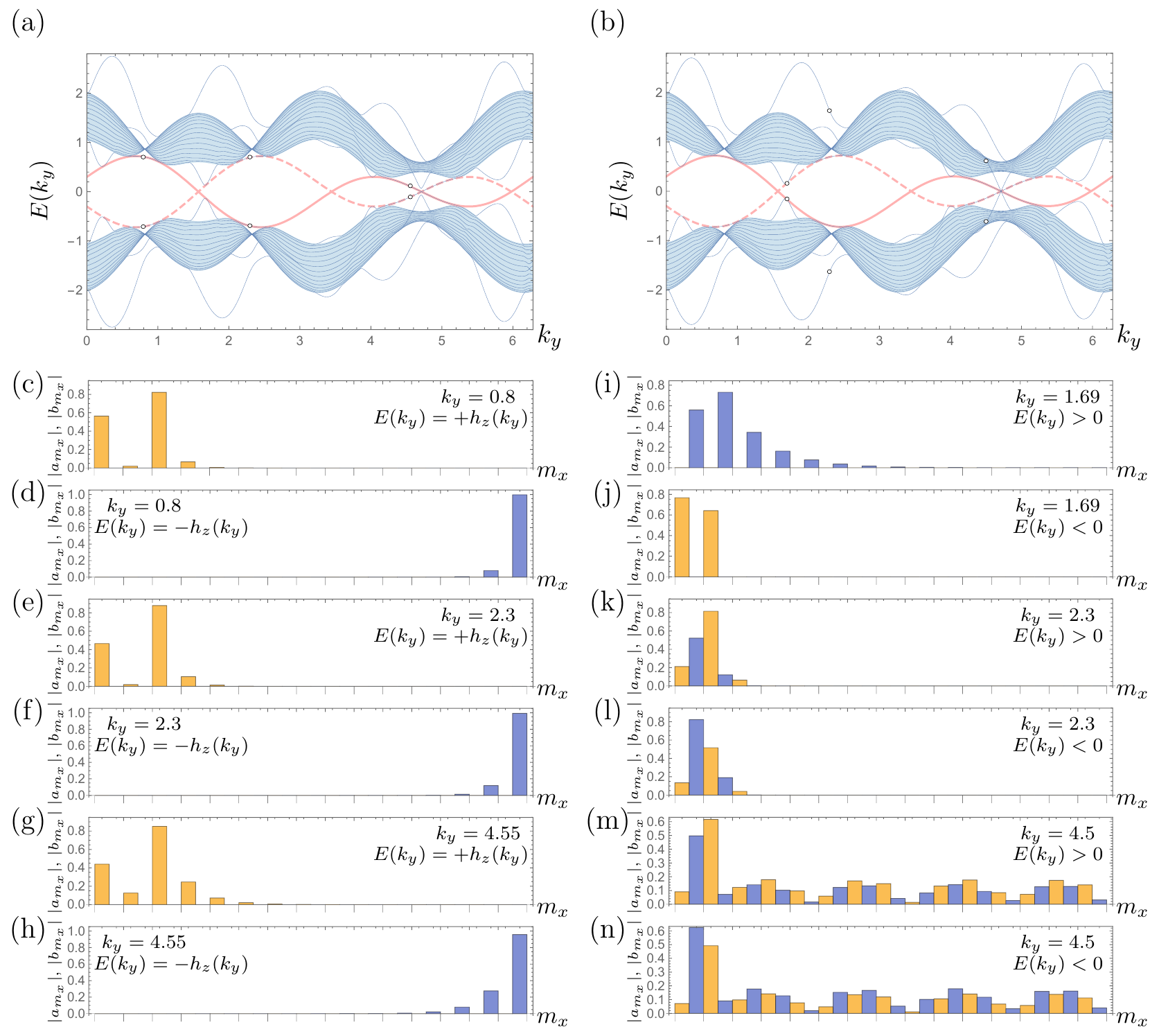}

\caption{\textbf{Edge perturbation I} with \eqnref{edge perturbation Ia} [left column] and with \eqnref{edge perturbation Ib} [right column] upon Special case V. (a) and (b): The energy spectra of a strip with $N_x=15$. The trajectories of $\pm h_z(k_y)$ are shown for reference as the thick-solid line and the thick-dashed line, respectively. (c)--(h): The wavefunctions of edge modes following the trajectories of $\pm h_z(k_y)$ for the three pairs of dotted points as indicated in (a). (i)--(n): The wavefunctions of edge-perturbation-induced edge modes for the three pairs of dotted points as indicated in (b)}\label{fig:edge perturbation I}
\end{figure}

\subsection{Edge perturbation II}\label{sec:edge perturbation II}
Next, we consider the case that, in \eqnref{eqs''}, some of $\zeta^0_{m_x,m'_x;n_y}$ and $\zeta^z_{m_x,m'_x;n_y}$ are nonzero.
Particularly, we study the set of edge perturbation parameters given by
\begin{subequations}\label{edge perturbation II}
\begin{eqnarray}
\delta h^+_{1,1}(k_y) &=&\delta h^{+*}_{1,1}(k_y) = 0.32 \sin{k_y}+0.48\sin{3k_y}+0.32\sin{4k_y},\\
\delta h^-_{1,1}(k_y) &=&\delta h^{-*}_{1,1}(k_y) = -0.32 \sin{k_y}-0.48\sin{3k_y}-0.32\sin{4k_y},\\
\delta h^\pm_{m,m'}(k_y) &=& 0\quad \text{otherwise}, \\
\delta\omega_{m,m'}(k_y) &=& 0.
\end{eqnarray}
\end{subequations}

The result of numerical computation for a strip with $N_x=15$ is presented in \figref{fig:edge perturbation II}. The right edge bands, which follow the trajectory of $-h_z(k_y)$, remain unaltered, whereas the left edge bands are now deviated from the trajectory of $h_z(k_y)$ and a few additional left edge bands also arise. The spectrum shown in (a) is no longer symmetric under $E\rightarrow-E$.

We depict left-edge-mode wavefunctions for a few various points in the spectrum in (b)--(f). They are all mixed in $a_{m_x}$ and $b_{m_x}$, even at the points where the left edge band accidentally intercepts the trajectories of $\pm h_z(k_y)$ such as shown in (e).


\begin{figure}

\centering
    \includegraphics[width=0.95\textwidth]{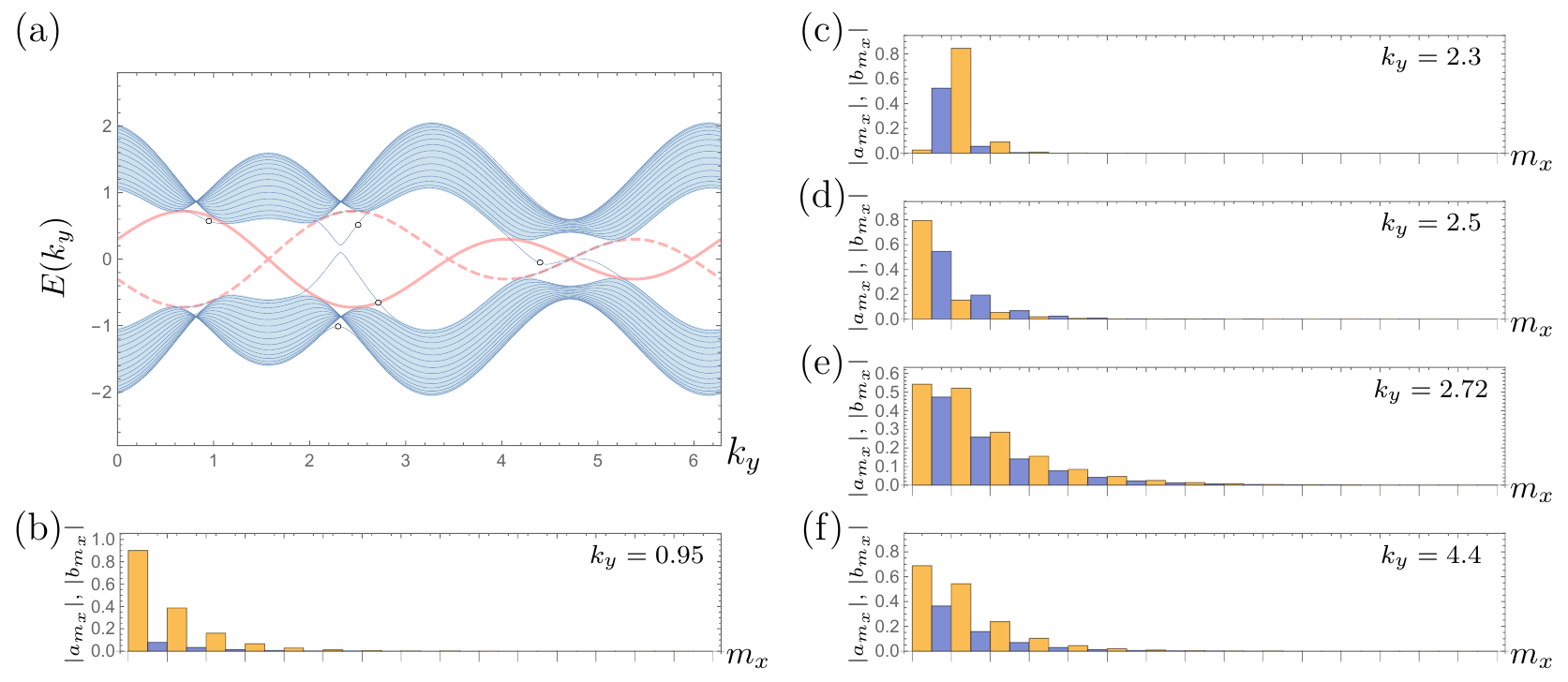}

\caption{\textbf{Edge perturbation II} with \eqnref{edge perturbation II} upon Special case V. (a): The energy spectrum of a strip with $N_x=15$. The trajectories of $\pm h_z(k_y)$ are shown as the thick-solid line and the thick-dashed line, respectively. (b)--(f): The left-edge-mode wavefunctions for the dotted points as indicated in (a).}\label{fig:edge perturbation II}
\end{figure}

\section{Summary and discussion}\label{sec:summary}
With the inclusion of arbitrary long-range hopping and (pseudo)spin-orbit coupling amplitudes, we construct a generic model for \emph{any} two-dimensional two-band Chern insulators as formulated in \eqnref{H(kx,ky)} with \eqnref{h(k)}. This provides a simple framework to investigate \emph{arbitrary} adiabatic deformations upon the systems of any \emph{arbitrary} Chern numbers. Without appealing to advanced techniques beyond the standard methods of solving linear difference equations and applying Cauchy's integral formula, we obtain a detailed description of the bulk-boundary correspondence on a strip, as stated in \thmref{theorem 1} and \thmref{theorem 2}, and a rigorous proof of it --- first for special cases (i.e., $h_z(\mathbf{k})=h_z(k_y)$), then with the inclusion of arbitrary uniform edge perturbation, and finally extended to generic cases.

We have proved the bulk-boundary correspondence only in the \emph{weak} form for a strip, but not yet the \emph{strong} form for a large sample in an arbitrary two-dimensional shape. The strong form posits that, on a large finite sample of a Chern insulator with a clean bulk but an arbitrary edge perimeter, the number of edge modes propagating along the perimeter (counterclockwise counted as positive, and clockwise as negative) is equal to the Chern number of the bulk. Once the weak form has been proved, the strong form can be implied by imagining a local edge portion of the large sample deformed into a straight strip and by the reasoning of unitarity. We refer readers to Section 6.3 of \cite{asboth2016short} for more details of the proof connecting the weak form to the strong form.

Our elementary approach not only is more transparent about the underlying physics of the bulk-boundary correspondence but also reveals various intriguing nontopological features of Chern insulators recapped in the following.
\begin{enumerate}[(i)]

\item It is rigorously shown in \secref{sec:eigenvalue problem} that while the bulk states depend on the width $N_x$ of the strip (recall \footref{foot:tomography}), the edge states are independent of $N_x$, except for negligible finite-size effect. This is an important feature of edge states often overlooked or taken for granted.

\item As long as the semi-special condition (i.e., the constant-$k_y$ loop lies on a plane in the $\mathbf{h}$ space) is satisfied (even only in a local open neighborhood of $k_y$), the trajectories of $\pm h_\bot(k_y)$ against $k_y$ give rise to edge bands with the multiplicity given by $\abs{w_\bot(k_y)}$, as discussed in \secref{sec:semi-special cases}.
    This observation is extremely useful, because it enables us to largely anticipate the loci of edge bands directly from $\mathbf{h}(\mathbf{k})$ without performing any full-fledged numerical computation for the energy spectrum, and therefore we can design at will a model with various desired features as demonstrated in many examples in \secref{sec:examples}.

\item Consequently, we also obtain the condition for having degenerate edge bands (i.e., multiple edge bands following the same trajectory in an interval of $k_y$): the map $\mathbf{h}(\mathbf{k})$ satisfies the semi-special condition in the interval and $\abs{w_\bot(k_y)}>1$.

\item We obtain a precise description of ``spin-momentum locking'' on a strip for semi-special cases: the (pseudo)spin is either parallel or antiparallel to the direction of $\hat{\mathbf{n}}(k_y)$ (i.e., the normal unit vector of the plane where the constant-$k_y$ loop lies) in edge states that follow the trajectories of $\pm h_\bot(k_y)$. However, it should be remarked that, contrary to popular opinion, the spin-momentum locking is not a topological feature in the strict sense, as it makes sense and is robust only under deformations within the confines of the semi-special condition.

\item Not only the bulk-boundary correspondence is shown to be robust against arbitrary uniform edge perturbation, but a finer differentiation between different kinds of edge perturbation is also revealed. The generic form of uniform edge perturbation imposed upon a special case is described in \eqnref{eqs''}. In case that $\zeta^0_{m_x,m'_x;n_y}$ and $\zeta^z_{m_x,m'_x;n_y}$ all vanish but some of $\zeta^x_{m_x,m'_x;n_y}$ and $\zeta^y_{m_x,m'_x;n_y}$ acting on the left (right) edge are nonzero, the left (right) edge bands following the trajectories of $\pm h_z(k_y)$ do not change their loci and remain purely $a$-type or $b$-type, although the corresponding wavefunctions are altered.
    On the other hand, in case that some of $\zeta^0_{m_x,m'_x;n_y}$ and $\zeta^z_{m_x,m'_x;n_y}$ acting on the left (right) edge are nonzero, the left (right) edge bands are deviated from the trajectories of $\pm h(k_y)$ and no longer purely $a$-type or $b$-type. The edge bands following the trajectories of $\pm h(k_y)$ are robust against $\zeta^x_{m_x,m'_x;n_y}$ and $\zeta^y_{m_x,m'_x;n_y}$, but sensitive to $\zeta^0_{m_x,m'_x;n_y}$ and $\zeta^z_{m_x,m'_x;n_y}$.
    Whether $\zeta^0_{m_x,m'_x;n_y}$ and $\zeta^z_{m_x,m'_x;n_y}$ are zero or not, the inclusion of edge perturbation in general also gives rise to more edge bands, which in general are not purely $a$-type or $b$-type.

\item If $h_0=0$, the strip Hamiltonian $\hat{H}_{N_x}(k_y)$ exhibits the $h_0=0$ symmetry as elaborated in \secref{sec:h0=0 symmetry'}. The $h_0=0$ symmetry relates an energy eigenstate $\ket{\Psi}$ of $E=E_0$ to a counterpart energy eigenstate $\ket{\tilde{\Psi}}$ of $E=-E_0$ via \eqnref{tilde Psi ky}, which can be understood as inherited from \eqnref{tilde a b} for $\hat{H}_\mathrm{bulk}$.
    Particularly, the $h_0=0$ symmetry associates a left (right) edge mode with a right (left) edge mode of the opposite energy.
    When edge perturbation is introduced, however, the $h_0=0$ symmetry is broken.
    Nevertheless, if edge perturbation is imposed upon a special case as formulated in \eqnref{eqs''} in a particular way that all of $\zeta^0_{m_x,m'_x;n_y}$ and $\zeta^z_{m_x,m'_x;n_y}$ are zero and only some of $\zeta^x_{m_x,m'_x;n_y}$ and $\zeta^y_{m_x,m'_x;n_y}$ are nonzero, the energy eigenvalues of the strip still appear in pairs with opposite signs as predicted in \thmref{theorem 3}. In the absence of edge perturbation, this pairing symmetry is identical to the $h_0=0$ symmetry; in the presence of edge perturbation, however, this pairing symmetry should not be confused with the $h_0=0$ symmetry.
    For the edge modes following the trajectories of $\pm h_z(k_y)$, the symmetry of \thmref{theorem 3} associates a \emph{left} edge mode of $E=\pm h_z$ with a \emph{right} edge mode of $E=\mp h_z$, but the corresponding wavefunctions are no longer related with each other via \eqnref{tilde Psi ky}.
    For edge-perturbation-induced edge modes, the symmetry of \thmref{theorem 3} associates a \emph{left} (right) edge mode of $E=E_0$ with a different \emph{left} (right) edge mode of $E=-E_0$.

\end{enumerate}

Finally, it is noteworthy that our proof of the bulk-boundary correspondence for the two-dimensional two-band model is essentially based on a dimension-reduction scheme that recasts the Chern number in terms of winding numbers as given in \eqnref{Chern number 2} and consequently enables us to employ the same techniques devised for the one-dimensional generalized SSH model in our previous work \cite{chen2017elementary}. This suggests that, via a proper dimension-reduction procedure, our elementary approach to the problem of bulk-boundary correspondence might be applicable to other topological systems with richer structure or in higher dimensions.

\begin{acknowledgments}
BHC was supported in part by the Ministry of Science and Technology, Taiwan under the Grant MOST 104-2112-M-003-003-MY3; DWC was supported under MOST 106-2112-M-110-010 and MOST 107-2112-M-110-003.
\end{acknowledgments}

\appendix

\section{Eqs.~\eqnref{eqs'} and \eqnref{eqs''} in a matrix form}\label{appendix}
Solving the coupled difference equation \eqnref{eqs'} and its modified version with uniform edge perturbation as given in \eqnref{eqs''} can be viewed as solving the eigenvalue problem of a $2N_x\times2N_x$ matrix corresponding to $\hat{H}_{N_x}(k_y)$.
For convenience, define
\begin{subequations}\label{omega and h}
\begin{eqnarray}
\omega_{m,m'}(k_y) &:=& \sum_{n_y=-\bar{n}_y}^{\bar{n}_y} e^{in_yk_y}\left(
\left(w^x_{m'-m,n_y}-iw^y_{m'-m,n_y}\right)
+\left(\zeta^x_{m,m';n_y}-i\zeta^y_{m,m';n_y}\right)\right), \\
h^\pm_{m,m'}(k_y) &:=& \pm h_z(k_y)\delta_{m,m'}+\sum_{n_y=-\bar{n}_y}^{\bar{n}_y} e^{in_yk_y}
\left(\zeta^0_{m,m';n_y}\pm\zeta^z_{m,m';n_y}\right),
\end{eqnarray}
\end{subequations}
where in fact $w_{m'-m,n_y}\neq0$ only if $\abs{m'-m}\leq\bar{n}_x$ (because $\bar{n}_x$ gives the upper bound for the distance of the long-range interaction), $\zeta^a_{m,m';n_y}\neq0$ only if $m,m'\leq\bar{n}_x$ or $m,m'\geq N_x-\bar{n}_x+1$ (because $\zeta^a_{m,m';n_y}$ are edge perturbation parameters), and consequently $\omega_{m,m'},h^\pm_{m,m'}\neq0$ only if $\abs{m-m'}\leq\bar{n}_x$.
By \eqnref{h(k)b} and \eqnref{zeta}, it follows
\begin{subequations}\label{omega and h 2}
\begin{eqnarray}
\omega^*_{m',m}(k_y) &:=& \sum_{n_y=-\bar{n}_y}^{\bar{n}_y} e^{in_yk_y}\left(
\left(w^x_{m'-m,n_y}+iw^y_{m'-m,n_y}\right)
+\left(\zeta^x_{m,m';n_y}+i\zeta^y_{m,m';n_y}\right)\right), \\
h^{\pm*}_{m',m}(k_y) &:=& \pm h_z(k_y)\delta_{m,m'}+\sum_{n_y=-\bar{n}_y}^{\bar{n}_y} e^{in_yk_y}
\left(\zeta^0_{m,m';n_y}\pm\zeta^z_{m,m';n_y}\right) = h^\pm_{m,m'}.
\end{eqnarray}
\end{subequations}
In the basis $\{\ket{1\uparrow},\ket{1,\downarrow},\dots,\ket{N\uparrow},\ket{N\downarrow}\}$, the matrix of $\hat{H}_N(k_y)$ takes the form
\begin{equation}
\left(
  \begin{array}{cccccc}
    \begin{array}{cc}
      h^+_{1,1} & \omega_{1,1} \\
      \omega^*_{1,1} & h^-_{1,1}
    \end{array}
    &
    \begin{array}{cc}
      h^+_{1,2} & \omega_{1,2} \\
      \omega^*_{2,1} & h^-_{1,2}
    \end{array} & \cdots &  & \cdots
    &
    \begin{array}{cc}
     h^+_{1,N} & \omega_{1,N} \\
     \omega^*_{N,1} & h^-_{1,N}
    \end{array}\\
    \begin{array}{cc}
      h^+_{2,1} & \omega_{2,1} \\
      \omega^*_{1,2} & h^-_{2,1}
    \end{array}
    &
    \begin{array}{cc}
     h^+_{2,2} & \omega_{2,2} \\
     \omega^*_{2,2} & h^-_{2,2}
    \end{array}
    &  &  & & \vdots\\
    \vdots &  & \ddots & & & \\
     &  &  &  & & \vdots\\
    \vdots &  &  &
    &
    \begin{array}{cc}
       h^+_{N-1,N-1} & \omega_{N-1,N-1} \\
       \omega^*_{N-1,N-1} & h^-_{N-1,N-1}
    \end{array}
    &
    \begin{array}{cc}
       h^+_{N-1,N} & \omega_{N-1,N} \\
       \omega^*_{N,N-1} & h^-_{N-1,N}
    \end{array}\\
    \begin{array}{cc}
       h^+_{N,1} & \omega_{N,1} \\
       \omega^*_{1,N} & h^-_{N,1}
    \end{array}
    & \cdots &  & \cdots
    &
    \begin{array}{cc}
       h^+_{N,N-1} & \omega_{N,N-1} \\
       \omega^*_{N-1,N} & h^-_{N,N-1}
    \end{array}
    &
    \begin{array}{cc}
    h^+_{N,N} & \omega_{N,N} \\
    \omega^*_{N,N} & h^-_{N,N}
    \end{array}
  \end{array}
\right)
\end{equation}
Shuffling the basis order into $\{\ket{1\uparrow},\dots,\ket{N\uparrow},\ket{1,\downarrow},\dots,\ket{N\downarrow}\}$, we can represent the matrix of $\hat{H}_N(k_y)$ in a more succinct block form
\begin{equation}
\left(
\begin{array}{cc}
  H^+ & \Omega \\
  \Omega^\dag & H^-
\end{array}
\right),
\end{equation}
where $\Omega$ and $H^\pm$ are $N\times N$ matrices with $\Omega_{ij}=\omega_{ij}$ and $H^\pm_{i,j}=h^\pm_{i,j}$.

In case all $\zeta^0_{m_x,m'_x;n_y}$ and $\zeta^z_{m_x,m'_x;n_y}$ are zero, the matrix of $\hat{H}_N(k_y)$ takes a special form
\begin{equation}\label{special matrix}
\left(
\begin{array}{cc}
  h_z\mathbbm{1}_{N\times N} & \Omega \\
  \Omega^\dag & -h_z\mathbbm{1}_{N\times N}
\end{array}
\right),
\end{equation}
which exhibits a particular symmetry as addressed in the following theorem.
\begin{theorem}\label{theorem 3}
If a matrix $H_N$ takes the form of \eqnref{special matrix}, $\det(H_N-\lambda\mathbbm{1}_{2N\times 2N})$ is invariant under $\lambda\rightarrow-\lambda$. In other words, the eigenvalues of $\hat{H}_N$ always appear in pairs with opposite signs if $\zeta^0_{m_x,m'_x;n_y}=0$ and $\zeta^z_{m_x,m'_x;n_y}=0$.
\end{theorem}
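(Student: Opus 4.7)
The plan is to evaluate $\det(H_N - \lambda \mathbbm{1}_{2N\times 2N})$ in closed form by exploiting the block structure of \eqnref{special matrix}, and then to observe that the resulting expression depends on $\lambda$ only through $\lambda^2$. Concretely, I would write
\begin{equation*}
H_N - \lambda \mathbbm{1}_{2N\times 2N} = \left(\begin{array}{cc} (h_z - \lambda)\,\mathbbm{1}_{N\times N} & \Omega \\ \Omega^\dagger & -(h_z + \lambda)\,\mathbbm{1}_{N\times N} \end{array}\right)
\end{equation*}
and note that the lower-right block $D := -(h_z + \lambda)\,\mathbbm{1}_{N\times N}$ is a scalar multiple of the identity, so it commutes with every other block, in particular with $C := \Omega^\dagger$. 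Under this commutativity assumption the standard block-determinant identity $\det\left(\begin{smallmatrix} A & B \\ C & D \end{smallmatrix}\right) = \det(AD - BC)$ applies, and it yields
\begin{equation*}
\det(H_N - \lambda \mathbbm{1}_{2N\times 2N}) = \det\bigl( -(h_z^2 - \lambda^2)\,\mathbbm{1}_{N\times N} - \Omega \Omega^\dagger \bigr) = (-1)^N \det\bigl( (h_z^2 - \lambda^2)\,\mathbbm{1}_{N\times N} + \Omega \Omega^\dagger \bigr).
\end{equation*}
The right-hand side depends on $\lambda$ only through $\lambda^2$, hence is manifestly invariant under $\lambda\to -\lambda$, which is the content of the theorem.

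To justify the block-determinant identity I would proceed as follows. For $\lambda\neq -h_z$ the block $D$ is invertible, and the standard Schur-complement factorization gives $\det(H_N - \lambda\mathbbm{1}_{2N\times 2N}) = \det(D)\,\det(A - B D^{-1} C)$; since $D$ is a scalar, $\det(D)\det(A - BD^{-1}C) = \det\bigl((A - BD^{-1}C)D\bigr) = \det(AD - BD^{-1}CD) = \det(AD - BC)$, where the last step uses $D^{-1}CD = C$. The isolated remaining value $\lambda = -h_z$ is then covered by continuity, since both sides of the identity $\det(H_N - \lambda \mathbbm{1}_{2N\times 2N}) = (-1)^N \det\bigl((h_z^2-\lambda^2)\mathbbm{1}_{N\times N} + \Omega\Omega^\dagger\bigr)$ are polynomials in $\lambda$.

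The ``hard part'' is minor and essentially bookkeeping: one must be careful that the block formula applied requires commutativity of the lower-left and lower-right blocks (so that the transposition $D^{-1}CD = C$ is valid), and this commutativity hinges entirely on $D = -(h_z+\lambda)\,\mathbbm{1}_{N\times N}$ being scalar. It is precisely this scalar structure of the lower-right block of $H_N$ that the hypothesis $\zeta^0_{m_x,m'_x;n_y} = \zeta^z_{m_x,m'_x;n_y} = 0$ guarantees, and which would be destroyed the moment those parameters were switched on (replacing $\pm h_z\,\mathbbm{1}_{N\times N}$ by genuine matrices $H^\pm$). Thus the proof not only establishes \thmref{theorem 3} but also makes transparent why the pairing symmetry breaks as soon as $\zeta^0$ or $\zeta^z$ edge perturbations are turned on, in agreement with the discussion of \secref{sec:uniform edge perturbation}.
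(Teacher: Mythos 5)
Your proof is correct and takes essentially the same route as the paper: both rest on the Schur-complement factorization $\det(H_N-\lambda\mathbbm{1}_{2N\times 2N})=\det(D)\det(A-BD^{-1}C)$, exploiting that the diagonal blocks are scalar multiples of $\mathbbm{1}_{N\times N}$ so that the result depends on $\lambda$ only through $(h_z-\lambda)(-h_z-\lambda)=\lambda^2-h_z^2$. Your repackaging as $\det(AD-BC)$ and the continuity argument covering the isolated value $\lambda=-h_z$ are minor refinements of the same argument, the latter being a point the paper leaves implicit.
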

\begin{proof}
By Schur's determinant identity, we have
\begin{eqnarray}
&&\det(H_N-\lambda\mathbbm{1}_{2N\times 2N})
\equiv
\left|
  \begin{array}{cc}
  A:=(h_z-\lambda)\mathbbm{1}_{N\times N} & B:=\Omega \\
  C:=\Omega^\dag & D:=(-h_z-\lambda)\mathbbm{1}_{N\times N}
  \end{array}
\right| \nonumber\\
&=& \abs{D}\abs{A-BD^{-1}C}
=\Big|(-h_z-\lambda)\mathbbm{1}_{N\times N}\Big|\,
\abs{(h_z-\lambda)\mathbbm{1}_{N\times N}-\Omega\frac{\mathbbm{1}_{N\times N}}{(-h_z-\lambda)}\Omega^\dag} \nonumber\\
&=&(-h_z-\lambda)^N (h_z-\lambda)^N \abs{\mathbbm{1}_{N\times N}-\frac{\Omega\Omega^\dag}{(h_z-\lambda)(-h_z-\lambda)}},
\end{eqnarray}
which is obviously invariant under $\lambda\rightarrow-\lambda$.
\end{proof}

If there is no edge perturbation at all (i.e., all $\zeta^a_{m_x,m_x';n_y}$ are zero), the symmetry of \thmref{theorem 3} is nothing but the $h_0=0$ symmetry as discussed in \secref{sec:h0=0 symmetry'}, which relates an eigenstate $\ket{\Psi}$ of $\lambda=E$ to the counterpart $\ket{\tilde{\Psi}}$ of $\lambda=-E$ via \eqnref{tilde Psi ky}.
When edge perturbation is introduced, however, the $h_0=0$ symmetry is broken and the symmetry of \thmref{theorem 3} should not be confused with the $h_0=0$ symmetry. An eigenstate of $\lambda=E$ and its counterpart of $\lambda=-E$ is no longer related via the simple relation \eqnref{tilde Psi ky}, but via a complicated relation depending on $\omega_{m,m'}$.


\bibliography{mybib}

\begin{thebibliography}{10}

\bibitem{chen2017elementary}
B.-H. Chen and D.-W. Chiou, ``An elementary rigorous proofof bulk-boundary
  correspondence in the generalized {Su-Schrieffer-Heeger} model,'' {\em
  Physics Letters A}, vol.~384, no.~7, p.~126168, 2020.

\bibitem{laughlin1981quantized}
R.~B. Laughlin, ``Quantized {Hall} conductivity in two dimensions,'' {\em
  Physical Review B}, vol.~23, no.~10, p.~5632, 1981.

\bibitem{hasan2010colloquium}
M.~Z. Hasan and C.~L. Kane, ``Colloquium: topological insulators,'' {\em
  Reviews of modern physics}, vol.~82, no.~4, p.~3045, 2010.

\bibitem{qi2011topological}
X.-L. Qi and S.-C. Zhang, ``Topological insulators and superconductors,'' {\em
  Reviews of Modern Physics}, vol.~83, no.~4, p.~1057, 2011.

\bibitem{asboth2016short}
J.~K. Asb{\'o}th, L.~Oroszl{\'a}ny, and A.~P{\'a}lyi, {\em A Short Course on
  Topological Insulators: Band Structure and Edge States in One and Two
  Dimensions}.
\newblock No.~919 in Lecture Notes in Physics, Springer, 2016.

\bibitem{hatsugai1993chern}
Y.~Hatsugai, ``Chern number and edge states in the integer quantum {Hall}
  effect,'' {\em Physical review letters}, vol.~71, no.~22, p.~3697, 1993.

\bibitem{hatsugai2009bulk}
Y.~Hatsugai, ``Bulk-edge correspondence in graphene with/without magnetic
  field: Chiral symmetry, {Dirac} fermions and edge states,'' {\em Solid State
  Communications}, vol.~149, no.~27-28, pp.~1061--1067, 2009.

\bibitem{essin2011bulk}
A.~M. Essin and V.~Gurarie, ``Bulk-boundary correspondence of topological
  insulators from their respective {Green’s} functions,'' {\em Physical
  Review B}, vol.~84, no.~12, p.~125132, 2011.

\bibitem{graf2013bulk}
G.~M. Graf and M.~Porta, ``Bulk-edge correspondence for two-dimensional
  topological insulators,'' {\em Communications in Mathematical Physics},
  vol.~324, no.~3, pp.~851--895, 2013.

\bibitem{rudner2013anomalous}
M.~S. Rudner, N.~H. Lindner, E.~Berg, and M.~Levin, ``Anomalous edge states and
  the bulk-edge correspondence for periodically driven two-dimensional
  systems,'' {\em Physical Review X}, vol.~3, no.~3, p.~031005, 2013.

\bibitem{cano2014bulk}
J.~Cano, M.~Cheng, M.~Mulligan, C.~Nayak, E.~Plamadeala, and J.~Yard,
  ``Bulk-edge correspondence in (2+1)-dimensional abelian topological phases,''
  {\em Physical Review B}, vol.~89, no.~11, p.~115116, 2014.

\bibitem{kitaev2009periodic}
A.~Kitaev, ``Periodic table for topological insulators and superconductors,''
  in {\em AIP Conference Proceedings}, vol.~1134, pp.~22--30, AIP, 2009.

\bibitem{ryu2010topological}
S.~Ryu, A.~P. Schnyder, A.~Furusaki, and A.~W. Ludwig, ``Topological insulators
  and superconductors: tenfold way and dimensional hierarchy,'' {\em New
  Journal of Physics}, vol.~12, no.~6, p.~065010, 2010.

\bibitem{prodan2016bulk}
E.~Prodan and H.~Schulz-Baldes, {\em Bulk and Boundary Invariants for Complex
  Topological Insulators: From {K}-Theory to Physics}.
\newblock Springer, 2016.

\bibitem{pershoguba2012shockley}
S.~S. Pershoguba and V.~M. Yakovenko, ``Shockley model description of surface
  states in topological insulators,'' {\em Physical Review B}, vol.~86, no.~7,
  p.~075304, 2012.

\bibitem{rhim2017bulk}
J.-W. Rhim, J.~Behrends, and J.~H. Bardarson, ``Bulk-boundary correspondence
  from the intercellular {Zak} phase,'' {\em Physical Review B}, vol.~95,
  no.~3, p.~035421, 2017.

\bibitem{rhim2018unified}
J.-W. Rhim, J.~H. Bardarson, and R.-J. Slager, ``Unified bulk-boundary
  correspondence for band insulators,'' {\em Physical Review B}, vol.~97,
  no.~11, p.~115143, 2018.

\bibitem{su1979solitons}
W.~Su, J.~Schrieffer, and A.~J. Heeger, ``Solitons in polyacetylene,'' {\em
  Physical Review Letters}, vol.~42, no.~25, p.~1698, 1979.

\bibitem{rice1982elementary}
M.~Rice and E.~Mele, ``Elementary excitations of a linearly conjugated diatomic
  polymer,'' {\em Physical Review Letters}, vol.~49, no.~19, p.~1455, 1982.

\bibitem{haldane1988model}
F.~D.~M. Haldane, ``Model for a quantum hall effect without landau levels:
  Condensed-matter realization of the" parity anomaly",'' {\em Physical Review
  Letters}, vol.~61, no.~18, p.~2015, 1988.

\bibitem{qi2006topological}
X.-L. Qi, Y.-S. Wu, and S.-C. Zhang, ``Topological quantization of the spin
  hall effect in two-dimensional paramagnetic semiconductors,'' {\em Physical
  Review B}, vol.~74, no.~8, p.~085308, 2006.

\bibitem{PhysRevLett.102.065703}
M.~S. Rudner and L.~S. Levitov, ``Topological transition in a non-hermitian
  quantum walk,'' {\em Phys. Rev. Lett.}, vol.~102, p.~065703, Feb 2009.

\bibitem{PhysRevLett.116.133903}
T.~E. Lee, ``Anomalous edge state in a non-hermitian lattice,'' {\em Phys. Rev.
  Lett.}, vol.~116, p.~133903, Apr 2016.

\bibitem{PhysRevLett.118.040401}
D.~Leykam, K.~Y. Bliokh, C.~Huang, Y.~D. Chong, and F.~Nori, ``Edge modes,
  degeneracies, and topological numbers in non-hermitian systems,'' {\em Phys.
  Rev. Lett.}, vol.~118, p.~040401, Jan 2017.

\bibitem{xiong2018does}
Y.~Xiong, ``Why does bulk boundary correspondence fail in some non-hermitian
  topological models,'' {\em Journal of Physics Communications}, vol.~2, no.~3,
  p.~035043, 2018.

\bibitem{sun2013lecture}
K.~Sun, ``Lecture note on advanced condensed matter physics,'' 2013.
\newblock Ch.~5,
  \url{http://www-personal.umich.edu/~sunkai/teaching/Fall_2013/chapter5.pdf}.

\bibitem{rudin1976principles}
W.~Rudin, {\em Principles of Mathematical Analysis}.
\newblock New York: McGraw-Hill, 1976.

\end{thebibliography}
\bibliographystyle{ieeetr}

\end{document}